\title{
  The Proximal Surrogate Index:
  Long-Term Treatment Effects 
  under Unobserved Confounding
}
\author{
Ting-Chih Hung\thanks{National Taiwan University. Email: \href{mailto:r12323020@ntu.edu.tw}{r12323020@ntu.edu.tw}.}
\and 
Yu-Chang Chen\thanks{National Taiwan University. Email: \href{mailto:yuchangchen@ntu.edu.tw}{yuchangchen@ntu.edu.tw}.}
}
\date{\today}
\begin{document}
\maketitle

\begin{abstract}
  We study the identification and estimation of long-term treatment effects under unobserved confounding 
  by combining an experimental sample, where the long-term outcome is missing, 
  with an observational sample, where the treatment assignment is unobserved. 
  While standard surrogate index methods fail 
  when unobserved confounders exist,
  we establish novel identification results by leveraging proxy variables for the unobserved confounders.
  We further develop multiply robust estimation and inference procedures
  based on these results.
  Applying our method to the Job Corps program, 
  we demonstrate its ability to recover experimental benchmarks 
  even when unobserved confounders bias standard surrogate index estimates.
\end{abstract}

\section{Introduction}

\paragraph{Motivation}

Empirical researchers
are often interested in evaluating
the long-term effects
of interventions or treatments.
For example,
labor economists
are interested in evaluating
job training programs
on participants' long-term earnings and employment outcomes,
public economists
are interested in assessing
in-kind transfer programs
on recipients' long-term well-being,
and
health economists
study
medical treatments
on patients' survival and quality of life.

\paragraph{Challenges}

Although randomized controlled trials (RCTs)
are the gold standard
for causal inference,
they often face practical challenges
in measuring long-term outcomes.
These challenges include
limited funding,
high attrition rates,
and lengthy follow-up periods.
As a result,
researchers may only have access
to short-term outcomes
from experimental studies,
while long-term outcomes of interest
are only available
in observational datasets,
such as administrative records or surveys,
where the treatment assignment is unobserved.

\paragraph{Existing Approaches}

The challenge of identifying long-term effects 
from combined experimental and observational data has
been addressed in prior literature.
The most related work is \cite{athey2025surrogate},
who propose a \emph{surrogate index} approach.
Their method leverages short-term surrogate outcomes
to predict long-term outcomes across datasets, 
recovering the long-term treatment effect
under certain identification assumptions.
However,
these assumptions
require
the absence of unobserved confounders
that affect both the treatment assignment
and the long-term outcome,
or the surrogate outcomes
and the long-term outcome.
These assumptions may be violated in many practical settings.
For example, in job training program evaluations,
unobserved ability 
may affect both participation and long-term earnings; 
in healthcare studies, unobserved health consciousness 
may confound the relationship between short-term behaviors 
and long-term health.
These concerns
motivate our work.

\paragraph{Research Question}

In this paper,
we study the nonparametric identification
and estimation
of long-term treatment effects
under unobserved confounding,
by combining experimental and observational data,
allowing for 
unobserved confounding through the use of proxy variables.

\paragraph{Our Contributions}

This paper contributes to the literature in three ways.
First, we study long-term treatment effects in a two-sample surrogate setting 
where the experimental sample lacks the long-term outcome 
and the observational sample lacks treatment status, 
allowing for unobserved confounding.
We establish nonparametric identification by leveraging proxy variables 
and deriving two complementary bridge-based strategies: 
an outcome bridge that imputes the missing long-term outcome in the experimental sample, 
and a surrogate bridge that reweights the observational sample to the experimental target. 
Second, we combine these strategies to obtain a multiply robust identification formula 
that remains valid under four alternative sets of correctly specified nuisance components, 
and we characterize the corresponding efficient influence function (EIF)
and semiparametric local efficiency bound.
Third, we develop cross-fitted estimators based on
double/debiased machine learning (DML)
that accommodate flexible nuisance estimation, 
and we provide consistency, asymptotic normality, and efficiency results.
In an application to Job Corps (JC), our approach closely matches experimental benchmarks in designs 
where standard surrogate index methods are biased.

\paragraph{Related Literature}

Our paper is situated at the intersection of several strands of literature.
First, it is related to the literature on causal inference with surrogate outcomes
\citep{weir2006statistical, vanderweele2013surrogate, joffe2009related}.
These works explore the use of intermediate outcomes
as surrogates for long-term outcomes of interest.
Many criteria have been proposed
to ensure the validity of surrogates \citep{prentice1989surrogate,frangakis2002principal,lauritzen2004discussion}.
However, these criteria may produce ``surrogate paradoxes''
in the presence of unobserved confounding \citep{chen2007criteria}.
Our work addresses this limitation,
offering new strategies to use surrogates.

Second, this paper adds to the growing body of work on
combining different data sources
for causal inference \citep{colnet2024causal},
especially those that leverage short-term outcomes
to estimate long-term treatment effects.
Besides \cite{athey2025surrogate},
\cite{athey2025experimental},
\cite{chenSemiparametricEstimationLongterm2023},
\cite{ghassamiCombiningExperimentalObservational2022b},
and
\cite{imbensLongtermCausalInference2025a},
propose various methods
to combine experimental and observational samples
to estimate long-term effects,
assuming both samples
contain information on the treatment assignment,
which differs from our setting
where the treatment is unobserved
in the observational sample.

Finally, our work contributes to
the recent literature on proximal causal inference
\citep{miaoIdentifyingCausalEffects2018, tchetgentchetgenIntroductionProximalCausal2024, cuiSemiparametricProximalCausal2024},
which
has been applied
to various settings,
including
longitudinal studies
\citep{imbens2021controlling,deaner2023proxycontrolspaneldata,liu2024proximal,ying2023proximal,tchetgentchetgenIntroductionProximalCausal2024},
mediation analysis
\citep{dukes2023proximal,ghassami2025causal},
and optimal treatment regimes
\citep{bai2025proximal,gao2025multiple,shen2023optimal,qi2024proximal}.
But most existing works
focus on single-sample settings,
except for \cite{ghassamiCombiningExperimentalObservational2022b,imbensLongtermCausalInference2025a},
which also consider data combination settings but under different data structures and assumptions.
Our paper provides an extension
by adapting the proximal learning framework to a more challenging
data combination structure, 
where key variables are missing across samples.

\paragraph{Organization of the Paper}

The rest of this paper is organized as follows.
In \cref{sec:setup-and-notation},
we introduce the setup and notation.
\cref{sec:athey-surrogate-index}
reviews the surrogate index approach
proposed by \cite{athey2025surrogate}.
In \cref{sec:identification},
three identification results
are presented.
The estimation and inference procedures
are discussed in \cref{sec:estimation-and-inference}.
\cref{sec:real-data-application}
presents the empirical application.
Finally,
\cref{sec:conclusion} concludes.

\section{Setup and Notation} \label{sec:setup-and-notation}

We consider a setting
where we have access to
an experimental dataset ($\text{E}$) and 
an observational dataset ($\text{O}$),
with sample sizes $N_{\text{E}}$ and $N_{\text{O}}$ respectively. 
It is convenient to view the pooled dataset
as consisting of $N = N_{\text{E}} + N_{\text{O}}$ units,
drawn from a superpopulation that mixes both experimental
and observational distributions,
where each unit belongs to either 
the experimental sample ($G_i = \text{E}$) or 
the observational sample ($G_i = \text{O}$).
We denote the marginal probability of being in the experimental sample
as $\pi_0 \equiv P(G_i = \text{E})$, with $0 < \pi_0 < 1$.

For each unit $i$, there is a binary treatment of interest, 
$A_{i}\in\{0,1\}$, 
a scalar primary outcome, $Y_{i}$, 
a set of intermediate outcomes, $S_{i}$ 
(which we refer to as surrogates), 
and a set of pre-treatment covariates, $X_{i}$. 
The data structure is characterized 
by differential availability across samples. 
In the experimental sample ($\text{E}$), 
the treatment $A_{i}$ is randomly assigned,
and we observe $(A_i, S_i, X_i)$. 
Our goal is to recover the causal effect of $A_{i}$ on $Y_{i}$.
However, the primary long-term outcome $Y_{i}$ 
is not recorded in this sample. 
To provide information on the long-term outcome,
the observational dataset ($\text{O}$) is collected, 
which contains measurements of $(Y_i, S_i, X_i)$. 
Nevertheless,
the treatment status $A_{i}$ is unobserved in the observational sample.

A key challenge in this surrogate setting,
which we will address in \cref{sec:athey-surrogate-index},
arises when the identification assumptions 
required for the existing surrogate index approach
are violated due to unobserved confounders $U_i$ \citep{athey2025surrogate}.
To overcome this, 
we draw insights from recent proximal inference literature
\citep{miaoIdentifyingCausalEffects2018, tchetgentchetgenIntroductionProximalCausal2024, cuiSemiparametricProximalCausal2024}.
We therefore assume that we have access to
two types of proxy variables:
a set of \emph{surrogate-aligned proxies} $Z_{i}$
and a set of \emph{outcome-aligned proxies} $W_{i}$.
While these proxies are not utilized in the 
standard surrogate index approach reviewed in \cref{sec:athey-surrogate-index},
they are essential to our proposed identification strategy 
in \cref{sec:identification}.
The list of variables and their availability across samples
is summarized in \cref{tab:variable-notation}.

\begin{table}[htbp]
  \centering
  \footnotesize
  \caption{Variable notation and availability across samples.}
  \label{tab:variable-notation}
  \begin{tabular}{cccc}
  \toprule
  Variable & Description & Experimental Sample & Observational Sample \\
  \midrule
  $A$ & Treatment indicator & $\checkmark$ & ? \\
  $Y$ & Primary outcome & ? & $\checkmark$ \\
  $S$ & Surrogate outcomes & $\checkmark$ & $\checkmark$ \\
  $X$ & Pre-treatment covariates & $\checkmark$ & $\checkmark$ \\
  $Z$ & Surrogate-aligned proxies & ? & $\checkmark$ \\
  $W$ & Outcome-aligned proxies & $\checkmark$ & $\checkmark$ \\
  $U$ & Unobserved confounders & ? & ? \\
  \bottomrule
  \end{tabular}
  \caption*{\textit{Note:} 
    The check marks ($\checkmark$)
    indicate that the variable is observed
    in the corresponding sample,
    while question marks (?)
    indicate that the variable is unobserved.
  }
\end{table}

To formally define our causal estimand and identification assumptions,
we adopt the potential outcomes framework
\citep{rubin1974estimating, holland1986statistics, imbens2015causal}.
For any variable $V_{i}$ that can be affected 
by the treatment $A_{i}$, 
we denote $V_{i}(a)$ as its potential value 
if $A_i$ were set to $a\in\{0,1\}$. 
Similarly, we denote $V_{i}(s)$ as the potential value 
if the surrogate were set to $s\in\mathcal{S}$, 
and $V_{i}(a,s)$ if both were set. 
We also use the nested potential outcomes notation $V_{i}(a,S_{i}(a'))$
introduced by
\cite{pearl2001direct} and \cite{robins1992identifiability}
to denote the potential value of $V_i$
when the treatment is set to $a$
and the surrogate is set to its potential value $S_i(a')$.

Throughout the paper,
we assume that all random variables
are independent and identically distributed (IID)
samples from a superpopulation,
although some variables are only observed
in one of the two samples.
We use $P(\cdot)$ and $\E[\cdot]$
to denote
the probability and expectation
with respect to this superpopulation,
and use $f(\cdot)$
to denote
the corresponding probability density/mass function
as appropriate.
We drop the subscript $i$
when there is no ambiguity.

\begin{assumption}[Random Sampling]
  \label{asn:random-sampling}
  We observe an IID sample
  \begin{align*}
    \mathcal{D}
    = \bc{D_i}_{i = 1}^N
    = \bc{\mathbf{1}_{\{G_i = \text{O}\}} Y_i, W_i, \mathbf{1}_{\{G_i = \text{O}\}} Z_i, S_i, \mathbf{1}_{\{G_i = \text{E}\}} A_i, X_i, G_i}_{i = 1}^N
  \end{align*}
  from the population
  described in \cref{sec:setup-and-notation},
  where $N = N_{\text{E}} + N_{\text{O}}$
  is the total sample size,
  and $\mathbf{1}_{\{G_i = g\}}$ indicates
  whether unit $i$
  belongs to sample $g \in \{\text{E}, \text{O}\}$.
  We denote $\mathcal{D}_{\text{E}}$
  and $\mathcal{D}_{\text{O}}$
  as the subsets of $\mathcal{D}$
  corresponding to the experimental
  and observational samples, respectively.
\end{assumption}

We further maintain the stable unit treatment value assumption (SUTVA)
\citep{rubin1980discussion}:
for any variable $V$ affected by $A$ and/or $S$,
\begin{align*}
  V = V(A) = V(S) = V(A, S) = V(A, S(A)).
\end{align*}
which links the observed variables
to the potential outcomes.

Our primary estimand of interest
is the average treatment effect (ATE)
on the primary outcome $Y$
in the experimental sample:
\begin{align}
  \label{eq:ate}
  \tau_0 = \E\bs{Y(a) - Y(a') \mid G = \text{E}},
\end{align}
where $a = 1$ and $a' = 0$.

\section{The Surrogate Index and Its Limitations} \label{sec:athey-surrogate-index}

In this section, 
we briefly review the surrogate index approach proposed by \cite{athey2025surrogate} 
and analyze its limitations
in the presence of unobserved confounding.
This discussion motivates our proximal approach.

\subsection{Identification via the Surrogate Index}

\cite{athey2025surrogate} rely on three key assumptions 
to identify the long-term ATE.
First, the standard unconfoundedness assumption
in the experimental sample
($A \ind (Y(a), S(a)) \mid X, G = \text{E}$)
is assumed to hold.
Second, the \emph{surrogacy} assumption ($A \ind Y \mid S, X, G = \text{E}$),
originally proposed by \cite{prentice1989surrogate},
posits that 
the treatment $A$
provides no additional information about the primary outcome $Y$
beyond what is already contained in the surrogates $S$
and pre-treatment variables $X$
in the experimental sample.
Third, the \emph{comparability} assumption
requires that the conditional outcome distribution 
is stable across samples, i.e., $G \perp Y \mid S, X$.

Under these assumptions, the mean potential outcome is identified by
\begin{align}
  \E\bs{Y(a) \mid G = \text{E}} =
  \E\bs{\E\bs{\mu(S, X, \text{O}) \mid A = a, X, G = \text{E}} \mid G = \text{E}},
  \label{eq:surrogate-index-identification-intro}
\end{align}
where $\mu(S, X, g) \equiv \E[Y \mid S, X, G = g]$
is called the \emph{surrogate index}.

\subsection{Limitations and Motivation}

While elegant, 
the surrogate index approach
relies heavily on the validity of
the surrogacy and comparability assumptions.
\cref{fig:surrogate-model-violations}
illustrates three common scenarios
where these assumptions can be violated.

The first scenario (\cref{fig:surrogate-model-treatment-outcome-confounder}) 
involves an unobserved confounder $U$
that affects both the treatment $A$
and the primary outcome $Y$.
Structurally, this resembles the front-door model \citep{pearl1995causal}.
However, in our context, conditioning on $S$ opens the collider path $G \to A \leftarrow U \to Y$, 
violating the comparability assumption.

The second scenario (\cref{fig:surrogate-model-surrogate-outcome-confounder}) 
involves an unobserved confounder $U$
that affects both the surrogates $S$
and the primary outcome $Y$,
resembling an instrumental variable (IV) model \citep{wrightTariffAnimalVegetable1928,imbensIdentificationEstimationLocal1994,imbensInstrumentalVariablesEconometricians2014}.
Here, 
since $S$ is a collider,
conditioning on $S$ opens the collider path $A \to S \leftarrow U \to Y$,
violating the surrogacy and potentially leading to surrogate paradoxes \citep{chen2007criteria,frangakis2002principal}.
In addition,
the sample indicator $G$
becomes associated with $Y$
after conditioning on $S$,
violating the comparability assumption.

The third scenario (\cref{fig:surrogate-model-direct-effect})
involves a direct effect of $A$ on $Y$ that bypasses $S$,
as in mediation analysis \citep{pearl2001direct, robins1992identifiability, vanderweele2015explanation}.
Since $A$ and $Y$ are never observed together, this effect is fundamentally unidentified.
Like the standard surrogate index,
our proximal approach must assume the absence of direct effects (\cref{asn:no-direct-effect}).

We summarize the key features of these three scenarios
in comparison to the surrogate index setting
in \cref{tab:comparison-of-scenarios}.

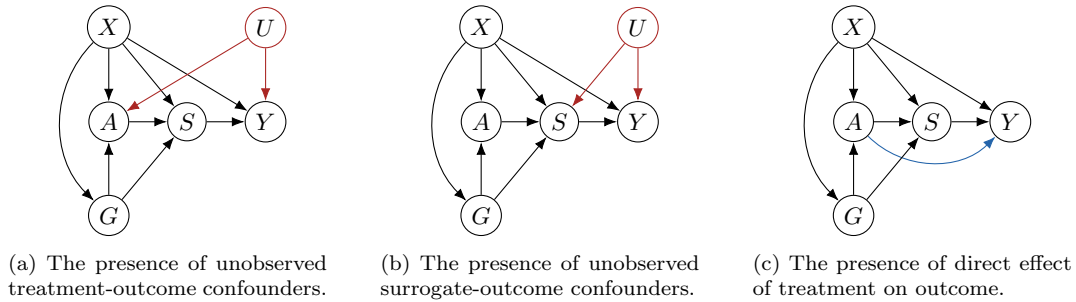
\begin{figure}[htbp]
  \begin{subfigure}[t]{0.3\textwidth}
    \centering
    \small
    \begin{tikzpicture}[scale=1.1]
      \node[name=a, cir]{$A$};
      \node[name=s, right=5mm of a, cir]{$S$};
      \node[name=y, right=5mm of s, cir]{$Y$};
      \node[name=g, below=7mm of a, cir]{$G$};
      \node[name=x, above=7mm of a, cir]{$X$};
      \node[name=u, above=7mm of y, cir, draw=darkred]{$U$};
      \draw[-Latex](a) to (s);
      \draw[-Latex](s) to (y);
      \draw[-Latex](g) to (a);
      \draw[-Latex](g) to (s);
      \draw[-Latex](x) to (a);
      \draw[-Latex](x) to (s);
      \draw[-Latex](x) to (y);
      \draw[-Latex, bend right=45](x) to (g);
      \draw[-Latex, darkred](u) to (a);
      \draw[-Latex, darkred](u) to (y);
    \end{tikzpicture}
    \caption{The presence of unobserved treatment-outcome confounders.}
    \label{fig:surrogate-model-treatment-outcome-confounder}
  \end{subfigure}
  \hfill
  \begin{subfigure}[t]{0.3\textwidth}
    \centering
    \small
    \begin{tikzpicture}[scale=1.1]
      \node[name=a, cir]{$A$};
      \node[name=s, right=5mm of a, cir]{$S$};
      \node[name=y, right=5mm of s, cir]{$Y$};
      \node[name=g, below=7mm of a, cir]{$G$};
      \node[name=x, above=7mm of a, cir]{$X$};
      \node[name=u, above=7mm of y, cir, draw=darkred]{$U$};
      \draw[-Latex](a) to (s);
      \draw[-Latex](s) to (y);
      \draw[-Latex](g) to (a);
      \draw[-Latex](g) to (s);
      \draw[-Latex](x) to (a);
      \draw[-Latex](x) to (s);
      \draw[-Latex](x) to (y);
      \draw[-Latex, bend right=45](x) to (g);
      \draw[-Latex, darkred](u) to (s);
      \draw[-Latex, darkred](u) to (y);
    \end{tikzpicture}
    \caption{The presence of unobserved surrogate-outcome confounders.}
    \label{fig:surrogate-model-surrogate-outcome-confounder}
  \end{subfigure}
  \hfill
  \begin{subfigure}[t]{0.3\textwidth}
    \centering
    \small
    \begin{tikzpicture}[scale=1.1]
      \node[name=a, cir]{$A$};
      \node[name=s, right=5mm of a, cir]{$S$};
      \node[name=y, right=5mm of s, cir]{$Y$};
      \node[name=g, below=7mm of a, cir]{$G$};
      \node[name=x, above=7mm of a, cir]{$X$};

      \draw[-Latex](a) to (s);
      \draw[-Latex, bend right=45, greenblue](a) to (y);
      \draw[-Latex](s) to (y);
      \draw[-Latex](g) to (a);
      \draw[-Latex](g) to (s);
      \draw[-Latex](x) to (a);
      \draw[-Latex](x) to (s);
      \draw[-Latex](x) to (y);
      \draw[-Latex, bend right=45](x) to (g);
    \end{tikzpicture}
    \caption{The presence of direct effect of treatment on outcome.}
    \label{fig:surrogate-model-direct-effect}
  \end{subfigure}
  \caption{Possible violations of the surrogacy or comparability assumptions.}
  \label{fig:surrogate-model-violations}
\end{figure}

These limitations point to a clear strategy inspired by the front-door criterion.
Recall that the front-door formula identifies the causal effect $A \to Y$ 
by composing the effect of $A$ the mediator $S$, and the effect of $S$ on $Y$.
In our setting, the first component ($A \to S$) is readily identified from the experimental sample due to randomization.
The challenge lies entirely in identifying the second component ($S \to Y$) from the observational sample, 
where the relationship is confounded by $U$.
While the standard front-door approach assumes no unobserved $S$-$Y$ confounding, 
we relax this by leveraging proxy variable ($W, Z$) to adjust for $U$.
Thus, our approach conceptually mirrors the front-door strategy 
but adapts it to the data combination setting: 
we combine the experimentally identified effect of $A$ on $S$ 
with the proximally identified effect of $S$ on $Y$ to recover the long-term ATE.

\begin{table}[htbp]
  \centering
  \footnotesize
  \caption{Surrogacy, front-door, IV, and mediation.}
  \label{tab:comparison-of-scenarios}
  \begin{tabular}{lcccc}
  \toprule
   & Surrogacy & Front-door & IV & Mediation \\
  \midrule
  Role of $A$ & Treatment & Treatment & Instrument & Treatment \\
  Role of $S$ & Surrogate & Mediator & Treatment & Mediator \\
  Role of $Y$ & Outcome & Outcome & Outcome & Outcome \\
  Parameter of Interest & $A \to Y$ & $A \to Y$ & $S \to Y$ & $A \to Y$ \\
                        & & & & $A \to S \to Y$ \\
  \midrule
  Feasible w/o joint $(A, Y)$ & Yes & No & No & No \\
  Unobs.\ $A$-$Y$ confounders & No & Yes & No & No \\
  Unobs.\ $S$-$Y$ confounders & No & No & Yes & No \\
  Direct effect of $A$ on $Y$ & No & No & No & Yes \\
  \bottomrule
  \end{tabular}
  \caption*{\textit{Note:} 
    We use ``surrogacy'' to refer to
    \cite{athey2025surrogate}'s surrogate index setting.
    $A \to Y$ and $S \to Y$ denote
    the causal effect of $A$ on $Y$
    and that of $S$ on $Y$ respectively.
    $A \to S \to Y$ denotes
    the indirect effect of $A$ on $Y$
    mediated through $S$.
  }
\end{table}

\section{Restoring Identification via Proxies} \label{sec:identification}

As discussed in \cref{sec:athey-surrogate-index}, 
identifying the long-term ATE requires addressing the
potential violations of surrogacy and comparability assumptions.
In this section, we present a proximal identification strategy that allows for unobserved confounders $U$.
To provide some intuition,
we adopt the single-world intervention graph (SWIG)
to visualize the causal structure
under a hypothetical intervention \citep{richardson2013single, heckman2015causal}.

We first maintain the following no direct effect assumption.
Similar assumptions
have been widely used
in the IV literature,
where it is often referred to as the exclusion restriction
\citep{imbensIdentificationEstimationLocal1994,imbensInstrumentalVariablesEconometricians2014}.
This assumption requires that $A$ has no direct causal link to $Y$ other than through $S$, 
corresponding to the absence of an edge $a \to Y(a,s)$ in \cref{fig:proximal-surrogate-model-swig}.

\begin{assumption}[No Direct Effect]
  \label{asn:no-direct-effect}
  \begin{align}
    Y(a, s) = Y(a', s) \quad \text{for all } a, a' \in \{0, 1\}, s \in \mathcal{S}.
  \end{align}
\end{assumption}

Next, we characterize the confounding structure.
While unobserved confounders $U$ may bias the relationships in the observational sample, 
we assume that $U$, 
together with
the observed covariates $X$, 
are the only sources of confounding.
In terms of \cref{fig:proximal-surrogate-model-swig}, 
this means that there are no other unmeasured common causes 
between $A$ and $Y(a, s)$,
and between $S$ and $Y(s)$,
besides $U$.

\begin{assumption}[Unconfoundedness in the Observational Sample]
  \label{asn:unconfoundedness-observational}
  For all $a \in \{0, 1\}$ and $s \in \mathcal{S}$,
  \begin{align}
    A \ind \bp{\bp{Y(a)}_{a \in \{0, 1\}}, \bp{S(a)}_{a \in \{0, 1\}}} &\mid (X, U, G = \text{O}), \label{eq:unconfoundedness-observational-a} \\
    S \ind \bp{Y(s)}_{s \in \mathcal{S}} &\mid (A, X, U, G = \text{O}), \label{eq:unconfoundedness-observational-s}
  \end{align}
  where $0 < P(A = 1 \mid X, U, G = \text{O}) < 1$ almost surely.
\end{assumption}

In addition,
we assume that
treatment assignment
is randomized
in the experimental sample,
as is standard in RCTs.
However, the surrogate $S$ is an intermediate outcome, not a randomized intervention.
Even within the experiment, individuals with the same treatment status 
may have different surrogate values due to unobserved factors $U$ (e.g., motivation or ability), 
which also affect the long-term outcome $Y$.
Similarly to \cref{asn:unconfoundedness-observational},
we assume that $U$ is the only source of unobserved confounding
in the experimental sample as well.
The following assumption formalizes these conditions.

\begin{assumption}[Unconfoundedness in the Experimental Sample]
  \label{asn:unconfoundedness-experimental}
  For all $a \in \{0, 1\}$ and $s \in \mathcal{S}$,
  \begin{align}
    A \ind \bp{\bp{Y(a)}_{a \in \{0, 1\}}, \bp{S(a)}_{a \in \{0, 1\}}, U, W} &\mid (X, G = \text{E}), \label{eq:unconfoundedness-experimental-a} \\
    S \ind \bp{Y(s)}_{s \in \mathcal{S}} &\mid (A, X, U, G = \text{E}), \label{eq:unconfoundedness-experimental-s}
  \end{align}
  where $0 < e_0(X) < 1$ almost surely,
  and $e_0(X) \equiv P(A = 1 \mid X, G = \text{E})$.
\end{assumption}

\begin{remark}
  The SWIG in
  \cref{fig:proximal-surrogate-model-swig},
  is a pooled representation. 
  The randomization of $A$
  in the experimental sample
  is captured
  by an directed edge from $G$ to $A$,
  indicating that the treatment assignment mechanism
  depends on the sample indicator $G$.
  Although the pooled graph depicts a generic dependence $U \to A$, 
  the experimental design explicitly breaks this dependence
  within the $G=\text{E}$ stratum.
\end{remark}

To address the unobserved confounding,
we need to leverage
proxy variables
that provide indirect information
about the unobserved confounders $U$.
We make the following assumption
regarding the availability of such proxies.

\begin{assumption}[Proxies Availability]
  \label{asn:availability-of-proxies}
  We observe $W$ in both samples and $Z$ only in the observational sample.
  Moreover,
  \begin{align}
    Z \ind Y &\mid (S, X, U, G), \\
    W \ind (A, S, Z) &\mid (X, U, G).
  \end{align}
\end{assumption}

\cref{fig:proximal-surrogate-model-swig} 
illustrates the roles of these proxies.
In this representation, $Z$ acts as a surrogate-aligned proxy.
While the figure depicts $Z$ as a post-treatment variable (e.g., a post-program survey), 
the formal requirement is broader:
$Z$ must be conditionally independent of $Y$ given $S$, $X$, and $U$,
regardless of its temporal or causal positioning.
Conversely, $W$ acts as an outcome-aligned proxy.
Graphically, it behaves like a pre-treatment covariate that is a noisy measure of $U$.
The key requirement is that $W$ is sufficiently informative about $U$ 
but is structurally independent of the $(A, S, Z)$ given covariates.

The proxies satisfying \cref{asn:availability-of-proxies}
are ubiquitous in practical applications.
For example,
in job training programs,
test scores collected before the training
can serve as outcome-aligned proxies $W$,
because they reflect
the participants' underlying abilities $U$,
but do not affect or are affected by
the training $A$ or the short-term outcomes $S$ directly.
Similarly,
post-training surveys about job search self-efficacy 
or soft skills can serve as surrogate-aligned proxies $Z$. 
These measures capture the participants' latent motivation $U$ 
and the immediate psychological impact of the training $A$, 
but arguably do not directly determine long-term earnings $Y$ 
once the actual intermediate employment history $S$ is accounted for.

To generalize the relationships learned in the observational sample 
to the experimental sample, 
we require the mechanism underlying the data generation to be comparable across groups.
In the proximal learning framework, 
this transportability must hold for 
both the primary outcome and the outcome-aligned proxy.

\begin{assumption}[Transportability]
  \label{asn:transportability}
  \begin{align}
    Y &\ind G \mid (S, X, U), \label{eq:transportability-outcome} \\
    W &\ind G \mid (X, U), \label{eq:transportability-proxy}
  \end{align}
  and
  \begin{align}
    \frac{P(G = \text{E} \mid U, X)}{P(G = \text{O} \mid U, X)} < \infty \quad \text{almost surely}.
  \end{align}
\end{assumption}

\cref{eq:transportability-outcome}
is analogous to
the comparability assumption
in \cite{athey2025surrogate},
but allows for unobserved confounders $U$.
It posits that given the surrogate 
and the complete set of confounders (both observed and unobserved),
the distribution of the primary outcome $Y$
is stable across samples.
\cref{eq:transportability-proxy}
is a new requirement specific to the proximal approach.
It ensures that the outcome-aligned proxy $W$ 
relates to the latent confounder $U$ in an invariant manner across the
experimental and observational samples.
In our SWIG representation (\cref{fig:proximal-surrogate-model-swig}), 
these assumptions correspond to the requirement that the group indicator $G$ 
has no directed edges on $Y$ or $W$.
These transportability assumptions are closely related to 
the concept of S-admissibility in the data fusion literature
\citep{bareinboim2016causal, pearl2014external}.

Notably, our identification results do not require transportability for the surrogate-aligned proxy $Z$.
While $Z$ is crucial for identifying the bridge functions within the observational sample,
its distributional relationship with $U$
may vary across samples.
Thus, we allow for arbitrary dependence between $G$ and $Z$, 
accommodating scenarios where the proxy measurement mechanism
for $Z$ differs across samples.
In the SWIG (\cref{fig:proximal-surrogate-model-swig}),
this is reflected by the presence of a directed edge from $G$ to $Z$.

\begin{figure}[htbp]
  \centering
  \small
  \begin{tikzpicture}[scale=1.6]
    \node[name=a, shape=swig vsplit]{
      \nodepart{left}{$A$}
      \nodepart{right}{$a$}
    };
    \node[name=s, shape=swig vsplit] at (2, 0) {
      \nodepart{left}{$S(a)$}
      \nodepart{right}{$s$}
    };
    \node[name=y, ell] at (4, 0) {$Y(a, s)$};
    \node[name=g, cir] at (1, -1) {$G$};
    \node[name=x, cir] at (3, -1) {$X$};
    \node[name=u, cir, draw=darkred] at (3, 1) {$U$};
    \node[name=z, ell] at (1, 1) {$Z(a)$};
    \node[name=w, cir] at (5, 1) {$W$};

    \draw[-Latex](a) to (s);
    \draw[-Latex, in=180, out=110](a) to (z);
    \draw[-Latex](s) to (y);
    \draw[-Latex, in=250, out=180](g) to (a);
    \draw[-Latex](g) to (s);
    \draw[-Latex, in=240, out=120](g) to (z);
    \draw[-Latex, in=270, out=150, looseness=0.5](x) to (a);
    \draw[-Latex, in=270, out=150](x) to (s);
    \draw[-Latex](x) to (y);
    \draw[-Latex](x) to (g);
    \draw[-Latex, bend left=45](x) to (z);
    \draw[-Latex, bend right=45](x) to (w);
    \draw[-Latex, in=90, out=210, darkred](u) to (s);
    \draw[-Latex, darkred](u) to (z);
    \draw[-Latex, darkred](u) to (y);
    \draw[-Latex, darkred, in=90, out=210, looseness=0.5](u) to (a);
    \draw[-Latex, darkred](u) to (x);
    \draw[-Latex, darkred](u) to (w);
    \draw[-Latex, darkred, in=90, out=180, looseness=1](u) to (g);
    \draw[-Latex](z) to (s);
    \draw[-Latex](w) to (y);
  \end{tikzpicture}
  \caption{The SWIG for the underlying causal model
    in the proximal surrogate approach
    when intervening on $A$ and $S$.}
  \label{fig:proximal-surrogate-model-swig}
\end{figure}
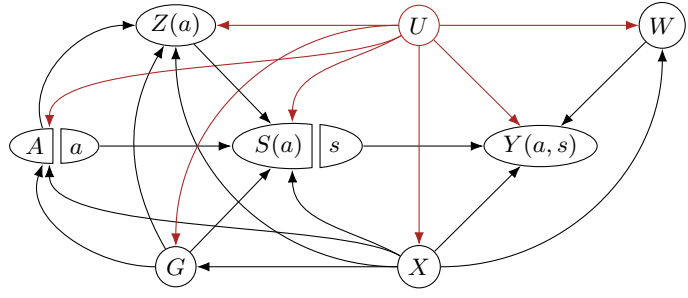

\subsection{Identification via Outcome Bridge Function} \label{sec:identification-via-outcome-bridge-function}

We first introduce
the outcome bridge function,
which helps to impute
the missing long-term outcome $Y$
in the experimental sample.

\begin{assumption}[Outcome Bridge Function]
  \label{asn:outcome-bridge-function}
  There exists a square-integrable function 
  $h_0: \mathcal{W} \times \mathcal{S} \times \mathcal{X} \to \mathbb{R}$
  for the observational sample $(G = \text{O})$
  such that
  \begin{align}
    \E\bs{Y \mid Z, S, X, G = \text{O}}
    = \E\bs{h_0 (W, S, X) \mid Z, S, X, G = \text{O}}.
    \label{eq:outcome-bridge-function}
  \end{align}
\end{assumption}

We further assume
that the surrogate-aligned proxy $Z$
is sufficiently informative
about the unobserved confounders $U$
to ensure that we can learn the unobserved confounding relationship
using $Z$.

\begin{assumption}[Completeness for Outcome Bridge Function]
  \label{asn:completeness-outcome}
  For any $g \in L_2$,
  if $\E\bs{g(U) \mid Z = z, S = s, X = x} = 0$ for all $z, s, x$,
  then $g(U) = 0$ almost surely.
\end{assumption}

This completeness condition
has been widely used
in the nonparametric IV models
\citep{neweyInstrumentalVariableEstimation2003,ai2003efficient,chernozhukov2005iv},
measurement error models \citep{hu2008instrumental,an2012well},
and the recent proximal inference literature
\citep{miaoIdentifyingCausalEffects2018,tchetgentchetgenIntroductionProximalCausal2024,cuiSemiparametricProximalCausal2024,miaoConfoundingBridgeApproach2024,ghassamiCombiningExperimentalObservational2022b,imbensLongtermCausalInference2025a}.

We establish the first identification result
using the outcome bridge function.

\begin{theorem}[Identification via Outcome Bridge Function]
  \label{thm:outcome-bridge-identification}
  \begin{enumerate}
    \item[]
    \item Under \cref{asn:outcome-bridge-function,asn:availability-of-proxies,asn:completeness-outcome},
      the function $h_0$ in \cref{eq:outcome-bridge-function}
      satisfies
      \begin{align*}
        \E\bs{Y \mid U, S, X, G = \text{O}}
        = \E\bs{h_0(W, S, X) \mid U, X, G = \text{O}}.
      \end{align*}

    \item Suppose that the function $h_0$ satisfies \cref{eq:outcome-bridge-function}.
      Under \cref{asn:unconfoundedness-observational,asn:unconfoundedness-experimental,asn:transportability},
      the function $h_0$
      also satisfies
      \begin{align*}
        \E\bs{Y \mid U, S, X, G = \text{E}}
        &= \E\bs{h_0(W, S, X) \mid U, X, G = \text{E}}.
      \end{align*}
      That is,
      the same function $h_0$ transports from the observational sample
      to the experimental sample.

    \item Under \cref{asn:unconfoundedness-observational,asn:unconfoundedness-experimental,asn:no-direct-effect,asn:availability-of-proxies,asn:outcome-bridge-function,asn:completeness-outcome,asn:transportability},
      the mean potential outcome is identified by
      \begin{align}
        \E\bs{Y(a) \mid X, G = \text{E}}
        = \E\bs{h_0(W, S, X) \mid A = a, X, G = \text{E}},
        \label{eq:outcome-bridge-identification}
      \end{align}
      where $h_0$ satisfies \cref{eq:outcome-bridge-function},
      and we call $h_0$ the \emph{proximal surrogate index}.
      Thus, the ATE in the experimental sample is identified by
      \begin{align*}
        \tau_0
        &= \E\bs{\bar{h}_0(1, X) - \bar{h}_0(0, X) \mid G = \text{E}} \\
        &= \E\bs{\frac{A h_0(W, S, X)}{e_0(X)} - \frac{(1 - A) h_0(W, S, X)}{1 - e_0(X)} \;\middle|\; G = \text{E}},
      \end{align*}
      where for any $h: \mathcal{W} \times \mathcal{S} \times \mathcal{X} \to \mathbb{R}$,
      we define
      \begin{align*}
        \bar{h}^{(h)}(A, X)
        \equiv \E\bs{h(W, S, X) \mid A, X, G = \text{E}}.
      \end{align*}
      In particular, $\bar{h}_0 \equiv \bar{h}^{(h_0)}$.
  \end{enumerate}
\end{theorem}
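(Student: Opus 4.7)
I would tackle the three parts sequentially, mirroring the statement.

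For part~(i), the plan is to extend the defining identity \cref{eq:outcome-bridge-function} from $Z$-conditioning to $U$-conditioning via the two proxy independences in \cref{asn:availability-of-proxies}. I would insert $U$ by iterated expectation on each side of \cref{eq:outcome-bridge-function}, using $Z \ind Y \mid (S, X, U, G)$ to simplify the left inner expectation to $\E[Y \mid U, S, X, G = \text{O}]$, and $W \ind (S, Z) \mid (X, U, G)$ to simplify the right inner expectation to $\E[h_0(W, S, X) \mid U, X, G = \text{O}]$ (with $S$ treated as a fixed argument of $h_0$). The difference of these two $U$-conditional means then has zero conditional expectation given $(Z, S, X, G = \text{O})$, and at each fixed $(s, x)$ the completeness condition in \cref{asn:completeness-outcome} forces the difference to vanish almost surely.

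For part~(ii), the plan is to transport the identity in (i) across samples using the two primitives in \cref{asn:transportability}: $Y \ind G \mid (S, X, U)$ equates the left-hand sides across $G$, and $W \ind G \mid (X, U)$ equates the right-hand sides (the $S$-argument inside $h_0$ factors out of the $W$-distribution via $W \ind S \mid (X, U, G)$, so only transportability for $W \mid (U, X)$ is needed). This step is essentially immediate once (i) is in hand.

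For part~(iii), the plan is to start from the proposed right-hand side $\E[h_0(W, S, X) \mid A = a, X, G = \text{E}]$ and introduce $U$ by iterated expectation. The inner expectation $\E[h_0(W, S, X) \mid U, A = a, X, G = \text{E}]$ factorizes into $\int \E[h_0(W, s, X) \mid U, X, G = \text{E}]\, f(s \mid U, A = a, X, G = \text{E})\, ds$ using $W \ind (A, S) \mid (X, U, G)$. By part~(ii) the bracketed term equals $\E[Y \mid U, S = s, X, G = \text{E}]$; under \cref{asn:no-direct-effect} we have $Y = Y(S)$, and \cref{eq:unconfoundedness-experimental-s} lets me drop the $S = s$ conditioning on the right, giving $\E[Y(s) \mid U, X, G = \text{E}]$. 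The randomization statement in \cref{eq:unconfoundedness-experimental-a} with consistency identifies $f(s \mid U, A = a, X, G = \text{E})$ with the density of $S(a)$ given $(U, X, G = \text{E})$, so the integral collapses to $\E[Y(S(a)) \mid U, X, G = \text{E}] = \E[Y(a) \mid U, X, G = \text{E}]$, the last equality again by \cref{asn:no-direct-effect}. A final application of $A \ind U \mid (X, G = \text{E})$ in the outer expectation yields \cref{eq:outcome-bridge-identification}. The $\bar h_0$-form of $\tau_0$ follows by differencing $a=1$ and $a=0$, and the IPW representation is a routine rewriting using $e_0(X)$ and iterated expectations over $A \mid X$ in the experimental sample.

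The main obstacle is in part~(iii): one must combine \cref{asn:no-direct-effect}, the surrogate-unconfoundedness, and the structural independence of $Y(s)$ from $S(a)$ given $(U, X)$ encoded by the SWIG in \cref{fig:proximal-surrogate-model-swig} to move from the observable conditional mean $\E[Y \mid U, S = s, X, G = \text{E}]$ to the potential-outcome quantity $\E[Y(a) \mid U, X, G = \text{E}]$, and to reinterpret $f(s \mid U, A = a, X, G = \text{E})$ as the density of $S(a)$ given $(U, X, G = \text{E})$. Parts~(i) and~(ii) are, in contrast, standard proxy-bridge and transportability arguments.
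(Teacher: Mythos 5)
Your Parts (i) and (ii) coincide with the paper's proof step for step. Part (iii) is where you diverge: the paper runs the chain forward, starting from $\E\bs{Y(a) \mid X, G=\text{E}}$, conditioning on $A=a$ via randomization, writing $Y(a)=Y(S)$ on that event, and then iterating expectations over $(S,U,W)$ until $h_0$ appears; you run the same chain in reverse, starting from $\E\bs{h_0(W,S,X)\mid A=a,X,G=\text{E}}$. Both directions work and use the same ingredients, but two of your justifications need tightening. First, to drop the $S=s$ conditioning from $\E\bs{Y(s)\mid U,S=s,X,G=\text{E}}$ you cite \cref{eq:unconfoundedness-experimental-s} alone, which only gives $S \ind Y(s) \mid (A,X,U,G=\text{E})$ --- with $A$ in the conditioning set; you must first combine it with \cref{eq:unconfoundedness-experimental-a} and \cref{asn:no-direct-effect} via the contraction property to obtain $(A,S) \ind Y(s) \mid (X,U,G=\text{E})$, exactly as the paper does at the start of its Part 3 proof, and then decompose. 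Second, your collapse of $\int \E\bs{Y(s)\mid U,X,G=\text{E}}\, f_{S(a)}(s\mid U,X,G=\text{E})\, ds$ to $\E\bs{Y(S(a))\mid U,X,G=\text{E}}$ is justified by appeal to a cross-world independence $Y(s) \ind S(a) \mid (U,X)$ ``encoded by the SWIG,'' which is neither among the stated assumptions nor needed: the identity follows from single-world independences by reinserting the conditioning on $A=a$ (using $A \ind (Y(a), (S(a))_a, U) \mid (X,G=\text{E})$ and the contracted independence above), which is precisely how the paper's forward derivation sidesteps the cross-world issue. With those two repairs your argument is a valid, essentially equivalent proof; the IPW and $\bar h_0$ representations at the end are routine in both versions.
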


\begin{proof}
  See \cref{sec:proof-of-thm-outcome-bridge-identification}.
\end{proof}

\cref{thm:outcome-bridge-identification}
shows that
the missing long-term outcome $Y$
can be imputed
in the experimental sample
using the outcome bridge function $h$
learned from the observational sample.
The long-term ATE in the experimental sample
can then be identified
using standard methods,
such as outcome regression (OR)
or inverse probability weighting (IPW),
based on the imputed outcomes.

\begin{remark}[Connection to \cite{athey2025surrogate}]
  \cref{thm:outcome-bridge-identification}
  generalizes \cite{athey2025surrogate}'s surrogate index identification
  by allowing for unobserved confounding
  between treatment and outcome
  in the observational sample.
  See \cref{sec:connection-to-surrogate-index-identification}
  for more details.
\end{remark}

\subsection{Identification via Surrogate Bridge Function} \label{sec:identification-via-surrogate-bridge-function}

An alternative identification strategy
leverages
the surrogate bridge function,
as introduced below.

\begin{assumption}[Surrogate Bridge Function]
  \label{asn:surrogate-bridge-function}
  For each treatment level $a \in \{0, 1\}$,
  there exists a square-integrable function 
  $q_{a, 0}: \mathcal{Z} \times \mathcal{S} \times \mathcal{X} \to \mathbb{R}$
  such that
  \begin{align}
    \E\bs{q_{a, 0}(Z, S, X) \mid W, S, X, G = \text{O}}
    = \frac{f(W, S \mid A = a, X, G = \text{E}) f(X \mid G = \text{E})}{f(W, S, X \mid G = \text{O})}.
    \label{eq:surrogate-bridge-function}
  \end{align}
\end{assumption}

To ensure that 
the surrogate bridge function (\cref{eq:surrogate-bridge-function})
can be used for reweighting,
we impose the following completeness condition.

\begin{assumption}[Completeness for Surrogate Bridge Function]
  \label{asn:completeness-surrogate}
  For any $g \in L_2$,
  if $\E\bs{g(U) \mid W = w, S = s, X = x} = 0$ for all $w, s, x$,
  then $g(U) = 0$ almost surely.
\end{assumption}

This completeness condition mirrors
that in \cref{asn:completeness-outcome},
guaranteeing
the outcome-aligned proxy $W$
is sufficiently informative
about the unobserved confounders $U$.

We now present
an alternative identification result
based on the surrogate bridge function.

\begin{theorem}[Identification via Surrogate Bridge Function]
  \label{thm:surrogate-bridge-identification}
  \begin{enumerate}
    \item[]
    \item Under \cref{asn:surrogate-bridge-function,asn:availability-of-proxies,asn:completeness-surrogate},
      the function $q_a$ in \cref{eq:surrogate-bridge-function}
      satisfies
      \begin{align*}
        \E\bs{q_{a, 0}(Z, S, X) \mid U, S, X, G = \text{O}}
        = \frac{f(U, S \mid A = a, X, G = \text{E}) f(X \mid G = \text{E})}{f(U, S, X \mid G = \text{O})}.
      \end{align*}

    \item Under \cref{asn:unconfoundedness-observational,asn:unconfoundedness-experimental,asn:no-direct-effect,asn:availability-of-proxies,asn:surrogate-bridge-function,asn:completeness-surrogate,asn:transportability},
      the mean potential outcome is identified by
      \begin{align}
        \E\bs{Y(a) \mid G = \text{E}}
        = \E\bs{q_{a, 0}(Z, S, X) Y \mid G = \text{O}},
        \label{eq:surrogate-bridge-identification}
      \end{align}
      where $q_{a, 0}$ satisfies \cref{eq:surrogate-bridge-function}.
      Thus, the ATE in the experimental sample is identified
      by
      \begin{align*}
        \tau_0 = \E\bs{q_{1, 0}(Z, S, X) Y \mid G = \text{O}} - \E\bs{q_{0, 0}(Z, S, X) Y \mid G = \text{O}}.
      \end{align*}
  \end{enumerate}
\end{theorem}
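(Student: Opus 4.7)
The plan is to prove the two parts in sequence. Part 1 upgrades the defining integral equation \cref{eq:surrogate-bridge-function} from the observed proxy $W$ to the latent confounder $U$ via the completeness condition, and Part 2 then combines this $U$-level identity with transportability and the experimental unconfoundedness to match the reweighted observational functional to $\E[Y(a) \mid G = \text{E}]$.

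For Part 1, I define $q^{\ast}(u, s, x) := f(u, s \mid A = a, x, G = \text{E}) f(x \mid G = \text{E}) / f(u, s, x \mid G = \text{O})$ and aim to show it equals $\E[q_{a, 0}(Z, S, X) \mid U, S, X, G = \text{O}]$ almost surely. Starting from \cref{eq:surrogate-bridge-function} and applying the tower property with inner $\sigma$-algebra generated by $(U, S, X, G = \text{O})$, together with $Z \ind W \mid (U, S, X, G)$ (which follows from $W \ind (A, S, Z) \mid (X, U, G)$ in \cref{asn:availability-of-proxies}), I obtain
\begin{align*}
  \E\bs{\E[q_{a, 0}(Z, S, X) \mid U, S, X, G = \text{O}] \mid W, S, X, G = \text{O}}
  = \frac{f(W, S \mid A = a, X, G = \text{E}) f(X \mid G = \text{E})}{f(W, S, X \mid G = \text{O})}.
\end{align*}
Next I will verify that $\E[q^{\ast}(U, S, X) \mid W, S, X, G = \text{O}]$ equals the same right-hand side: writing it as an integral against $f(u \mid W, S, X, G = \text{O})$ and invoking Bayes' rule, the $f(u, s, x \mid G = \text{O})$ in the denominator of $q^{\ast}$ cancels, and I chain three independences --- $W \ind S \mid (X, U, G)$ from \cref{asn:availability-of-proxies}, transportability $W \ind G \mid (X, U)$ from \cref{eq:transportability-proxy}, and $W \ind A \mid (X, U, G = \text{E})$ again from \cref{asn:availability-of-proxies} --- to rewrite $f(W \mid U, S, X, G = \text{O})$ as $f(W \mid U, S, X, A = a, G = \text{E})$. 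The latter fuses with $f(U, S \mid A = a, X, G = \text{E})$ and integrates out $U$ to recover $f(W, S \mid A = a, X, G = \text{E})$. Subtracting and applying \cref{asn:completeness-surrogate} slicewise in $(s, x)$ yields the claim.

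For Part 2, I condition inside $\E[q_{a, 0}(Z, S, X) Y \mid G = \text{O}]$ on $(U, S, X)$ and factor via $Z \ind Y \mid (S, X, U, G)$ from \cref{asn:availability-of-proxies}, obtaining the product $\E[q_{a, 0} \mid U, S, X, G = \text{O}] \cdot \E[Y \mid U, S, X, G = \text{O}]$ inside an outer expectation. Part 1 replaces the first factor by $q^{\ast}$, and transportability \cref{eq:transportability-outcome} rewrites the second factor as $m(U, S, X) := \E[Y \mid U, S, X, G = \text{E}]$. The weight $f(U, S, X \mid G = \text{O})$ cancels the denominator of $q^{\ast}$, collapsing the expression to $\int m(u, s, x)\, f(u, s \mid A = a, x, G = \text{E}) f(x \mid G = \text{E}) \, du\, ds\, dx$. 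I then match this to $\E[Y(a) \mid G = \text{E}]$ by expanding the latter via the tower: $A \ind (Y(a), U) \mid (X, G = \text{E})$ from \cref{eq:unconfoundedness-experimental-a} combined with SUTVA gives $\E[Y(a) \mid U, X, G = \text{E}] = \E[Y \mid A = a, U, X, G = \text{E}]$; further conditioning on $S$, reducing $\E[Y \mid A = a, U, S, X, G = \text{E}]$ to $m(U, S, X)$ via SUTVA, \cref{asn:no-direct-effect}, and the conditional independence \cref{eq:unconfoundedness-experimental-s}, together with $f(S \mid A = a, U, X, G = \text{E}) f(U \mid X, G = \text{E}) = f(U, S \mid A = a, X, G = \text{E})$ (by $A \ind U \mid (X, G = \text{E})$), matches the target integral.

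The main obstacle is the Bayes-rule computation in Part 1 that transports $f(W \mid U, S, X, G = \text{O})$ to $f(W \mid U, S, X, A = a, G = \text{E})$: it requires carefully chaining three distinct conditional independences from \cref{asn:availability-of-proxies,asn:transportability}, and is exactly where the outcome-aligned proxy's role as a $U$-informative measurement structurally isolated from $(A, S)$ becomes essential. A secondary subtlety in Part 2 is the reduction $\E[Y \mid A = a, U, S, X, G = \text{E}] = m(U, S, X)$: this is transparent from the SWIG but formally combines SUTVA, \cref{asn:no-direct-effect}, \cref{eq:unconfoundedness-experimental-s}, and the implicit extension of \cref{eq:unconfoundedness-experimental-a} to the potential outcomes $Y(a, s)$.
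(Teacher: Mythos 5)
Your proposal is correct and follows essentially the same route as the paper's proof: Part 1 equates two representations of the conditional expectation given $(W, S, X, G = \text{O})$ using the same chain of proxy independences (including $W \ind G \mid (X, U)$, which the paper's own proof also invokes at this step) and then applies completeness in $U$ given $W$; Part 2 factors via $Y \ind Z \mid (U, S, X, G)$, substitutes Part 1, cancels the observational density, and uses transportability together with the derived independence $A \ind Y \mid (S, X, U, G = \text{E})$. The only difference is directional---you verify the candidate $q^{\ast}$ and expand $\E\bs{Y(a) \mid G = \text{E}}$ downward while the paper pushes the reweighted observational functional forward---which is cosmetic.
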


\begin{proof}
  See \cref{sec:proof-of-thm-surrogate-bridge-identification}.
\end{proof}

\cref{thm:surrogate-bridge-identification}
shows that
the long-term ATE
in the experimental sample
can be identified
by reweighting
the observed outcomes $Y$
in the observational sample
using the surrogate bridge functions $q_{a, 0}$'s.

\begin{remark}
  \cref{thm:surrogate-bridge-identification}
  is new to the literature,
  providing a novel form of bridge function (\cref{eq:surrogate-bridge-function})
  for identifying the ATE
  in the data combination setting.
\end{remark}

\begin{remark}[Connection to \cite{athey2025surrogate}]
  \cref{thm:surrogate-bridge-identification}
  generalizes
  \cite{athey2025surrogate}'s surrogate score identification
  by allowing for unobserved confounding.
  See \cref{sec:connection-to-surrogate-score-identification}
  for more details.
\end{remark}

\subsection{Multiply Robust Identification}

In 
\cref{sec:identification-via-outcome-bridge-function,sec:identification-via-surrogate-bridge-function},
we present two distinct identification strategies
based on
the outcome bridge functions
and the surrogate bridge functions,
respectively.
Combining these two strategies,
we obtain
a multiply robust identification result
as follows.

\begin{theorem}[Multiply Robust Identification]
  \label{thm:mutiply-robust-identification}
  Let $\tau_0$ be the true ATE defined in \cref{eq:ate}.
  Let $\eta = (e, h, \bar{h}, q_0, q_1)$
  be a set of possibly misspecified nuisance functions.
  Let $\eta_0 = (e_0, h_0, \bar{h}_0, q_{0,0}, q_{1,0})$
  denote the true nuisance functions,
  where $e_0$ is the true propensity score,
  $h_0$ is the function satisfying \cref{eq:outcome-bridge-function},
  $\bar{h}_0(A, X)$ is defined as $\E\bs{h_0(W, S, X) \mid A, X, G = \text{E}}$,
  and $q_{a, 0}$'s are the functions satisfying \cref{eq:surrogate-bridge-function}.
  If any of the following conditions on nuisance functions holds:
  \begin{enumerate}
    \item $h = h_0$, $\bar{h} = \bar{h}_0$;
    \item $h = h_0$, $e = e_0$;
    \item $q_a = q_{a, 0}$ for each $a \in \{0, 1\}$, $e = e_0$;
    \item $q_a = q_{a, 0}$ for each $a \in \{0, 1\}$, $\bar{h} = \bar{h}^{(h)}$;
  \end{enumerate}
  then
  \begin{align}
    \label{eq:mutiply-robust-identification}
    \tau_0 = \E\bs{\varphi(D; \eta)},
  \end{align}
  where $\varphi(D; \eta) \equiv \varphi_{\text{E}}(D; \eta) + \varphi_{\text{O}}(D; \eta)$
  is defined as
  \begin{align*}
    \varphi_{\text{E}}(D; \eta)
    &\equiv \frac{\mathbf{1}_{\bc{G = \text{E}}}}{\pi_0} \bp{\frac{(A - e(X))\bp{h(W, S, X) - \bar{h}(A, X)}}{e(X)(1 - e(X))} + \bar{h}(1, X) - \bar{h}(0, X)} \\
    \varphi_{\text{O}}(D; \eta)
    &\equiv \frac{\mathbf{1}_{\bc{G = \text{O}}}}{1 - \pi_0} \bp{\bp{q_1(Z, S, X) - q_0(Z, S, X)} (Y - h(W, S, X))}.
  \end{align*}
\end{theorem}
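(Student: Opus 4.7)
The plan is to verify $\E[\varphi(D;\eta)] = \tau_0$ in each of the four cases by splitting $\E[\varphi] = \E[\varphi_{\text{E}}] + \E[\varphi_{\text{O}}]$, converting the sample-indicator weights into $\E[\cdot \mid G=\text{E}]$ and $\E[\cdot \mid G=\text{O}]$, and then repeatedly applying two workhorse identities. \textbf{(i)} For any candidate outcome bridge $h$, the fitted residual satisfies $\E\bs{h(W,S,X) - \bar{h}^{(h)}(A,X) \mid A,X,G=\text{E}} = 0$ by definition of $\bar{h}^{(h)}$. \textbf{(ii)} For the true surrogate bridge and any integrable function $m$ of $(W,S,X)$,
\begin{align*}
  \E\bs{q_{a,0}(Z,S,X)\,m(W,S,X) \mid G=\text{O}}
  = \E\bs{\bar{h}^{(m)}(a,X) \mid G=\text{E}},
\end{align*}
which I would prove once as a preliminary lemma by conditioning on $(W,S,X,G=\text{O})$, plugging in the defining equation for $q_{a,0}$, and cancelling the observational density. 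I would also use the algebraic rewriting $\tfrac{A - e(X)}{e(X)(1-e(X))} = \tfrac{A}{e(X)} - \tfrac{1-A}{1-e(X)}$ throughout.

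Case~1 ($h=h_0,\ \bar{h}=\bar{h}_0$): identity~(i) makes the $(A-e)$-weighted correction inside $\varphi_{\text{E}}$ conditionally mean-zero given $(A,X,G=\text{E})$, so $\E[\varphi_{\text{E}}] = \E\bs{\bar{h}_0(1,X)-\bar{h}_0(0,X) \mid G=\text{E}} = \tau_0$ by \cref{thm:outcome-bridge-identification}; and the defining equation of $h_0$ gives $\E[Y - h_0 \mid Z,S,X,G=\text{O}] = 0$, which together with $(q_1-q_0)$ being a function of $(Z,S,X)$ kills $\varphi_{\text{O}}$. Case~2 ($h=h_0,\ e=e_0$): the observational term vanishes as in Case~1. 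For the experimental term, randomization yields $\E\bs{\tfrac{A-e_0}{e_0(1-e_0)}\bar{h}(A,X) \mid X,G=\text{E}} = \bar{h}(1,X) - \bar{h}(0,X)$ for any $\bar{h}$, so the misspecified $\bar{h}$ pieces cancel against the added $\bar{h}(1,X)-\bar{h}(0,X)$; what remains is the IPW form $\E\bs{\tfrac{A}{e_0}h_0 - \tfrac{1-A}{1-e_0}h_0 \mid G=\text{E}}$, which equals $\tau_0$ by the second line of \cref{thm:outcome-bridge-identification}.

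Cases~3 and~4 ($q_a=q_{a,0}$): identity~(ii) with $m=h$ gives $\E[(q_{1,0}-q_{0,0})h \mid G=\text{O}] = \E\bs{\bar{h}^{(h)}(1,X) - \bar{h}^{(h)}(0,X) \mid G=\text{E}}$, and \cref{thm:surrogate-bridge-identification} gives $\E[(q_{1,0}-q_{0,0})Y\mid G=\text{O}] = \tau_0$, so $\E[\varphi_{\text{O}}] = \tau_0 - \E\bs{\bar{h}^{(h)}(1,X) - \bar{h}^{(h)}(0,X)\mid G=\text{E}}$. In Case~4, identity~(i) forces $\E[\varphi_{\text{E}}] = \E\bs{\bar{h}^{(h)}(1,X) - \bar{h}^{(h)}(0,X) \mid G=\text{E}}$ regardless of $e$; in Case~3 the same expression appears via the Case~2 cancellation combined with $\E\bs{\tfrac{A-e_0}{e_0(1-e_0)} h \mid X,G=\text{E}} = \bar{h}^{(h)}(1,X) - \bar{h}^{(h)}(0,X)$. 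Either way, $\E[\varphi_{\text{E}}] + \E[\varphi_{\text{O}}] = \tau_0$.

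The main obstacle is purely bookkeeping: the correction $(A-e)(h-\bar{h})/[e(1-e)]$ is Neyman-orthogonal by design, so that either a correct $\bar{h}=\bar{h}^{(h)}$ makes it conditionally mean-zero inside the experiment (identity~(i)), or a correct $e=e_0$ makes the $\bar{h}$ part cancel via randomization while the $h$ part averages to $\bar{h}^{(h)}$; symmetrically, $\varphi_{\text{O}}$ is designed so that either a correct $h=h_0$ makes the residual $Y-h$ conditionally mean-zero given $(Z,S,X,G=\text{O})$, or a correct $q_a=q_{a,0}$ rewrites the observational average as an experimental-sample quantity via identity~(ii). Centralizing identities~(i) and~(ii) as lemmas avoids re-deriving the same conditional-expectation manipulations four times, and makes visible the pairing structure $\{(h,\bar{h}),(h,e),(q,e),(q,\bar{h})\}$ that generates the four robustness branches.
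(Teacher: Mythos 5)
Your proposal is correct and follows essentially the same route as the paper's proof: it splits $\varphi$ into the experimental and observational pieces, kills $\varphi_{\text{O}}$ via the defining moment of $h_0$ when the outcome bridge is correct, uses the reweighting property of $q_{a,0}$ to turn the observational average into an experimental one otherwise, and exploits the Neyman-orthogonal structure of the $(A-e)(h-\bar h)/[e(1-e)]$ correction for the $(\bar h, e)$ sub-cases. Your Lemma (ii), stated with the target $\E\bs{\bar h^{(m)}(a,X)\mid G=\text{E}}$, is in fact a slightly more careful rendering of the identity the paper invokes informally as $\E\bs{q_{a,0}h\mid G=\text{O}}=\E\bs{h\mid A=a,G=\text{E}}$.
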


\begin{proof}
  See \cref{sec:proof-of-thm-mutiply-robust-identification}.
\end{proof}

\begin{remark}
  \cref{thm:mutiply-robust-identification}
  is multiply robust
  in the sense that
  it identifies the ATE $\tau_0$
  if any one of the four sets of nuisance functions
  is correctly specified.
  This identification result
  is motivated by
  the EIF of $\tau_0$
  under a certain nonparametric model (see \cref{sec:estimation-and-inference}),
  extending the influence function-based identification
  in \cite{athey2025surrogate}
  to accommodate unobserved confounding.
  See \cref{sec:connection-to-influence-function-identification}
  for more details.
\end{remark}

\section{Estimation and Inference} \label{sec:estimation-and-inference}

\subsection{Cross-Fitted Estimators} \label{sec:cross-fitted-estimators}

In this section, 
we propose a generic estimation procedure
for the ATE $\tau_0$.
This approach
builds upon the identification results in \cref{sec:identification}
and employs
the cross-fitting technique from
\cite{chernozhukov2018double}
to 
reduce overfitting bias
when estimating nuisance parameters.

\begin{definition}[Cross-Fitted Estimators]
  \label{def:cross-fitted-estimators}
  Fix an integer $K \geq 2$.
  Let $\mathcal{I} \equiv \{1, \dots, N\}$ be the index set,
  where $N$ is the total sample size.
  Define $\mathcal{I}_{\text{E}} \equiv \{i \in \mathcal{I}: G_i = \text{E}\}$
  and $\mathcal{I}_{\text{O}} \equiv \{i \in \mathcal{I}: G_i = \text{O}\}$
  as the index sets for the experimental and observational samples, respectively.
  Randomly partition $\mathcal{I}_{\text{E}}$ into $K$ disjoint subsets
  $\{\mathcal{I}_{\text{E}, k}\}_{k = 1}^K$.
  Similarly, randomly partition $\mathcal{I}_{\text{O}}$
  into $K$ disjoint subsets $\{\mathcal{I}_{\text{O}, k}\}_{k = 1}^K$.
  Define $\mathcal{D}_{k} \equiv \{D_i: i \in \mathcal{I}_{\text{E}, k} \cup \mathcal{I}_{\text{O}, k}\}$
  as the evaluation set for fold $k$,
  and $\mathcal{D}_{-k} \equiv \mathcal{D} \setminus \mathcal{D}_{k}$
  as the training set for fold $k$.
  For each fold $k$,
  estimate the nuisance functions
  $\hat{\eta}_{k} \equiv (\hat{e}_{k}, \hat{h}_{k}, \hat{\bar{h}}_{k}, \hat{q}_{0, k}, \hat{q}_{1, k})$
  using the data excluding fold $k$, i.e., $\mathcal{D}_{-k}$,
  where $\hat{\bar h}_k$ is fitted by regressing $\hat h_k(W,S,X)$ on $(A,X)$ using $\mathcal D_{\mathrm{E},-k}$.
  The cross-fitted estimators are defined as
  \begin{align*}
    \hat{\tau}_{\text{OB-OR}} 
      &\equiv \frac{1}{N_{\text{E}}} \sum_{k = 1}^K \sum_{i \in \mathcal{I}_{\text{E}, k}} 
      \bp{\hat{\bar{h}}_{k}(1, X_i) - \hat{\bar{h}}_{k}(0, X_i)}, \\
    \hat{\tau}_{\text{OB-IPW}}
      &\equiv \frac{1}{N_{\text{E}}} \sum_{k = 1}^K \sum_{i \in \mathcal{I}_{\text{E}, k}} 
       \bp{\frac{A_i \hat{h}_{k}(W_i, S_i, X_i)}{\hat{e}_{k}(X_i)} - \frac{(1 - A_i) \hat{h}_{k}(W_i, S_i, X_i)}{1 - \hat{e}_{k}(X_i)}}, \\
    \hat{\tau}_{\text{SB}}
      &\equiv \frac{1}{N_{\text{O}}} \sum_{k = 1}^K \sum_{i \in \mathcal{I}_{\text{O}, k}} 
      \bp{\hat{q}_{1, k}(Z_i, S_i, X_i) - \hat{q}_{0, k}(Z_i, S_i, X_i)} Y_i, \\
    \hat{\tau}_{\text{MR}}
      &\equiv \frac{1}{N_{\text{E}}} \sum_{k = 1}^K \sum_{i \in \mathcal{I}_{\text{E}, k}} 
      \bp{\frac{(A_i - \hat{e}_{k}(X_i))(\hat{h}_{k}(W_i, S_i, X_i) - \hat{\bar{h}}_{k}(A_i, X_i))}{\hat{e}_{k}(X_i)(1 - \hat{e}_{k}(X_i))}} \\
      &\quad + \frac{1}{N_{\text{E}}} \sum_{k = 1}^K \sum_{i \in \mathcal{I}_{\text{E}, k}} \bp{\hat{\bar{h}}_{k}(1, X_i) - \hat{\bar{h}}_{k}(0, X_i)} \\
      &\quad + \frac{1}{N_{\text{O}}} \sum_{k = 1}^K \sum_{i \in \mathcal{I}_{\text{O}, k}} 
      \bp{(\hat{q}_{1, k}(Z_i, S_i, X_i) - \hat{q}_{0, k}(Z_i, S_i, X_i)) (Y_i - \hat{h}_{k}(W_i, S_i, X_i))}.
  \end{align*}
\end{definition}

\subsection{Asymptotic Properties} \label{sec:asymptotic-properties}

Before stating the main theorems,
we introduce some additional definitions and assumptions.
First,
we define the linear operator $T: L_2(W, S, X) \to L_2(Z, S, X)$
as 
\[
  (T h)(Z, S, X) = \E\bs{h(W, S, X) \mid Z, S, X, G = \text{O}},
\]
and its adjoint operator
$T^*: L_2(Z, S, X) \to L_2(W, S, X)$
as
\[
  (T^* q_a)(W, S, X) = \E\bs{q_a(Z, S, X) \mid W, S, X, G = \text{O}}.
\]

\subsubsection{Consistency}

Following the standard approach in \cite{imbensLongtermCausalInference2025a},
we impose the following convergence rate conditions
on the nuisance estimators to establish consistency.

\begin{assumption}[Convergence Rates]
  \label{asn:convergence-rates}
  Let $\| \cdot \|_{2}$ denote the $L_2$ norm.
  For any square-integrable function $g(W, S, X)$, 
  let $\tilde{\bar{h}}^{(g)}(A, X)$ denote the population limit of the
  estimator $\hat{\bar{h}}$ trained with pseudo-outcome $g(W, S, X)$ and predictors $(A, X)$ 
  in the experimental sample.
  For each fold $k$,
  the nuisance estimators
  satisfy the following convergence rates:
  \begin{align*}
    \|\hat{h}_{k} - \tilde{h}\|_{2} &= O_P(\delta_{h, N}), \quad
    \|T(\hat{h}_{k} - \tilde{h})\|_{2} = O_P(\rho_{h, N}), \quad
    \|\hat{\bar h}_{k} - \tilde{\bar{h}}^{(\hat{h}_k)}\|_2 = O_P(\delta_{\bar h,N}), \\
    \|\hat{q}_{a, k} - \tilde{q}_{a}\|_{2} &= O_P(\delta_{q, N}), \quad
    \|T^*(\hat{q}_{a, k} - \tilde{q}_{a})\|_{2} = O_P(\rho_{q, N}), \quad
    \|\hat e_{k} - \tilde{e}\|_{2} = O_P(\delta_{e, N}),
  \end{align*}
  where $\delta_{h, N}$, $\rho_{h, N}$, $\delta_{\bar{h}, N}$, $\delta_{q, N}$, $\rho_{q, N}$, $\delta_{e, N}$
  are sequences converging to zero as $N \to \infty$,
  and the mapping $g \mapsto \tilde{\bar{h}}^{(g)}$ is Lipschitz continuous with respect to the $L_2$ norm.
\end{assumption}

\begin{theorem}[Consistency]
\label{thm:consistency}
  \begin{enumerate}
    \item[]
    \item If the conditions in \cref{thm:outcome-bridge-identification},
      \cref{asn:convergence-rates},
      $\tilde{h} = h_0$ and $\tilde{\bar{h}}^{(h_0)} = \bar{h}_0$,
      then $\hat{\tau}_{\text{OB-OR}}$ is consistent for $\tau_0$.

    \item If the conditions in \cref{thm:outcome-bridge-identification},
      \cref{asn:convergence-rates},
      $\tilde{h} = h_0$ and $\tilde{e} = e_0$ hold,
      then $\hat{\tau}_{\text{OB-IPW}}$ is consistent for $\tau_0$.

    \item If the conditions in \cref{thm:surrogate-bridge-identification},
      \cref{asn:convergence-rates},
      $\tilde{q}_{a} = q_{a, 0}$ for each $a \in \{0, 1\}$
      and $\tilde{e} = e_0$ hold,
      then $\hat{\tau}_{\text{SB}}$ is consistent for $\tau_0$.
      
    \item If \cref{asn:convergence-rates} holds, and any of the following sets of conditions is satisfied:
      \begin{enumerate}
        \item $\tilde{h} = h_0$ and $\tilde{\bar{h}}^{(h_0)} = \bar{h}_0$;
        \item $\tilde{h} = h_0$ and $\tilde{e} = e_0$;
        \item $\tilde{q}_{a} = q_{a, 0}$ for each $a \in \{0, 1\}$ and $\tilde{e} = e_0$;
        \item $\tilde{q}_{a} = q_{a, 0}$ for each $a \in \{0, 1\}$ and $\tilde{\bar{h}}^{(\tilde{h})}(A, X) = \E[\tilde{h} \mid A, X, G=\text{E}]$,
      \end{enumerate}
      then $\hat{\tau}_{\text{MR}}$ is consistent for $\tau_0$.
  \end{enumerate}
\end{theorem}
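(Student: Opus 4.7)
For each of the four estimators $\hat\tau_\bullet \in \bc{\hat\tau_{\text{OB-OR}},\hat\tau_{\text{OB-IPW}},\hat\tau_{\text{SB}},\hat\tau_{\text{MR}}}$, I would read off from \cref{def:cross-fitted-estimators} the per-unit summand $\psi_\bullet(D_i;\eta)$ and decompose
\begin{align*}
\hat\tau_\bullet - \tau_0
&= \underbrace{\bp{\hat\tau_\bullet - \tilde\tau_\bullet}}_{\text{plug-in error}}
+ \underbrace{\bp{\tilde\tau_\bullet - \E\bs{\psi_\bullet(D;\tilde\eta) \mid G = g_\bullet}}}_{\text{sampling error}} \\
&\quad + \underbrace{\bp{\E\bs{\psi_\bullet(D;\tilde\eta) \mid G = g_\bullet} - \tau_0}}_{\text{identification gap}},
\end{align*}
where $\tilde\tau_\bullet$ denotes the same sample mean with each $\hat\eta_k$ replaced by its $L_2$-limit $\tilde\eta$ from \cref{asn:convergence-rates}, and $g_\bullet \in \{\text{E},\text{O}\}$ is the relevant sample. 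For the MR estimator, which mixes summands across both samples, the same decomposition applies to each of the three component averages and is combined by linearity.

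The sampling error is $o_P(1)$ by the strong law of large numbers applied to the iid summands $\psi_\bullet(D_i;\tilde\eta)$. The identification gap vanishes under each part's hypothesis: part~1 uses $\tilde h = h_0$, $\tilde{\bar h}^{(h_0)} = \bar h_0$ and invokes \cref{thm:outcome-bridge-identification}; part~2 uses $\tilde h = h_0$, $\tilde e = e_0$ and the same theorem; part~3 uses $\tilde q_a = q_{a,0}$ and \cref{thm:surrogate-bridge-identification}; part~4 uses any of the four robustness configurations and \cref{thm:mutiply-robust-identification}. For the plug-in error, I plan to condition on the training data $\mathcal{D}_{-k}$, under which $\hat\eta_k$ is deterministic and the fold-$k$ evaluation points remain iid. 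Markov's inequality then reduces the task to bounding $\|\psi_\bullet(\cdot;\hat\eta_k) - \psi_\bullet(\cdot;\tilde\eta)\|_{L_2(P)}$, a Lipschitz functional of the nuisance differences whose $L_2$ norms are $o_P(1)$ by \cref{asn:convergence-rates}. For IPW-type summands this requires $\hat e_k$ to be eventually bounded away from $0$ and $1$; this follows from the overlap assumption $0 < e_0 < 1$, the $L_2$-consistency $\hat e_k \to e_0$, and a routine trimming argument on the event $\{\epsilon < \hat e_k < 1-\epsilon\}$ for small $\epsilon > 0$.

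The main obstacle is the compounded use of the Lipschitz hypothesis on $g \mapsto \tilde{\bar h}^{(g)}$: when $\hat h_k$ varies across folds, so does $\hat{\bar h}_k$, and I must verify that its $L_2$-limit behaves compatibly with the identification hypothesis at hand. For parts~1 and 4(a) this is immediate, since $\tilde h = h_0$ together with Lipschitzness gives $\tilde{\bar h}^{(\tilde h)} = \tilde{\bar h}^{(h_0)} = \bar h_0$, and the triangle inequality $\|\hat{\bar h}_k - \bar h_0\|_2 \leq \|\hat{\bar h}_k - \tilde{\bar h}^{(\hat h_k)}\|_2 + C \|\hat h_k - \tilde h\|_2 = o_P(1)$ closes the loop. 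For part~4(d), where $\tilde h$ may be misspecified but $\tilde{\bar h}^{(\tilde h)}(A,X) = \E\bs{\tilde h \mid A,X,G=\text{E}}$ is assumed by fiat, the same Lipschitzness ensures $\hat{\bar h}_k \to \tilde{\bar h}^{(\tilde h)}$ in $L_2$ even though the inner regression target drifts with $k$, and the identification in \cref{thm:mutiply-robust-identification} collapses the MR functional to $\tau_0$ without ever using $\tilde e$. Because only consistency (not $\sqrt{N}$-normality) is claimed, no product-rate condition such as $\delta_{h,N}\delta_{q,N} = o(N^{-1/2})$ is needed and no Neyman-orthogonality argument is invoked; pointwise $o_P(1)$ convergence of each nuisance estimator suffices to drive every term in the decomposition to zero.
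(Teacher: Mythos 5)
Your proposal is correct and follows essentially the same route as the paper's proof: an oracle/plug-in decomposition in which the oracle estimator (with nuisances at their $L_2$ limits) is handled by the law of large numbers plus the relevant identification theorem, and the plug-in error is controlled fold-by-fold by conditioning on $\mathcal{D}_{-k}$ and bounding a bias term and a stochastic term via the $L_2$ rates in \cref{asn:convergence-rates}, including the same use of the Lipschitz continuity of $g \mapsto \tilde{\bar{h}}^{(g)}$ to handle the drifting regression target in $\hat{\bar{h}}_k$. Your explicit trimming argument for $\hat{e}_k$ is a detail the paper leaves implicit, but it does not change the structure of the argument.
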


\begin{proof}
  See \cref{sec:proof-of-thm-consistency}.
\end{proof}

\subsubsection{Asymptotic Normality}

While consistency requires only that the nuisance estimators converge to their true values, 
establishing asymptotic normality requires stricter conditions on their convergence rates.
Specifically, we need to ensure that the second-order bias terms vanish faster than $N^{-1/2}$,
ensuring that the asymptotic distribution of $\hat{\tau}_{\text{MR}}$
is dominated by the first-order term.
We impose the following assumption on the product rates of the nuisance estimators.

\begin{assumption}[Product Rates of Convergence]
  \label{asn:product-rates}
  Suppose that \cref{asn:convergence-rates} holds.
  The convergence rates satisfy the following conditions:
  \begin{align}
    \delta_{e, N} \delta_{\bar{h}, N} = o_P(N^{-1/2}),
    \label{eq:rate-cond-experimental} \\
    \min \left\{ \delta_{q, N} \rho_{h, N}, \; \rho_{q, N} \delta_{h, N} \right\} = o_P(N^{-1/2}).
    \label{eq:rate-cond-observational}
  \end{align}
\end{assumption}

\begin{remark}
  \cref{eq:rate-cond-experimental}
  is standard in the DML literature
  \citep{chernozhukov2018double},
  requiring the product of the convergence rates of the propensity score
  and the pseudo-outcome regression
  that uses the estimated outcome bridge function as the outcome
  to be $o_P(N^{-1/2})$.
  \cref{eq:rate-cond-observational} is specific to the proximal inference setting.
  Following \cite{imbensLongtermCausalInference2025a},
  we distinguish between two types of convergence rates:
  \emph{strong-metric errors} ($\delta_{h,N}$ and $\delta_{q,N}$)
  measure $L_2$ deviation from the true bridge functions,
  and \emph{weak-metric errors} ($\rho_{h,N}$ and $\rho_{q,N}$)
  measure violations of the bridge equations via operators $T$ and $T^*$.
  Because bridge-function estimation amounts to solving potentially ill-posed conditional moment equations,
  an estimator can exhibit a slow strong-metric rate while still achieving a fast weak-metric rate
  (e.g., via Tikhonov regularization or sieve methods).
  \cref{eq:rate-cond-observational} exploits this asymmetry,
  requiring only that the product of \emph{one} strong-metric error and \emph{one} weak-metric error
  be $o_P(N^{-1/2})$,
  a strictly weaker condition than requiring both strong rates 
  to satisfy $\delta_{h,N} \delta_{q,N} = o_P(N^{-1/2})$.
\end{remark}

The following theorem establishes the asymptotic normality of the 
cross-fitted estimator $\hat{\tau}_{\text{MR}}$
when the bridge functions and propensity score 
converge to their true values and the pseudo-outcome regression for $\hat{\bar h}_k$ 
is correctly specified.

\begin{theorem}[Asymptotic Normality]
  \label{thm:asymptotic-normality}
  Suppose \cref{asn:convergence-rates} holds with the correct limits
  \[
    \tilde{h} = h_0, \quad \tilde{e} = e_0, \quad \tilde{q}_a = q_{a,0} \text{ for } a \in \{0, 1\}, \quad \text{and} \quad
  \tilde{\bar{h}}^{(h_0)} = \bar{h}_0.
  \]
  If \cref{asn:product-rates} also holds,
  then the cross-fitted estimator $\hat{\tau}_{\text{MR}}$ is asymptotically normal:
  \begin{align*}
    \sqrt{N}(\hat{\tau}_{\text{MR}} - \tau_0) \xrightarrow{d} \mathcal{N}(0, V),
  \end{align*}
\end{theorem}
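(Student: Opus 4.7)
The plan is to apply the double/debiased machine learning (DML) framework of \cite{chernozhukov2018double} to the multiply robust score $\varphi$ from \cref{thm:mutiply-robust-identification}. The goal is to show that $\sqrt{N}(\hat{\tau}_{\text{MR}} - \tau_0)$ is asymptotically equivalent to the infeasible oracle average $N^{-1/2} \sum_i [\varphi(D_i; \eta_0) - \tau_0]$, which by the standard CLT converges to $\mathcal{N}(0, V)$ with $V = \mathrm{Var}(\varphi(D; \eta_0))$.

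First, I would rewrite $\hat{\tau}_{\text{MR}} - \tau_0 = K^{-1}\sum_k [\mathbb{E}_{n, k}\varphi(D; \hat{\eta}_k) - \tau_0]$, after reconciling the $(N_{\text{E}}^{-1}, N_{\text{O}}^{-1})$ normalization in \cref{def:cross-fitted-estimators} with the $\pi_0$-weighted form of $\varphi$ in \cref{thm:mutiply-robust-identification}; this reconciliation contributes only $o_P(N^{-1/2})$ via $N_{\text{E}}/N = \pi_0 + O_P(N^{-1/2})$ and Slutsky's theorem. For each fold I would use the decomposition
\begin{align*}
  \mathbb{E}_{n, k}\varphi(D; \hat{\eta}_k) - \tau_0
  &= (\mathbb{E}_{n, k} - \E)[\varphi(D; \eta_0)] \\
  &\quad + (\mathbb{E}_{n, k} - \E)[\varphi(D; \hat{\eta}_k) - \varphi(D; \eta_0)] \\
  &\quad + \E[\varphi(D; \hat{\eta}_k) - \varphi(D; \eta_0)].
\end{align*}
Aggregating the oracle (first) term across folds gives $N^{-1}\sum_i [\varphi(D_i; \eta_0) - \tau_0]$, which scaled by $\sqrt{N}$ delivers the $\mathcal{N}(0, V)$ limit. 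The empirical-process (second) term is $o_P(N^{-1/2})$ by a conditional Chebyshev argument: given $\mathcal{D}_{-k}$, the integrand is a fixed function whose $L_2$ norm is $o_P(1)$ by \cref{asn:convergence-rates}, so its centered empirical average over fold $k$ has conditional variance $o_P(N^{-1})$.

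The main obstacle is the bias term $\E[\varphi(D; \hat{\eta}_k) - \varphi(D; \eta_0)]$, which I would analyze by splitting into experimental and observational contributions. For $\varphi_{\text{E}}$, a classical AIPW-style expansion produces (i) a first-order term measuring the discrepancy $\tilde{\bar{h}}^{(\hat{h}_k)} - \bar{h}_0$ arising from using $\hat{h}_k$ rather than $h_0$ as the pseudo-outcome, plus (ii) a genuinely second-order doubly robust product bounded by $\|\hat{e}_k - e_0\|_2 \|\hat{\bar{h}}_k - \tilde{\bar{h}}^{(\hat{h}_k)}\|_2 = O_P(\delta_{e, N}\delta_{\bar{h}, N}) = o_P(N^{-1/2})$ via \cref{eq:rate-cond-experimental}. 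Part (i) does not vanish on its own, but is cancelled exactly by the leading term $-\E[(q_{1, 0} - q_{0, 0})(\hat{h}_k - h_0) \mid G = \text{O}]$ of the observational bias: invoking the defining identity for $T^* q_{a, 0}$ in \cref{eq:surrogate-bridge-function} rewrites this integral as an expectation over $(W, S, X) \mid A = a, G = \text{E}$, which is precisely the negative of part (i). After cancellation, the only surviving observational contribution is the cross term $\E[(\hat{q}_{a, k} - q_{a, 0})(\hat{h}_k - h_0) \mid G = \text{O}]$. The crucial step is to bound it in two equivalent ways by iterating expectations through $T$ and $T^*$, giving $\|\hat{q}_{a, k} - q_{a, 0}\|_2 \|T(\hat{h}_k - h_0)\|_2$ and $\|T^*(\hat{q}_{a, k} - q_{a, 0})\|_2 \|\hat{h}_k - h_0\|_2$; the minimum is $\min\{\delta_{q, N}\rho_{h, N}, \rho_{q, N}\delta_{h, N}\} = o_P(N^{-1/2})$ by \cref{eq:rate-cond-observational}. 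This asymmetric operator bounding, together with the cross-sample cancellation between experimental and observational biases, is the genuinely new technical content of the argument and is where the proximal data-combination structure departs from standard AIPW analysis.

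Finally, combining the three pieces via Slutsky's theorem yields $\sqrt{N}(\hat{\tau}_{\text{MR}} - \tau_0) \xrightarrow{d} \mathcal{N}(0, V)$ with $V = \mathrm{Var}(\varphi(D; \eta_0))$, establishing the claimed asymptotic normality.
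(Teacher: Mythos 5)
Your proposal is correct and follows essentially the same route as the paper's proof: the same oracle/empirical-process/bias decomposition, the same cancellation of the first-order experimental bias $\tilde{\bar{h}}^{(\hat{h}_k)} - \bar{h}_0$ against the leading observational term $-\E[(q_{1,0}-q_{0,0})(\hat{h}_k - h_0) \mid G = \text{O}]$ via the surrogate bridge identity, and the same $\min\{\delta_{q,N}\rho_{h,N},\,\rho_{q,N}\delta_{h,N}\}$ bound on the remaining cross term obtained by iterating expectations through $T$ and $T^*$. The only cosmetic difference is that you make the $N_{\text{E}}/N \to \pi_0$ normalization step explicit, which the paper absorbs into an unstated $o_P(N^{-1/2})$ term.
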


\begin{proof}
  See \cref{sec:proof-of-thm-asymptotic-normality}.
\end{proof}

An immediate consequence of \cref{thm:asymptotic-normality}
is that
we can construct
a consistent estimator for the asymptotic variance $V$
using the plug-in approach:
\begin{align*}
  \hat{V} 
  &= \frac{N}{N_{\text{E}}^2} \sum_{k = 1}^K \sum_{i \in \mathcal{I}_{\text{E}, k}}
  \bp{\frac{(A_i - \hat{e}_{k}(X_i))(\hat{h}_{k}(W_i, S_i, X_i) - \hat{\bar{h}}_{k}(A_i, X_i))}{\hat{e}_{k}(X_i)(1 - \hat{e}_{k}(X_i))} + \hat{\bar{h}}_{k}(1, X_i) - \hat{\bar{h}}_{k}(0, X_i) - \hat{\tau}_{\text{MR}}}^2 \\
  &\quad + \frac{N}{N_{\text{O}}^2} \sum_{k = 1}^K \sum_{i \in \mathcal{I}_{\text{O}, k}}
  \bp{(\hat{q}_{1, k}(Z_i, S_i, X_i) - \hat{q}_{0, k}(Z_i, S_i, X_i)) (Y_i - \hat{h}_{k}(W_i, S_i, X_i))}^2.
\end{align*}

The following theorem establishes the consistency of this variance estimator
and the validity of the corresponding confidence interval (CI).

\begin{theorem}[Consistency of Variance Estimator]
  \label{thm:consistency-of-variance-estimator}
  Under the conditions of \cref{thm:asymptotic-normality},
  the variance estimator $\hat{V}$ is consistent for $V$:
  \begin{align*}
    \hat{V} \xrightarrow{p} V.
  \end{align*}
  Consequently, the asymptotic $(1-\alpha)$ CI for $\tau_0$ is given by
  \begin{align*}
    \text{CI}_{1 - \alpha} 
    = \left( \hat{\tau}_{\text{MR}} - \Phi^{-1}(1 - \alpha/2) \sqrt{\hat{V}/N}, \quad
    \hat{\tau}_{\text{MR}} + \Phi^{-1}(1 - \alpha/2) \sqrt{\hat{V}/N} \right),
  \end{align*}
  satisfying $\lim_{N \to \infty} P(\tau_0 \in \text{CI}_{1 - \alpha}) = 1 - \alpha$,
  where $\Phi^{-1}(\cdot)$ is the quantile function of the standard normal distribution.
\end{theorem}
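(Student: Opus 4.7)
The plan is to prove $\hat V \xrightarrow{p} V$ via an oracle-plus-plug-in decomposition and then obtain the coverage statement by Slutsky's theorem applied to \cref{thm:asymptotic-normality}. Define the scalar summands
\[
\psi_E(D;\eta) \equiv \frac{(A-e(X))(h(W,S,X)-\bar h(A,X))}{e(X)(1-e(X))} + \bar h(1,X)-\bar h(0,X),
\]
\[
\psi_O(D;\eta) \equiv (q_1(Z,S,X)-q_0(Z,S,X))(Y - h(W,S,X)),
\]
so that the first sum in $\hat V$ averages $(\psi_E(D_i;\hat\eta_k)-\hat\tau_{\text{MR}})^2$ over $\mathcal I_E$ and the second averages $\psi_O(D_i;\hat\eta_k)^2$ over $\mathcal I_O$; let $\hat V^{\ast}$ denote the oracle quantity in which $(\hat\eta_k,\hat\tau_{\text{MR}})$ is replaced by $(\eta_0,\tau_0)$. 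The conditions of \cref{thm:asymptotic-normality} ensure $\psi_E(\cdot;\eta_0),\psi_O(\cdot;\eta_0)\in L_2$; by \cref{thm:outcome-bridge-identification} we have $E[\psi_E(D;\eta_0)\mid G=\mathrm{E}]=\tau_0$; and iterating the tower rule with the surrogate-bridge definition yields $E[q_{a,0}(Z,S,X)\, h_0(W,S,X)\mid G=\mathrm{O}] = E[\bar h_0(a,X)\mid G=\mathrm{E}]$, which combined with \cref{thm:surrogate-bridge-identification} gives $E[\psi_O(D;\eta_0)\mid G=\mathrm{O}]=0$. The strong law for $1_{\{G_i=\mathrm{E}\}}$ together with the within-stratum LLN then gives
\[
\hat V^{\ast}\;\xrightarrow{p}\; \frac{\operatorname{Var}(\psi_E(D;\eta_0)\mid G=\mathrm{E})}{\pi_0} + \frac{\operatorname{Var}(\psi_O(D;\eta_0)\mid G=\mathrm{O})}{1-\pi_0} \;=\; V,
\]
the last equality being the variance representation produced in the proof of \cref{thm:asymptotic-normality}.

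Next I argue $\hat V - \hat V^{\ast} = o_p(1)$. Fix a fold $k$, set $\Delta_i \equiv \psi_E(D_i;\hat\eta_k)-\psi_E(D_i;\eta_0)$ and $\delta \equiv \hat\tau_{\text{MR}}-\tau_0$, and apply polarization:
\[
(\psi_E(D_i;\hat\eta_k)-\hat\tau_{\text{MR}})^2 - (\psi_E(D_i;\eta_0)-\tau_0)^2 = 2(\psi_E(D_i;\eta_0)-\tau_0)(\Delta_i-\delta) + (\Delta_i-\delta)^2.
\]
Because $\hat\eta_k$ is built from $\mathcal D_{-k}$ and thus independent of $\{D_i\}_{i\in\mathcal I_{E,k}}$ given $\mathcal D_{-k}$, the conditional expectation of $|\mathcal I_{E,k}|^{-1}\sum_i\Delta_i^2$ is bounded, up to constants from strict overlap on $e_0$ and the boundedness implicit in \cref{thm:asymptotic-normality}, by $\|\hat h_k - h_0\|_2^2 + \|\hat{\bar h}_k - \bar h_0\|_2^2 + \|\hat e_k - e_0\|_2^2 = o_p(1)$ by \cref{asn:convergence-rates}. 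Markov then gives $N_E^{-1}\sum_i\Delta_i^2 = o_p(1)$; the cross-term is $o_p(1)$ by Cauchy--Schwarz against the $L_2$-finiteness of $\psi_E(\cdot;\eta_0)$; and $\delta = O_p(N^{-1/2})$ from \cref{thm:asymptotic-normality} is absorbed into the same bound. An identical argument treats the observational sum, with nuisance errors $\hat q_{a,k}-q_{a,0}$ and $\hat h_k - h_0$ and no $\hat\tau_{\text{MR}}$-centering to account for. Summing the finite $K$ folds gives $\hat V = \hat V^{\ast} + o_p(1)$, so $\hat V \xrightarrow{p} V$. The main technical obstacle is precisely this conditional bound on $\sum_i\Delta_i^2$: $\psi_E$ is nonlinear in $(e,h,\bar h)$ through inverse-propensity weighting, so decomposing $\Delta_i$ into Lipschitz pieces requires simultaneously invoking strict overlap to uniformly bound $1/(\hat e_k(1-\hat e_k))$ on a high-probability event and the Lipschitz-in-$L_2$ property of the pseudo-outcome regression $g\mapsto\tilde{\bar h}^{(g)}$ guaranteed by \cref{asn:convergence-rates}.

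For coverage, $V>0$ under the stated conditions, so continuous mapping gives $\sqrt{\hat V}\xrightarrow{p}\sqrt{V}$, and \cref{thm:asymptotic-normality} combined with Slutsky's theorem yields $\sqrt{N}(\hat\tau_{\text{MR}}-\tau_0)/\sqrt{\hat V}\xrightarrow{d}\mathcal N(0,1)$. Since $\{\tau_0\in\mathrm{CI}_{1-\alpha}\}$ coincides with $\{|\sqrt{N}(\hat\tau_{\text{MR}}-\tau_0)/\sqrt{\hat V}|\le \Phi^{-1}(1-\alpha/2)\}$, its probability converges to $1-\alpha$, completing the proof.
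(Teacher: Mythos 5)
Your proposal is correct and follows essentially the same route as the paper: an oracle variance estimator handled by the within-stratum law of large numbers, a difference-of-squares (polarization) bound on $\hat V-\hat V^{\ast}$ controlled fold-by-fold via the $L_2$ nuisance rates and $\hat\tau_{\text{MR}}-\tau_0=o_p(1)$, and Slutsky's theorem for the coverage claim. The only cosmetic difference is that you use the exact polarization identity where the paper uses the inequality $\lvert a^2-b^2\rvert\le(a-b)^2+2\lvert b\rvert\lvert a-b\rvert$, which amounts to the same estimate.
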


\begin{proof}
  See \cref{sec:proof-of-thm-consistency-of-variance-estimator}.
\end{proof}

\subsubsection{Semiparametric Efficiency}

We now establish the local semiparametric efficiency of 
our cross-fitted estimator $\hat{\tau}_{\text{MR}}$
under a statistical model $\mathcal{P}$
that imposes no restrictions
on the observed data law
other than the existence of an outcome bridge function $h_0$
(\cref{asn:outcome-bridge-function})

Following \cite{cuiSemiparametricProximalCausal2024} and \cite{imbensLongtermCausalInference2025a},
to ensure the EIF is well-defined,
we impose the following surjectivity condition,
which guarantees
the uniqueness of the bridge functions.

\begin{assumption}[Surjectivity]
  \label{asn:surjectivity}
  The linear operators $T$ and $T^*$
  are surjective.
\end{assumption}

\begin{proposition}[Uniqueness of Bridge Functions]
  \label{prp:surjectivity-uniqueness}
  Under \cref{asn:surjectivity},
  the outcome bridge function $h_0$
  in \cref{eq:outcome-bridge-function}
  and the surrogate bridge functions $q_{a, 0}$'s
  in \cref{eq:surrogate-bridge-function}
  are unique almost surely.
\end{proposition}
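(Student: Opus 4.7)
The plan is to reduce uniqueness of each bridge function to an injectivity statement for the corresponding linear operator, and then to extract injectivity of $T$ (respectively $T^*$) from surjectivity of its adjoint $T^*$ (respectively $T$) via the standard orthogonal decomposition for bounded operators on Hilbert spaces.

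\paragraph{Step 1: Uniqueness of $h_0$.}
First I would note that \cref{eq:outcome-bridge-function} can be rewritten as the linear equation
\begin{equation*}
  (T h_0)(Z, S, X) = \E[Y \mid Z, S, X, G = \text{O}]
\end{equation*}
in $L_2(Z, S, X \mid G = \text{O})$. If $h_0$ and $h_0'$ are two square-integrable solutions, then $T(h_0 - h_0') = 0$, i.e.\ $h_0 - h_0' \in \ker(T)$. The standard orthogonal decomposition for bounded linear operators between Hilbert spaces gives
\begin{equation*}
  \ker(T) = \overline{\operatorname{range}(T^*)}^{\perp}.
\end{equation*}
Since \cref{asn:surjectivity} assumes $T^*$ is surjective, $\operatorname{range}(T^*) = L_2(W, S, X \mid G = \text{O})$, so its orthogonal complement is $\{0\}$ and hence $\ker(T) = \{0\}$. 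Therefore $h_0 = h_0'$ almost surely.

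\paragraph{Step 2: Uniqueness of $q_{a,0}$.}
The argument is symmetric. Equation \cref{eq:surrogate-bridge-function} reads
\begin{equation*}
  (T^* q_{a,0})(W, S, X) = \frac{f(W, S \mid A = a, X, G = \text{E})\, f(X \mid G = \text{E})}{f(W, S, X \mid G = \text{O})},
\end{equation*}
and any two solutions differ by an element of $\ker(T^*) = \overline{\operatorname{range}(T)}^{\perp}$. Surjectivity of $T$ (the second half of \cref{asn:surjectivity}) yields $\ker(T^*) = \{0\}$, giving almost sure uniqueness of $q_{a,0}$ for each $a \in \{0, 1\}$.

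\paragraph{Main obstacle.}
There is no deep obstruction once the operators are defined on the right Hilbert spaces; the argument is a one-line consequence of the $\ker$-$\operatorname{range}$ duality. The only subtlety worth flagging explicitly is that $T$ (a conditional expectation between $L_2$ spaces under the observational law) must be verified to be a bounded linear operator so that the orthogonal decomposition $\ker(T) = \overline{\operatorname{range}(T^*)}^\perp$ applies. This is standard: for $h \in L_2(W, S, X \mid G = \text{O})$, Jensen's inequality gives $\|T h\|_2 \le \|h\|_2$, so $T$ is a contraction, and $T^*$ as defined is genuinely its Hilbert-space adjoint. With this observation in place, Steps 1 and 2 complete the proof, and no further machinery (e.g.\ completeness, regularization, or the bridge existence assumptions) is needed beyond \cref{asn:surjectivity}.
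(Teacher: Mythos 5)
Your proposal is correct and follows essentially the same route as the paper's proof: both reduce uniqueness to injectivity of $T$ and $T^*$ via the kernel--range duality $\Ker(T) = (\overline{\Range(T^*)})^\perp$ and invoke surjectivity from the assumption. Your added remarks on writing the closure of the range and on $T$ being a contraction (hence bounded, so the adjoint identity applies) are minor refinements of the same argument.
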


\begin{proof}
  See \cref{sec:proof-of-surjectivity-uniqueness}.
\end{proof}

The following theorem characterizes the
EIF for the ATE $\tau_0$
and the corresponding semiparametric efficiency bound.

\begin{theorem}[EIF]
  \label{thm:eif}
  Consider the semiparametric model $\mathcal{P}$
  evaluated at the submodel
  where the \cref{asn:surrogate-bridge-function,asn:surjectivity} hold.
  The EIF for the ATE $\tau_0$
  is given by
  \begin{align*}
    &\mathrel{\phantom{=}} \IF(D) \\
    &\equiv \frac{\mathbf{1}_{\bc{G = \text{E}}}}{\pi_0} \bp{\frac{(A - e_0(X))\bp{h_0(W, S, X) - \bar{h}_0(A, X)}}{e_0(X)(1 - e_0(X))} + \bar{h}_0(1, X) - \bar{h}_0(0, X) - \tau_0} \\
    &\quad + \frac{\mathbf{1}_{\bc{G = \text{O}}}}{1 - \pi_0} \bp{\bp{q_{1, 0}(Z, S, X) - q_{0, 0}(Z, S, X)} (Y - h_0(W, S, X))}.
  \end{align*}
  Therefore,
  the corresponding semiparametric local efficiency bound 
  for estimating $\tau_0$ is $V_\text{eff} = \E\bs{\IF(D)^2}$.
\end{theorem}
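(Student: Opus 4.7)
The plan is to verify that the candidate function
$\IF(D) = \varphi(D;\eta_0) - \tau_0$ --- read off directly from the multiply robust identification in \cref{thm:mutiply-robust-identification} --- is the pathwise derivative of the functional $\tau$ along every regular parametric submodel of $\mathcal{P}$, and then to invoke \cref{asn:surjectivity} to argue that the tangent space is unrestricted so that this pathwise gradient is automatically the EIF. Concretely, I would pick a one-dimensional regular submodel $\{P_t\}$ with score $s(D) \in L_2^0$, let $\eta(t) = (e_t, h_t, \bar h_t, q_{0,t}, q_{1,t})$ denote the induced path of nuisance functions, and compute
\begin{align*}
\frac{d\tau(P_t)}{dt}\bigg|_{t=0}
= \frac{d}{dt}\E_{P_t}\bs{\varphi(D;\eta(t))}\bigg|_{t=0}
= \E\bs{\varphi(D;\eta_0)\, s(D)} + \sum_j \partial_{\eta_j}\E\bs{\varphi(D;\eta)}\bigg|_{\eta_0}[\dot\eta_{j,0}].
\end{align*}
Since $\E[\varphi(D;\eta_0)] = \tau_0$ is a constant, replacing $\varphi$ by $\varphi - \tau_0$ in the first term gives the required inner product $\E[\IF(D) s(D)]$; the remaining task is to kill the nuisance derivative terms.

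The key engine for this is that multiple robustness supplies Neyman orthogonality for free. For each nuisance $\eta_j$, \cref{thm:mutiply-robust-identification} provides a set of configurations of the other nuisances under which $\E[\varphi(D;\eta)] = \tau_0$ holds regardless of $\eta_j$; differentiating this identity in $\eta_j$ along an arbitrary direction at $\eta_0$ yields $\partial_{\eta_j}\E[\varphi(D;\eta)]|_{\eta_0} \equiv 0$. I would carry this out explicitly for each of the five nuisances: (i) perturbing $e$ around $e_0$ with $h=h_0,\bar h=\bar h_0$ fixed makes $\varphi_{\text{E}}$ conditionally mean zero given $(X,G=\text{E})$ via the $(A-e_0)(h_0-\bar h_0)$ factor; (ii) perturbing $\bar h$ around $\bar h_0$ with $h=h_0$ fixed similarly uses $\E[A - e_0 \mid X, G=\text{E}] = 0$ together with $\bar h(1,X)-\bar h(0,X)$ integrating against the score of $X$; (iii) perturbing $h$ exploits the two cross-sample cancellations --- the observational term produces $\E[(q_{1,0}-q_{0,0})\delta h \mid G=\text{O}]$ which, after rewriting via the surrogate bridge equation \cref{eq:surrogate-bridge-function}, exactly cancels the contributions of $\delta h$ in the experimental term; (iv) perturbing $q_a$ yields terms proportional to $\E[Y - h_0(W,S,X) \mid Z,S,X,G=\text{O}]$, which vanishes by the outcome bridge equation \cref{eq:outcome-bridge-function} combined with $Z \ind Y \mid (S,X,U,G)$ from \cref{asn:availability-of-proxies}.

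After the orthogonality verification, the pathwise derivative reduces to $\E[\IF(D) s(D)]$ for every score $s$, so $\IF$ is a valid influence function for $\tau_0$. To upgrade this to efficiency, I would invoke the tangent space argument: under \cref{asn:surjectivity}, existence of square-integrable outcome and surrogate bridge functions imposes no binding local restriction on the observed data law (small $L_2^0$ perturbations of the score $s$ produce nearby laws on which the bridge equations remain solvable because the operators $T$ and $T^*$ have closed, full range). Consequently the tangent space of $\mathcal{P}$ at the evaluation point is the full mean-zero $L_2^0$ space, the influence function is unique, and $\IF$ must equal the canonical gradient. The semiparametric local efficiency bound is then $V_\text{eff} = \E[\IF(D)^2]$ by the Cramér-Rao-type bound for pathwise differentiable functionals.

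I expect the main obstacle to be the tangent space step: establishing rigorously that the constraint ``$h_0$ exists in $L_2$'' is not binding at submodels where $T$ is surjective. I would handle this by following the treatment in \cite{cuiSemiparametricProximalCausal2024} and \cite{imbensLongtermCausalInference2025a} --- constructing explicit regular perturbations of $P_0$ that preserve the existence of bridge functions via a small-gain/Neumann-series argument on the perturbed operator $T_t$, which remains surjective (hence bridge-solvable) for $|t|$ small by an open-mapping continuity argument. The orthogonality calculations themselves are mechanical once the bridge equations are used in the right conditioning order, but organizing them across the two samples requires careful bookkeeping of which conditional expectations to take first.
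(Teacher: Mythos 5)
Your overall architecture differs from the paper's: you derive the gradient by writing $\tau(P_t)=\E_{P_t}[\varphi(D;\eta(t))]$ and killing the nuisance-derivative terms via Neyman orthogonality inherited from multiple robustness, and you then argue the tangent space is all of mean-zero $L_2$ under \cref{asn:surjectivity}. The paper instead differentiates the identification functional $\tau(P_t)=\sum_a(-1)^{1-a}\E_t[\bar h_t(a,X)\mid G=\text{E}]$ term by term, converts each piece into an inner product with the score using the two bridge equations, and then \emph{verifies that the candidate lies in the tangent space}, whose component $\Lambda_1$ carries the explicit restriction $\E[(Y-h_0)s_1\mid Z,S,X,G=\text{O}]\in\Range(T)$; surjectivity is used only to check this membership. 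Your route is more modular but requires the stronger claim that the bridge-existence constraint is locally non-binding in every score direction, which you correctly flag as the hard step and which the paper's membership-verification route sidesteps. Your orthogonality items (iii) and (iv) correctly identify the essential mechanisms (the cross-sample cancellation via \cref{eq:surrogate-bridge-function}, and the outcome bridge equation annihilating perturbations of $q_a$; the extra conditional independence you cite in (iv) is not needed there).

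There is, however, a concrete error in the centering. You take the candidate to be $\varphi(D;\eta_0)-\tau_0$, whereas the stated EIF is $\varphi(D;\eta_0)-\frac{\mathbf{1}_{\{G=\text{E}\}}}{\pi_0}\tau_0$; these differ by $\tau_0\bp{1-\mathbf{1}_{\{G=\text{E}\}}/\pi_0}$, which is a nonzero mean-zero random variable whenever $\tau_0\neq 0$, so at most one of the two can be the gradient. The discrepancy traces to your nuisance list $\eta(t)=(e_t,h_t,\bar h_t,q_{0,t},q_{1,t})$ omitting the sampling probability $\pi_t=P_t(G=\text{E})$ (equivalently, the $f(X,G)$ factor): since $\tau_0$ is defined conditionally on $G=\text{E}$, one computes $\E\bs{\varphi(D;\eta_0\text{ with }\pi)}=\frac{\pi_0}{\pi}\tau_0$, whose derivative in $\pi$ at $\pi_0$ is $-\tau_0/\pi_0\neq 0$. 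Multiplying by $\dot\pi=\E[\mathbf{1}_{\{G=\text{E}\}}s(D)]$ yields exactly the missing term $-\frac{\tau_0}{\pi_0}\E[\mathbf{1}_{\{G=\text{E}\}}s(D)]$, so the pathwise derivative is $\E\bs{\bp{\varphi(D;\eta_0)-\frac{\mathbf{1}_{\{G=\text{E}\}}}{\pi_0}\tau_0}s(D)}$ and \emph{not} $\E[\varphi(D;\eta_0)s(D)]$. In the paper this contribution appears as the third term of \cref{eq:eif-derivation}, culminating in \cref{eq:eif-derivation-3-5}. As written, your claim that all non-score derivative terms vanish is false for this component, and the resulting influence function and variance bound would be wrong whenever $\tau_0\neq 0$. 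The fix is routine (include $\pi$ and the law of $(X,G)$ in the perturbation accounting), but it must be made.
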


\begin{proof}
  See \cref{sec:proof-of-thm-eif}.
\end{proof}

A direct consequence of \cref{thm:asymptotic-normality} and \cref{thm:eif}
is that
our cross-fitted estimator $\hat{\tau}_{\text{MR}}$
achieves the semiparametric local efficiency bound
as stated in the following theorem.

\begin{theorem}[Semiparametric Efficiency]
  \label{thm:efficiency}
  Suppose the conditions of \cref{thm:asymptotic-normality}
  and \cref{asn:surjectivity} hold.
  Then,
  \begin{align*}
    \sqrt{N} (\hat{\tau}_{\text{MR}} - \tau_0)
    \xrightarrow{d} \mathcal{N}(0, V_\text{eff}),
  \end{align*}
  where $V_\text{eff}$ is the semiparametric efficiency bound
  derived in \cref{thm:eif}.
\end{theorem}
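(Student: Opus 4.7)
The plan is to observe that this theorem is essentially a corollary obtained by matching the asymptotic variance produced in \cref{thm:asymptotic-normality} with the semiparametric efficiency bound derived in \cref{thm:eif}. The substance of the work has already been done in those two theorems; what remains is to verify that the influence function of $\hat{\tau}_{\text{MR}}$ equals the EIF.

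First, I would invoke \cref{thm:asymptotic-normality}, which yields $\sqrt{N}(\hat{\tau}_{\text{MR}} - \tau_0) \xrightarrow{d} \mathcal{N}(0, V)$ under the stated nuisance convergence and product-rate conditions. Inspecting the definition of $\hat{\tau}_{\text{MR}}$ in \cref{def:cross-fitted-estimators} and using $N_{\text{E}}/N \xrightarrow{p} \pi_0$ and $N_{\text{O}}/N \xrightarrow{p} 1 - \pi_0$, the estimator can be rewritten as a full-sample average, effectively
\[
  \hat{\tau}_{\text{MR}} = \frac{1}{N} \sum_{k=1}^K \sum_{i \in \mathcal{I}_k} \varphi(D_i; \hat{\eta}_k),
\]
up to an $o_P(N^{-1/2})$ remainder absorbed into the sample-size ratios, where $\varphi(D; \eta)$ is the multiply robust score from \cref{thm:mutiply-robust-identification}. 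The standard DML linearization underlying \cref{thm:asymptotic-normality} then supplies the asymptotic expansion
\[
  \sqrt{N}(\hat{\tau}_{\text{MR}} - \tau_0) = \frac{1}{\sqrt{N}} \sum_{i=1}^N \bp{\varphi(D_i; \eta_0) - \tau_0} + o_P(1),
\]
so that $V = \E\bs{(\varphi(D; \eta_0) - \tau_0)^2}$ by the central limit theorem.

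Second, I would perform a direct term-by-term comparison between $\varphi(D; \eta_0) - \tau_0$ and $\IF(D)$ as given in \cref{thm:eif}. Writing out both expressions, the experimental-sample term of $\varphi_{\text{E}}$ evaluated at $\eta_0$ minus the recentering $\tau_0$ produces exactly the $\mathbf{1}_{\{G = \text{E}\}}/\pi_0$ block of $\IF(D)$ (using that $\E[\mathbf{1}_{\{G=\text{E}\}}/\pi_0] = 1$ to absorb $-\tau_0$ inside the indicator), while the observational-sample term $\varphi_{\text{O}}$ at $\eta_0$ coincides with the $\mathbf{1}_{\{G = \text{O}\}}/(1-\pi_0)$ block. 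Thus $\varphi(D; \eta_0) - \tau_0 = \IF(D)$ pointwise, and so $V = \E[\IF(D)^2] = V_{\text{eff}}$.

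Third, \cref{asn:surjectivity} enters only to guarantee (via \cref{prp:surjectivity-uniqueness}) that $h_0$ and $q_{a,0}$ are unique almost surely, so that the object appearing in $\varphi(D; \eta_0)$ and the object appearing in $\IF(D)$ refer to the same functions; otherwise the identification in step two would be ambiguous. Combining the matching identity with \cref{thm:asymptotic-normality} and \cref{thm:eif} yields the claim. There is no real obstacle here beyond careful bookkeeping: because the estimator was constructed from the EIF and the DML product-rate conditions ensure the estimator is asymptotically linear with influence function equal to $\IF$, efficiency is automatic. The only mild care is in verifying that the conversion between group-specific averages $1/N_{\text{E}}$, $1/N_{\text{O}}$ and the pooled average $1/N$ does not perturb the limit, which follows from Slutsky's theorem since the differences are of order $O_P(N^{-1/2})$ times a bounded influence function.
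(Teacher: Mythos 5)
Your overall route is the same as the paper's: \cref{thm:efficiency} is a two-line corollary obtained by combining the asymptotic linearity from \cref{thm:asymptotic-normality} with the identification of the influence function as the EIF from \cref{thm:eif}. Your third point about \cref{asn:surjectivity} guaranteeing uniqueness of the bridge functions is a sensible observation. However, there is a concrete error in your bookkeeping that, taken literally, would give the wrong asymptotic variance. You assert that replacing the group-specific normalizations $1/N_{\text{E}}$ and $1/N_{\text{O}}$ by $1/(N\pi_0)$ and $1/(N(1-\pi_0))$ costs only $o_P(N^{-1/2})$, and you then write the expansion with the summand $\varphi(D_i;\eta_0) - \tau_0$ and claim $\varphi(D;\eta_0) - \tau_0 = \IF(D)$ pointwise. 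Neither claim is correct. The difference $\bp{1/\hat\pi - 1/\pi_0}\cdot N^{-1}\sum_i \mathbf{1}_{\{G_i=\text{E}\}}\varphi_{\text{E}}$ is exactly $O_P(N^{-1/2})$, not $o_P(N^{-1/2})$, so it contributes to the first-order term; and the paper's EIF is $\IF(D) = \varphi(D;\eta_0) - \frac{\mathbf{1}_{\{G=\text{E}\}}}{\pi_0}\tau_0$, which differs from $\varphi(D;\eta_0) - \tau_0$ by the mean-zero but non-degenerate random variable $\tau_0\bp{\frac{\mathbf{1}_{\{G=\text{E}\}}}{\pi_0} - 1}$. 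Your parenthetical appeal to $\E\bs{\mathbf{1}_{\{G=\text{E}\}}/\pi_0} = 1$ only shows the two candidates share the same (zero) mean; their variances differ whenever $\tau_0 \neq 0$, which is fatal for a theorem whose entire content is that the asymptotic variance equals $\E\bs{\IF(D)^2}$.

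The fix is exactly the term you discarded: linearizing $1/\hat\pi$ around $1/\pi_0$ produces the additional first-order contribution $-\tau_0 \cdot N^{-1}\sum_i\bp{\frac{\mathbf{1}_{\{G_i=\text{E}\}}}{\pi_0} - 1}$, which converts the constant recentering $-\tau_0$ into the group-weighted recentering $-\frac{\mathbf{1}_{\{G=\text{E}\}}}{\pi_0}\tau_0$, after which the summand matches $\IF(D)$ genuinely pointwise. The paper sidesteps this entirely because its proof of \cref{thm:asymptotic-normality} already works with the correctly centered score $\psi(D;\eta_0)$, which equals $\IF(D)$ by construction, so \cref{thm:efficiency} really does follow by pure citation; your version re-derives the linearization and introduces the error at that stage.
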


\begin{proof}
  See \cref{sec:proof-of-thm-efficiency}.
\end{proof}

\section{Real Data Application} \label{sec:real-data-application}

Following the seminal approach of \cite{lalonde1986evaluating},
we exploit the experimental data to
construct a ``ground truth'' benchmark 
against which we evaluate 
the performance of different methods.
We analyze data from the JC program,
a large-scale randomized evaluation of job training programs
for disadvantaged youth in the United States
\citep{schochet2001national,schochet2008does}.

\paragraph{Design}

Our evaluation design proceeds in two steps. 
First, we randomly split the experimental dataset 
into two disjoint subsets: an ``Experimental Sample'' ($\text{E}$) 
where the long-term outcome ($Y$) is masked, 
and an ``Observational Sample'' ($\text{O}$) 
where the treatment assignment ($A$) is masked.\footnote{
  The dataset for this application
  is publicly available at the \texttt{causalweight} R package
  \citep{causalweight}.
}
This setup mimics the data combination problem 
while ensuring, by construction, 
that there is no selection bias between the two samples 
(i.e., comparability or transportability assumptions hold).
This allows us to isolate 
the performance of the estimators 
in handling the failure of the surrogacy assumption
or the presence of unobserved confounding.

\paragraph{Variables}

We define the treatment ($A$) 
as random assignment and 
the long-term outcomes ($Y$) 
as weekly earnings and proportion of weeks employed
in the fourth year. 
The surrogate vector ($S$)
includes earnings and proportion of weeks employed
in years 2 and 3. 
To address potential confounding, 
we select possession of a general education development (GED) degree ($W$) 
and English mother tongue ($Z$) as proxies.
These variables are determined pre-treatment 
and likely provide information about unobserved confounders,
such as cognitive skills or long-term career orientation.
We also include baseline covariates ($X$),
including age, gender, race, education level, and pre-program earnings.

\paragraph{Methods Compared}

We compare four methods for estimating the ATE.
The first method
is the RCT benchmark,
which estimates the ATE
directly from the experimental sample
using a linear regression of $Y$ on $(1, A, X)$.
The second method
is the standard surrogate index estimator
\citep{athey2025surrogate},
which first fits a regression of $Y$ on $(1, S, X)$
in the observational sample,
then predicts $Y$ in the experimental sample,
and finally estimates the ATE
using a linear regression of the predicted $Y$ on $(1, A, X)$.
The third method
is similar to the second method
but includes the proxy variables ($W$ and $Z$)
as additional covariates
in both regressions.
The fourth method
is our proposed method,
which first estimates an linear outcome bridge function
$h(W, S, X; \beta)$
in the observational sample,
then predicts $Y$ in the experimental sample
using the fitted $h$,
and finally estimates the ATE
using a linear regression of the predicted $Y$ on $(1, A, X)$.

\paragraph{Results}

\begin{table}[ht]
  \footnotesize
  \centering
  \caption{Estimates of the 4-year effects of the JC program on earnings and proportion of weeks employed.}
  \label{tab:jc-ate}
  
\begin{tabular}{lccc}
\toprule
 & Estimate & Standard Error & $p$-Value\\
\midrule
\addlinespace[0.3em]
\multicolumn{4}{l}{\textit{Panel A: Weekly Earnings}}\\
\hspace{1em}RCT & 15.30 & 5.49 & 0.005\\
\hspace{1em}Naive (1) & 7.05 & 3.70 & 0.057\\
\hspace{1em}Naive (2) & 7.30 & 3.69 & 0.048\\
\hspace{1em}Ours & 16.43 & 7.92 & 0.038\\
\addlinespace[0.3em]
\multicolumn{4}{l}{\textit{Panel B: Proportion of Weeks Employed}}\\
\hspace{1em}RCT & 2.87 & 1.14 & 0.012\\
\hspace{1em}Naive (1) & 0.80 & 0.67 & 0.233\\
\hspace{1em}Naive (2) & 0.84 & 0.67 & 0.210\\
\hspace{1em}Ours & 3.50 & 1.96 & 0.074\\
\bottomrule
\end{tabular}

  \caption*{
    \textit{Note:}
    The standard errors
    are computed using the plug-in sandwich estimator
    based on the joint GMM framework.
    The $p$-values
    are based on two-sided Wald tests
    using normal approximations.
  }
\end{table}

\cref{tab:jc-ate}
presents
the estimates of the treatment effects 
in the experimental sample.
While the true benchmark 
shows a significant positive effect,
the standard surrogate index method
substantially underestimates
the effect for both outcomes,
and including the proxies
$W$ and $Z$
as additional covariates
does not improve the estimates.
In contrast,
our proposed method
that accounts for unobserved confounding
via the proxy variables
yields estimates
much closer to the benchmark results,
albeit with larger standard errors.

\paragraph{Discussion}

As found in \cite{schochet2008does},
program participation
temporarily suppresses short-term earnings and employment
because participants
spend time in training instead of working.
Theerefore, the substantial underestimation by the standard surrogate index 
likely arises from unobserved confounding, 
where latent ability $U$ positively affects both short-term ($S$) and long-term ($Y$) outcomes.
This confounding creates a positive bias that attenuates the true negative structural relationship between 
$S$ and $Y$, where sacrificing short-term earnings represents an investment in future growth.
Consequently,
the standard method underestimates the true long-term effect of the program.
By leveraging proxies to adjust for $U$, 
our method distinguishes between low earnings due to investment versus low ability, 
thereby recovering the experimental benchmark.

\paragraph{Diagnostic Analysis}

We perform a diagnostic analysis 
on the experimental sample 
to investigate the validity 
of the standard surrogacy assumption 
and the role of unobserved confounding. 
The results are summarized in \cref{tab:jc-diagnostic}.

First,
since we have access to the long-term outcome $Y$
in the experimental sample,
we can directly test the surrogacy assumption
by regressing $Y$ on $(1, A, S, X)$
and examining the coefficient of $A$.
The first rows from Panel A
and Panel B
of \cref{tab:jc-diagnostic}
show that
the coefficient of $A$
is statistically significant
for both outcomes,
indicating
a potential violation
of the surrogacy assumption.
Second,
to assess whether the proxies
$W$ and $Z$
help mitigate unobserved confounding,
we further fit an IV regression,
where the first stage regresses $W$ on $(1, Z, A, S, X)$
and the second stage regresses $Y$
on $(1, A, S, X, \hat W)$,
with $Z$ acting as the instrument for $W$.
In this setup,
$\hat W$
captures the variation coming from the unobserved confounders $U$,
so including it in the regression
helps adjust for unobserved confounding.
The second rows from Panel A
and Panel B
of \cref{tab:jc-diagnostic}
show that
the coefficient of $A$
becomes statistically insignificant
for both outcomes
after adjusting for $\hat W$,
suggesting
that
the proxies
$W$ and $Z$
effectively mitigate
unobserved confounding,
which results in
the improved performance
in \cref{tab:jc-ate}.

\begin{table}[ht]
  \footnotesize
  \centering
  \caption{Diagnostic analysis on the experimental sample of the JC program.}
  \label{tab:jc-diagnostic}
  
\begin{tabular}{lccc}
\toprule
 & Coefficient on $A$ & Standard Error & $p$-Value\\
\midrule
\addlinespace[0.3em]
\multicolumn{4}{l}{\textit{Panel A: Weekly Earnings}}\\
\hspace{1em}OLS Regression & 8.603 & 4.286 & 0.045\\
\hspace{1em}IV Regression & -0.125 & 9.465 & 0.989\\
\addlinespace[0.3em]
\multicolumn{4}{l}{\textit{Panel B: Proportion of Weeks Employed}}\\
\hspace{1em}OLS Regression & 2.117 & 0.933 & 0.023\\
\hspace{1em}IV Regression & 0.546 & 1.848 & 0.768\\
\bottomrule
\end{tabular}

  \caption*{
    \textit{Note:}
    The standard errors
    are computed using heteroskedasticity-robust standard errors.
    The $p$-values
    are based on two-sided $t$-tests.
  }
\end{table}

\section{Conclusion} \label{sec:conclusion}

We study the challenging problem 
of identifying and estimating long-term treatment effects 
by combining experimental and observational data 
in the presence of unobserved confounding.
The standard surrogate index methods rely on
strong assumptions that rule out such confounding,
leading to potentially biased estimates in many empirical settings.

We develop new methods
for relaxing these stringent assumptions.
Our methods rely on proxy variables
that are informative about the unobserved confounders
and satisfy certain conditional independence properties.
We establish nonparametric identification results
based on bridge functions,
which generalize
existing identification strategies that assume no unobserved confounding.
We further propose
estimation and inference procedures
and establish their asymptotic properties,
including consistency, asymptotic normality, and semiparametric efficiency.

Applying our methods to
a real-world job training program evaluation,
we find that
the standard surrogate index approach
substantially underestimates
the true long-term effects,
while our proposed methods
that account for unobserved confounding
via proxy variables
yield estimates
much closer to the benchmark results
from the experimental data.

\phantomsection
\addcontentsline{toc}{section}{References}
\bibliographystyle{apalike}
\bibliography{references}

\clearpage
\appendix

\section{Proofs}

\subsection{Proof of \cref{thm:outcome-bridge-identification}} \label{sec:proof-of-thm-outcome-bridge-identification}

We prove each part of \cref{thm:outcome-bridge-identification} in order.

\paragraph{Part 1}

Suppose that the function $h_0$ satisfies \cref{eq:outcome-bridge-function}:
\begin{align*}
  \E\bs{Y \mid Z, S, X, G = \text{O}}
  = \E\bs{h_0(w, S, X) \mid Z, S, X, G = \text{O}}.
\end{align*}
Note that
\begin{align*}
  &\mathrel{\phantom{=}} \E\bs{Y \mid Z, S, X, G = \text{O}} \\
  &= \E\bs{\E\bs{Y \mid U, Z, S, X, G = \text{O}} \mid Z, S, X, G = \text{O}} \tag{Iterated expectation} \\
  &= \E\bs{\E\bs{Y \mid U, S, X, G = \text{O}} \mid Z, S, X, G = \text{O}} \tag{$Y \ind Z \mid (X, U, S, G)$} \\
  &= \E\bs{h_0(W, S, X) \mid Z, S, X, G = \text{O}} \tag{\cref{eq:outcome-bridge-function}} \\
  &= \E\bs{\E\bs{h_0(W, S, X) \mid U, Z, S, X, G = \text{O}} \mid Z, S, X, G = \text{O}} \tag{Iterated expectation} \\
  &= \E\bs{\E\bs{h_0(W, S, X) \mid U, X, G = \text{O}} \mid Z, S, X, G = \text{O}}. \tag{$W \ind (S, Z) \mid (X, U, G)$}
\end{align*}
By the completeness condition in \cref{asn:completeness-outcome},
comparing the third and last lines above
yields
\begin{align*}
  \E\bs{Y \mid U, S, X, G = \text{O}}
  = \E\bs{h_0(W, S, X) \mid U, X, G = \text{O}}.
\end{align*}
This completes the proof of Part 1.

\paragraph{Part 2}

Under \cref{asn:transportability},
we have
\begin{align*}
  \E\bs{Y \mid U, S, X, G = \text{E}}
  &= \E\bs{Y \mid U, S, X, G = \text{O}} \tag{$Y \ind G \mid (S, X, U)$} \\
  &= \int_{\mathcal{W}} h_0(w, S, X) \, d F_{W \mid U, X, G = \text{O}}(w) \tag{Part 1} \\
  &= \int_{\mathcal{W}} h_0(w, S, X) \, d F_{W \mid U, X, G = \text{E}}(w) \tag{$W \ind G \mid (X, U)$} \\
  &= \E\bs{h_0(W, S, X) \mid U, X, G = \text{E}}.
\end{align*}
This completes the proof of Part 2.

\paragraph{Part 3}

Note that \cref{asn:unconfoundedness-experimental,asn:no-direct-effect}
imply that
$A \ind Y(s) \mid (X, U, G = \text{E})$
for all $s \in \mathcal{S}$.
\cref{asn:unconfoundedness-experimental}
also implies that
$S \ind Y(s) \mid (A, X, U, G = \text{E})$.
Applying the contraction property of conditional independence
\citep[e.g., Section 1.1.5 of][]{pearl2009causality},
we have $(A, S) \ind Y(s) \mid (X, U, G = \text{E})$,
which further implies that $A \ind Y(s) \mid (S, X, U, G = \text{E})$
by the weak union property of conditional independence.
It follows that $A \ind Y(S) \mid (S, X, U, G = \text{E})$.
By the consistency assumption of potential outcomes,
we have 
\begin{align}
  A \ind Y \mid (S, X, U, G = \text{E}). \label{eq:independence-for-ob-proof}
\end{align}

Now we can identify the mean potential outcome as follows:
\begin{align}
  &\mathrel{\phantom{=}} \E\bs{Y(a) \mid X, G = \text{E}} \notag \\
  &= \E\bs{Y(a) \mid A = a, X, G = \text{E}} \label{eq:ob-proof-1} \\
  &= \E\bs{Y(S(a)) \mid A = a, X, G = \text{E}} \label{eq:ob-proof-2} \\
  &= \E\bs{Y(S) \mid A = a, X, G = \text{E}} \label{eq:ob-proof-3} \\
  &= \int_{\mathcal{S}} \E\bs{Y(s) \mid S = s, A = a, X, G = \text{E}}
      \, d F_{S \mid a, X, \text{E}}(s) \label{eq:ob-proof-4} \\
  &= \int_{\mathcal{S}} \int_{\mathcal{U}}
      \E\bs{Y(s) \mid U = u, S = s, A = a, X, G = \text{E}}
      \, d F_{U \mid s, a, X, G = \text{E}}(u)
      \, d F_{S \mid a, X, \text{E}}(s) \label{eq:ob-proof-5} \\
  &= \int_{\mathcal{S}} \int_{\mathcal{U}}
      \E\bs{Y \mid U = u, S = s, A = a, X, G = \text{E}}
      \, d F_{U \mid s, a, X, G = \text{E}}(u)
      \, d F_{S \mid a, X, \text{E}}(s) \label{eq:ob-proof-6} \\
  &= \int_{\mathcal{S}} \int_{\mathcal{U}}
      \E\bs{Y \mid U = u, S = s, X, G = \text{E}}
      \, d F_{U \mid s, a, X, G = \text{E}}(u)
      \, d F_{S \mid a, X, \text{E}}(s) \label{eq:ob-proof-7} \\
  &= \int_{\mathcal{S}} \int_{\mathcal{U}} \int_{\mathcal{W}}
      h_0(w, s, X)
      \, d F_{W \mid u, X, G = \text{E}}(w)
      \, d F_{U \mid s, a, X, G = \text{E}}(u)
      \, d F_{S \mid a, X, \text{E}}(s) \label{eq:ob-proof-8} \\
  &= \int_{\mathcal{S}} \int_{\mathcal{W}}
      h_0(w, s, X)
      \int_{\mathcal{U}}
      d F_{W \mid u, X, G = \text{E}}(w)
      \, d F_{U \mid s, a, X, G = \text{E}}(u)
      \, d F_{S \mid a, X, \text{E}}(s) \label{eq:ob-proof-9} \\
  &= \int_{\mathcal{S}} \int_{\mathcal{W}}
      h_0(w, s, X)
      \, d F_{W \mid u, s, a, X, G = \text{E}}(w)
      \, d F_{U \mid s, a, X, G = \text{E}}(u)
      \, d F_{S \mid a, X, \text{E}}(s) \label{eq:ob-proof-10} \\
  &= \int_{\mathcal{S}} \int_{\mathcal{W}}
      h_0(w, s, X)
      \, d F_{W \mid s, a, X, G = \text{E}}(w)
      \, d F_{S \mid a, X, \text{E}}(s) \label{eq:ob-proof-11} \\
  &= \E_{S \mid A = a, X, G = \text{E}} \E\bs{h_0(W, S, X) \mid S, A = a, X, G = \text{E}} \label{eq:ob-proof-12} \\
  &= \E\bs{h_0(W, S, X) \mid A = a, X, G = \text{E}}. \label{eq:ob-proof-13}
\end{align}
where
\cref{eq:ob-proof-1} follows from 
  $A \ind Y(a) \mid X, G = \text{E}$ (\cref{asn:unconfoundedness-experimental}),
\cref{eq:ob-proof-2} follows from the composition assumption of potential outcomes,
\cref{eq:ob-proof-3} follows from $S(a) = S$ when $A = a$,
\cref{eq:ob-proof-4,eq:ob-proof-5} follow from the law of total probability,
\cref{eq:ob-proof-6} uses the consistency assumption of potential outcomes,
\cref{eq:ob-proof-7} applies \cref{eq:independence-for-ob-proof},
\cref{eq:ob-proof-8} applies Part 2,
\cref{eq:ob-proof-9} exchanges the order of integration,
\cref{eq:ob-proof-10} follows from $W \ind (A, S) \mid (X, U, G = \text{E})$ (\cref{asn:availability-of-proxies}),
\cref{eq:ob-proof-11} follows from the law of total probability,
\cref{eq:ob-proof-12} rewrites the expression in expectation form,
and \cref{eq:ob-proof-13} simplifies the expression using the law of iterated expectation.
Finally, we have
\begin{align*}
  &\mathrel{\phantom{=}}
  \E\bs{\frac{A h_0(W, S, X)}{e_0(X)} - \frac{(1 - A) h_0(W, S, X)}{1 - e_0(X)} \;\middle|\; G = \text{E}} \\
  &= \E\bs{\E\bs{\frac{A h_0(W, S, X)}{e_0(X)} - \frac{(1 - A) h_0(W, S, X)}{1 - e_0(X)} \;\middle|\; X, G = \text{E}} \;\middle|\; G = \text{E}} \\
  &= \E\bs{\E\bs{h_0(W, S, X) \mid A = 1, X, G = \text{E}} - \E\bs{h_0(W, S, X) \mid A = 0, X, G = \text{E}} \;\middle|\; G = \text{E}} \\
  &= \tau_0.
\end{align*}
This completes the proof of Part 3.

\subsection{Proof of \cref{thm:surrogate-bridge-identification}} \label{sec:proof-of-thm-surrogate-bridge-identification}

We prove each part of \cref{thm:surrogate-bridge-identification} in order.

\paragraph{Part 1}

Suppose that the function $q_{a, 0}$ satisfies \cref{eq:surrogate-bridge-function}:
\begin{align*}
  \E\bs{q_{a, 0}(Z, S, X) \mid W, S, X, G = \text{O}}
  = \frac{f(W, S \mid A = a, X, G = \text{E}) f(X \mid G = \text{E})}{f(W, S, X \mid G = \text{O})}.
\end{align*}
The left-hand side of \cref{eq:surrogate-bridge-function} can be expanded as follows:
\begin{align*}
  &\mathrel{\phantom{=}} 
  \E\bs{q_{a, 0}(Z, S, X) \mid W, S, X, G = \text{O}} \\
  &= \E\bs{\E\bs{q_{a, 0}(Z, S, X) \mid U, W, S, X, G = \text{O}} \mid W, S, X, G = \text{O}} \tag{Iterated expectation} \\
  &= \E\bs{\E\bs{q_{a, 0}(Z, S, X) \mid U, S, X, G = \text{O}} \mid W, S, X, G = \text{O}}. \tag{$W \ind (S, Z) \mid (X, U, G)$}
\end{align*}
On the other hand,
the right-hand side of \cref{eq:surrogate-bridge-function}
can be rewritten as follows:
\begin{align*}
  &\mathrel{\phantom{=}} 
  \frac{f(w, s \mid a, x, \text{E}) f(x \mid \text{E})}{f(w, s, x \mid \text{O})} \\
  &= \int_{\mathcal{U}}
  \frac{f(w, u, s \mid a, x, \text{E}) f(x \mid \text{E})}{f(w, s, x \mid \text{O})}
  \, d u \tag{Total probability} \\
  &= \int_{\mathcal{U}}
  \frac{f(w \mid u, s, a, x, \text{E}) f(u, s \mid a, x, \text{E}) f(x \mid \text{E})}{f(w, s, x \mid \text{O})}
  \, d u \tag{Chain rule} \\
  &= \int_{\mathcal{U}}
  \frac{f(w \mid u, x, \text{E}) f(u, s \mid a, x, \text{E}) f(x \mid \text{E})}{f(w, s, x \mid \text{O})}
  \, d u \tag{$W \ind (A, S) \mid (X, U, G)$} \\
  &= \int_{\mathcal{U}}
  \frac{f(w \mid u, x, \text{O}) f(u, s \mid a, x, \text{E}) f(x \mid \text{E})}{f(w, s, x \mid \text{O})}
  \, d u \tag{$W \ind G \mid (X, U)$} \\
  &= \int_{\mathcal{U}}
  \frac{f(w \mid u, s, x, \text{O}) f(u, s \mid a, x, \text{E}) f(x \mid \text{E})}{f(w, s, x \mid \text{O})}
  \, d u. \tag{$W \ind S \mid (X, U, G)$} \\
  &= \int_{\mathcal{U}}
  \frac{f(u, s \mid a, x, \text{E}) f(x \mid \text{E})}{f(u, s, x \mid \text{O})}
  \cdot
  \frac{f(w \mid u, s, x, \text{O}) f(u, s, x \mid \text{O})}{f(w, s, x \mid \text{O})}
  \, d u \tag{Bayes' rule} \\
  &= \int_{\mathcal{U}}
  \frac{f(u, s \mid a, x, \text{E}) f(x \mid \text{E})}{f(u, s, x \mid \text{O})}
  f(w \mid u, s, x, \text{O})
  \, d u \tag{Simplification} \\
  &= \E\bs{\frac{f(U, S \mid A = a, X, G = \text{E}) f(X \mid G = \text{E})}{f(U, S, X \mid G = \text{O})} \;\middle|\; W = w, S = s, X = x, G = \text{O}}.
\end{align*}
By equating the two expansions above,
we have
\begin{align*}
  &\mathrel{\phantom{=}} \E\bs{\E\bs{q_{a, 0}(Z, S, X) \mid U, S, X, G = \text{O}} \mid W, S, X, G = \text{O}} \\
  &= \E\bs{\frac{f(U, S \mid A = a, X, G = \text{E}) f(X \mid G = \text{E})}{f(U, S, X \mid G = \text{O})} \;\middle|\; W, S, X, G = \text{O}}.
\end{align*}
Applying the completeness condition in \cref{asn:completeness-surrogate},
we obtain
\begin{align*}
  \E\bs{q_{a, 0}(Z, S, X) \mid U, S, X, G = \text{O}}
  = \frac{f(U, S \mid A = a, X, G = \text{E}) f(X \mid G = \text{E})}{f(U, S, X \mid G = \text{O})}.
\end{align*}
This completes the proof of Part 1.

\paragraph{Part 2}

The mean potential outcome $\E\bs{Y(a) \mid G = \text{E}}$
can be identified as follows:
\begin{align}
  &\mathrel{\phantom{=}} \E\bs{q_{a, 0}(Z, S, X) Y \mid G = \text{O}} \notag \\
  &= \E\bs{\E\bs{q_{a, 0}(Z, S, X) Y \mid U, S, X, G = \text{O}} \mid G = \text{O}} \label{eq:sb-proof-1} \\
  &= \E\bs{\E\bs{q_{a, 0}(Z, S, X) \mid U, S, X, G = \text{O}}
  \cdot \E\bs{Y \mid U, S, X, G = \text{O}} \mid G = \text{O}} \label{eq:sb-proof-2} \\
  &= \iiint
  \frac{f(u, s \mid a, x, \text{E}) f(x \mid \text{E})}{f(u, s, x \mid \text{O})}
  \E\bs{Y \mid u, s, x, \text{O}}
  f(u, s, x \mid \text{O}) \, d u \, d s \, d x \label{eq:sb-proof-3} \\
  &= \iiint
  f(u, s \mid a, x, \text{E}) f(x \mid \text{E})
  \E\bs{Y \mid u, s, x, \text{O}}
  \, d u \, d s \, d x \label{eq:sb-proof-4} \\
  &= \iiint
  f(u, s \mid a, x, \text{E}) f(x \mid \text{E})
  \E\bs{Y \mid u, s, x, \text{E}}
  \, d u \, d s \, d x \label{eq:sb-proof-5} \\
  &= \iiint
  f(u, s \mid a, x, \text{E}) f(x \mid \text{E})
  \E\bs{Y \mid u, s, x, a, \text{E}}
  \, d u \, d s \, d x \label{eq:sb-proof-6} \\
  &= \int
  \E\bs{Y \mid A = a, X = x, G = \text{E}} f(x \mid \text{E}) \, d x
  \label{eq:sb-proof-7} \\
  &= \int
  \E\bs{Y(a) \mid X = x, G = \text{E}} f(x \mid \text{E}) \, d x
  \label{eq:sb-proof-8} \\
  &= \E\bs{Y(a) \mid G = \text{E}}, \label{eq:sb-proof-9}
\end{align}
where
\cref{eq:sb-proof-1} applies the law of total expectation,
\cref{eq:sb-proof-2} follows from $Y \ind Z \mid (U, S, X, G = \text{O})$,
\cref{eq:sb-proof-3} plugs in the result from the first part,
\cref{eq:sb-proof-4} simplifies the expression,
\cref{eq:sb-proof-5} follows from $Y \ind G \mid (U, S, X)$,
\cref{eq:sb-proof-6} applies \cref{eq:independence-for-ob-proof},
\cref{eq:sb-proof-7} applies the law of total probability,
\cref{eq:sb-proof-8} uses \cref{asn:unconfoundedness-experimental},
and \cref{eq:sb-proof-9}
follows from the law of total probability.
This completes the proof of Part 2.

\subsection{Proof of \cref{thm:mutiply-robust-identification}} \label{sec:proof-of-thm-mutiply-robust-identification}

We prove this theorem by considering two separate cases:
\begin{enumerate}
  \item the outcome bridge function is correct, or
  \item the surrogate bridge functions are correct.
\end{enumerate}

\paragraph{Case 1: Outcome Bridge is Correct}

Suppose that $h = h_0$ satisfies \cref{eq:outcome-bridge-function}.
Here, $q_1$ and $q_0$ are arbitrary nuisance functions.
Then, the term $\varphi_{\text{O}}(D; \eta)$ 
in \cref{eq:mutiply-robust-identification}
is zero:
\begin{align*}
  &\mathrel{\phantom{=}} \E\bs{\bp{q_1 - q_0} \bp{Y - h_0} \mid G = \text{O}} \\
  &= \E\bs{\E\bs{\bp{q_1 - q_0} \bp{Y - h_0} \;\middle|\; Z, S, X, G = \text{O}} \;\middle|\; G = \text{O}} \tag{Iterated expectation} \\
  &= \E\bs{\bp{q_1 - q_0} \E\bs{Y - h_0 \;\middle|\; Z, S, X, G = \text{O}} \;\middle|\; G = \text{O}} \tag{Pulling out constant} \\
  &= \E\bs{\bp{q_1 - q_0} \cdot 0 \mid G = \text{O}} \tag{$h_0$ satisfies \cref{eq:outcome-bridge-function}} \\
  &= 0.
\end{align*}

We now focus on the remaining terms 
$\varphi_{\text{E}}(D; \eta)$ by considering the two scenarios.

First, suppose that the outcome regression model is correctly specified, i.e., $\bar{h} = \bar{h}_0$,
but the propensity score model $e$ is arbitrary.
Consider the term associated with treatment $A=1$:
\begin{align*}
  &\mathrel{\phantom{=}}
  \E\bs{\frac{\mathbf{1}_{\{G=\text{E}\}}}{\pi_0} \bp{ \frac{A(h_0 - \bar{h}_0(1, X))}{e(X)} + \bar{h}_0(1, X) }} \\
  &= \E\bs{ \frac{A(h_0 - \bar{h}_0(1, X))}{e(X)} + \bar{h}_0(1, X) \;\middle|\; G=\text{E} } \\
  &= \E\bs{ \frac{\E[A(h_0 - \bar{h}_0(1, X)) \mid X, G=\text{E}]}{e(X)} + \bar{h}_0(1, X) \;\middle|\; G=\text{E} }.
\end{align*}
Focusing on the numerator of the first term inside the expectation:
\begin{align*}
  &\mathrel{\phantom{=}}
  \E\bs{A(h_0 - \bar{h}_0(1, X)) \mid X, G=\text{E}} \\
  &= \E\bs{ A \bp{ \E\bs{h_0 \mid A=1, X, G=\text{E}} - \bar{h}_0(1, X) } \mid X, G=\text{E} } \\
  &= \E\bs{ A \bp{ \bar{h}_0(1, X) - \bar{h}_0(1, X) } \mid X, G=\text{E} } \\
  &= 0.
\end{align*}
Thus, the bias correction term vanishes regardless of $e(X)$.
By symmetry, the $A=0$ term also has a zero bias correction term.
By \cref{thm:outcome-bridge-identification} (Part 3), 
$\E[\bar{h}_0(a, X) \mid G=\text{E}] = \E[Y(a)]$.
Therefore, $\E[\varphi_{\text{E}}(D; \eta)] = \tau_0$.

Second, suppose that the propensity score model is correctly specified, i.e., $e = e_0$,
but $\bar{h}$ is arbitrary.
Then, we have
\begin{align*}
  \E\bs{\varphi_{\text{E}}(D; \eta)}
  &= \E\bs{ \frac{A - e_0(X)}{e_0(X)(1 - e_0(X))} (h_0 - \bar{h}) + (\bar{h}(1, X) - \bar{h}(0, X)) \;\middle|\; G=\text{E} }.
\end{align*}
Now,
the coefficient for the arbitrary function $\bar{h}$ is
\begin{align*}
  &\mathrel{\phantom{=}} \E\bs{ -\frac{A - e_0(X)}{e_0(X)(1 - e_0(X))} \bar{h}(A, X) + (\bar{h}(1, X) - \bar{h}(0, X)) \;\middle|\; G=\text{E} } \\
  &= \E\bs{ \E\bs{ -\frac{A}{e_0(X)} \bar{h}(1, X) + \frac{1-A}{1-e_0(X)} \bar{h}(0, X) + \bar{h}(1, X) - \bar{h}(0, X) \;\middle|\; X, G=\text{E} } \;\middle|\; G=\text{E} } \\
  &= \E\bs{ -\frac{e_0(X)}{e_0(X)} \bar{h}(1, X) + \frac{1-e_0(X)}{1-e_0(X)} \bar{h}(0, X) + \bar{h}(1, X) - \bar{h}(0, X) \;\middle|\; G=\text{E} } \\
  &= 0.
\end{align*}
The arbitrary $\bar{h}$ cancels out. 
The remaining term involving $h_0$ can be simplified as follows:
\begin{align*}
  &\mathrel{\phantom{=}}
  \E\bs{ \frac{A - e_0(X)}{e_0(X)(1 - e_0(X))} h_0 \;\middle|\; G=\text{E} } \\
  &= \E\bs{ \frac{A h_0}{e_0(X)} - \frac{(1-A)h_0}{1-e_0(X)} \;\middle|\; G=\text{E} } \\
  &= \E\bs{ \E[h_0 \mid A=1, X, G = \text{E}] - \E[h_0 \mid A=0, X, G = \text{E}] \;\middle|\; G=\text{E} } \\
  &= \tau_0.
\end{align*}

In summary,
under $h=h_0$, 
if either $\bar{h}=\bar{h}_0$ or $e=e_0$, 
we have $\E\bs{\varphi(D; \eta)} = \tau_0$.

\paragraph{Case 2: Surrogate Bridges are Correct}

Assume that the surrogate bridges are correctly specified, 
i.e., $q_a = q_{a,0}$ satisfies \cref{eq:surrogate-bridge-function}.
Let $h$ be an arbitrary outcome bridge function.
Let $\tau_h$ denote the difference-in-means
on this arbitrary function $h$, i.e.,
\begin{align*}
  \tau_h \equiv \E[h(W, S, X) \mid A = 1, G=\text{E}] - \E[h(W, S, X) \mid A = 0, G=\text{E}].
\end{align*}

We decompose $\varphi_{\text{O}}(D; \eta)$ into parts involving $Y$ and $h$:
\begin{align*}
  \E\bs{\varphi_{\text{O}}(D; \eta)}
  &= \E\bs{ (q_{1,0} - q_{0,0})Y \mid G=\text{O} } - \E\bs{ (q_{1,0} - q_{0,0})h \mid G=\text{O} }.
\end{align*}
Using \cref{thm:surrogate-bridge-identification}, the first term identifies the target ATE:
\begin{align*}
  \E\bs{ (q_{1,0} - q_{0,0})Y \mid G=\text{O} } = \tau_0.
\end{align*}
For the second term, since $q_{a,0}$ satisfies \cref{eq:surrogate-bridge-function}, 
it reweights the observational distribution of any variable (including $h$) 
to match the experimental distribution.
Thus, we have
\begin{align*}
  \E\bs{ q_{a,0} h \mid G=\text{O} } = \E\bs{ h(W, S, X) \mid A=a, G=\text{E} }
\end{align*}
Therefore, the observational component simplifies to:
\begin{align*}
  \E\bs{\varphi_{\text{O}}(D; \eta)}
  = \tau_0 - \tau_h.
\end{align*}

Now, we analyze the experimental component $\varphi_{\text{E}}(D; \eta)$
by considering the two scenarios.

First, suppose that the propensity score model is correctly specified, i.e., $e = e_0$,
but $\bar{h}$ is arbitrary.
Then, we have
\begin{align*}
  \E\bs{\varphi_{\text{E}}(D; \eta)}
  = \E\bs{ \frac{A - e_0(X)}{e_0(X)(1 - e_0(X))} (h - \bar{h}) + (\bar{h}(1, X) - \bar{h}(0, X)) \;\middle|\; G=\text{E} }.
\end{align*}
As shown in Case 1,
the coefficient for the arbitrary function $\bar{h}$ is zero:
\begin{align*}
  \E\bs{ -\frac{A - e_0(X)}{e_0(X)(1 - e_0(X))} \bar{h}(A, X) + (\bar{h}(1, X) - \bar{h}(0, X)) \;\middle|\; G=\text{E} }
  = 0.
\end{align*}
Thus, the arbitrary $\bar{h}$ cancels out.
The remaining term involving $h$ can be simplified as follows:
\begin{align*}
  &\mathrel{\phantom{=}}
  \E\bs{ \frac{A - e_0(X)}{e_0(X)(1 - e_0(X))} h \;\middle|\; G=\text{E} } \\
  &= \E\bs{ \frac{A h}{e_0(X)} - \frac{(1-A)h}{1-e_0(X)} \;\middle|\; G=\text{E} } \\
  &= \E\bs{ \E[h \mid A=1, X, G = \text{E}] - \E[h \mid A=0, X, G = \text{E}] \;\middle|\; G=\text{E} } \\
  &= \tau_h.
\end{align*}
Thus, under $q_a = q_{a,0}$ and $e = e_0$,
we have
\begin{align*}
  \E\bs{\varphi(D; \eta)}
  = \E\bs{\varphi_{\text{O}}(D; \eta)} + \E\bs{\varphi_{\text{E}}(D; \eta)}
  = (\tau_0 - \tau_h) + \tau_h 
  = \tau_0.
\end{align*}

Second, suppose that $\bar{h} = \bar{h}^{(h)}$
($h$ is potentially misspecified)
and that the propensity score model $e$ is arbitrary.
Following the derivation in Case 1,
by replacing $h_0$ with $h$ and $\bar{h}_0$ with this specific $\bar{h}$,
we see that the bias correction term vanishes:
\begin{align*}
  \E\bs{ \frac{A(h - \bar{h}(1, X))}{e(X)} \;\middle|\; X, G=\text{E} }
  = \frac{1}{e(X)} \underbrace{\E\bs{ A(h - \E[h \mid 1, X, G = \text{E}]) \mid X, G=\text{E} }}_{=0} = 0.
\end{align*}
By symmetry, the $A=0$ term also has a zero bias correction term.
Therefore, we have
\begin{align*}
  \E\bs{\varphi_{\text{E}}(D; \eta)}
  = \E[\bar{h}(1, X) - \bar{h}(0, X) \mid G=\text{E}]
  = \tau_h.
\end{align*}
Thus, under $q_a = q_{a,0}$ and $\bar{h} = \E[h \mid A, X, G=\text{E}]$,
we have
\begin{align*}
  \E\bs{\varphi(D; \eta)}
  = \E\bs{\varphi_{\text{O}}(D; \eta)} + \E\bs{\varphi_{\text{E}}(D; \eta)}
  = (\tau_0 - \tau_h) + \tau_h 
  = \tau_0.
\end{align*}

In summary,
under $q_a = q_{a,0}$ for $a \in \{0, 1\}$,
if either $e = e_0$ or $\bar{h} = \E[h \mid A, X, G=\text{E}]$,
we have $\E\bs{\varphi(D; \eta)} = \tau_0$.

\subsection{Proof of \cref{thm:consistency}} \label{sec:proof-of-thm-consistency}

We establish consistency under four different scenarios 
corresponding to the multiply robust identification conditions.
Each part follows the same general approach:
\begin{enumerate}
  \item Define an \emph{oracle estimator} that uses the correctly specified nuisance functions in place of the estimated
ones.
  \item Show the oracle estimator is consistent by the weak law of large numbers (WLLN).
  \item Bound the difference between the estimated and oracle estimators 
    using the $L_2$ convergence rates from \cref{asn:convergence-rates}.
\end{enumerate}
The key insight is that under any of the four identification conditions,
the oracle estimator has zero bias,
and the estimation error vanishes as the nuisance estimators converge to their limits.

\paragraph{Part 1}

Suppose that $\tilde{h} = h_0$ and $\tilde{\bar{h}}^{(h_0)} = \bar{h}_0$.
Consider a fixed treatment level $a \in \{0, 1\}$.
By the WLLN, the oracle estimator 
for the mean potential outcome under treatment $a$ is consistent:
\begin{align*}
  \frac{1}{N_{\text{E}}} \sum_{k = 1}^K \sum_{i \in \mathcal{I}_{\text{E},k}} \bar{h}_0(a, X_i)
  \xrightarrow{p}
  \E\bs{\bar{h}_0(a, X) \mid G = \text{E}}.
\end{align*}
Then, the difference between the estimated component for treatment $a$ and its oracle counterpart is
\begin{align*}
  \frac{1}{K} \sum_{k = 1}^K \Delta^{\text{OB-OR}}_{a,k},
\end{align*}
where
\begin{align*}
  \Delta^{\text{OB-OR}}_{a,k} \equiv \frac{1}{|\mathcal{I}_{\text{E},k}|} \sum_{i \in \mathcal{I}_{\text{E},k}}
  \bp{\hat{\bar{h}}_k(a, X_i) - \bar{h}_0(a, X_i)}.
\end{align*}
We analyze the convergence of $\Delta^{\text{OB-OR}}_{a,k}$ for a fixed $k$.
Conditioning on the training data $\mathcal{D}{-k}$,
we decompose $\Delta^{\text{OB-OR}}{a,k}$ into a bias term and a
stochastic term:
\begin{align*}
  \Delta^{\text{OB-OR}}_{a,k} = \underbrace{\E\bs{\Delta^{\text{OB-OR}}_{a,k} \mid \mathcal{D}_{-k}}}_{\text{Bias Term}} +
\underbrace{\bp{\Delta^{\text{OB-OR}}_{a,k} - \E\bs{\Delta^{\text{OB-OR}}_{a,k} \mid \mathcal{D}_{-k}}}}_{\text{Stochastic Term}}.
\end{align*}
For the bias term, applying Jensen's inequality yields
\begin{align*}
  \abs{\E\bs{\Delta^{\text{OB-OR}}_{a,k} \mid \mathcal{D}_{-k}}}
  &= \abs{\E\bs{\hat{\bar{h}}_k(a, X) - \bar{h}_0(a, X) \;\middle|\; \mathcal{D}_{-k}}} \\
  &\leq \E\bs{\abs{\hat{\bar{h}}_k(a, X) - \bar{h}_0(a, X)} \;\middle|\; \mathcal{D}_{-k}} \\
  &\leq \|\hat{\bar{h}}_k(a, \cdot) - \bar{h}_0(a, \cdot)\|_2.
\end{align*}
To bound this norm, we use the triangle inequality and Lipschitz continuity:
\begin{align*}
  &\mathrel{\phantom{=}}
  \|\hat{\bar{h}}_k(a, \cdot) - \bar{h}_0(a, \cdot)\|_2 \\
  &\leq \|\hat{\bar{h}}_k(a, \cdot) - \tilde{\bar{h}}^{(\hat{h}_k)}(a, \cdot)\|_2 + \|\tilde{\bar{h}}^{(\hat{h}_k)}(a, \cdot) -
  \tilde{\bar{h}}^{(h_0)}(a, \cdot)\|_2 + \|\tilde{\bar{h}}^{(h_0)}(a, \cdot) - \bar{h}_0(a, \cdot)\|_2 \\
  &\precsim \|\hat{\bar{h}}_k - \tilde{\bar{h}}^{(\hat{h}_k)}\|_2 + \|\hat{h}_k - h_0\|_2,
\end{align*}
where the second inequality follows from $\|\tilde{\bar{h}}^{(g_1)} - \tilde{\bar{h}}^{(g_2)}\|_2 \leq C \|g_1 - g_2\|_2$
for some constant $C > 0$
and the fact that $\tilde{\bar{h}}^{(h_0)} = \bar{h}_0$ by assumption.
By \cref{asn:convergence-rates}, the first term is $O_P(\delta_{\bar{h}, N})$ and the second term is $O_P(\delta_{h, N})$.
Thus, the bias term is $O_P(\delta_{\bar{h}, N} + \delta_{h, N})$.
For the stochastic term, we apply Chebyshev's inequality.
Conditional on $\mathcal{D}_{-k}$, the variance is bounded by the squared $L_2$ error:
\begin{align*}
  \Var\bs{\Delta^{\text{OB-OR}}_{a,k} \mid \mathcal{D}_{-k}}
  &= \frac{1}{|\mathcal{I}_{\text{E},k}|} \Var\bs{\hat{\bar{h}}_k(a, X) - \bar{h}_0(a, X) \;\middle|\; \mathcal{D}_{-k}} \\
  &\leq \frac{1}{|\mathcal{I}_{\text{E},k}|} \E\bs{\bp{\hat{\bar{h}}_k(a, X) - \bar{h}_0(a, X)}^2 \;\middle|\; \mathcal{D}_{-k}} \\
  &= \frac{\|\hat{\bar{h}}_k(a, \cdot) - \bar{h}_0(a, \cdot)\|_2^2}{|\mathcal{I}_{\text{E},k}|}.
\end{align*}
Thus, the stochastic term is $O_P((\delta_{\bar{h}, N} + \delta_{h, N}) / \sqrt{N})$.
Combining these results, we have
\begin{align*}
  \Delta^{\text{OB-OR}}_{a,k} = O_P(\delta_{\bar{h}, N} + \delta_{h, N}).
\end{align*}
Since $\delta_{\bar{h}, N} \to 0$ and $\delta_{h, N} \to 0$, the difference converges to zero in probability. It follows that
$\hat{\tau}_{\text{OB-OR}} \xrightarrow{p} \tau_0$.

\paragraph{Part 2}

Suppose that $\tilde{h} = h_0$ and $\tilde{e} = e_0$.
The consistency of the oracle estimator follows from the WLLN.
The difference between the estimated and oracle estimators for the $A=1$ term is the average of
\[
  \Delta^{\text{OB-IPW}}_{1, k} \equiv \frac{1}{|\mathcal{I}_{\text{E},k}|} \sum_{i \in \mathcal{I}_{\text{E},k}} A_i
  \bp{\frac{\hat{h}_k}{\hat{e}_k} - \frac{h_0}{e_0}}.
\]
Following the same bias-variance decomposition strategy as in Part 1,
and using the identity 
\[
  \frac{\hat{a}}{\hat{b}} - \frac{a}{b} = \frac{1}{\hat{b}}(\hat{a}-a) - \frac{a}{\hat{b}b}(\hat{b}-b),
\]
we can bound the bias term by $O_P(\|\hat{h}_k - h_0\|_2 + \|\hat{e}_k - e_0\|_2)$
and the stochastic term by $O_P((\|\hat{h}_k - h_0\|_2 + \|\hat{e}_k - e_0\|_2)/\sqrt{N})$.
Given \cref{asn:convergence-rates}, the total error is $O_P(\delta_{h,N} + \delta_{e,N})$, which vanishes asymptotically.
The same applies to the $A=0$ term. Thus, $\hat{\tau}_{\text{OB-IPW}} \xrightarrow{p} \tau_0$.

\paragraph{Part 3}

Suppose that $\tilde{q}_a = q_{a,0}$ for $a \in \{0, 1\}$.
By arguments analogous to previous parts,
the estimation error for the estimator is controlled by the convergence rate of the surrogate bridge functions.
Specifically, the error is of order $O_P(\delta_{q,N})$.
Since $\delta_{q,N} \to 0$ (\cref{asn:convergence-rates}),
it follows that $\hat{\tau}_{\text{SB}} \xrightarrow{p} \tau_0$.

\paragraph{Part 4}

Suppose that the limit functions $\tilde{\eta} = (\tilde{e}, \tilde{h}, \tilde{\bar{h}}, \tilde{q}_0, \tilde{q}_1)$ satisfy at
least one of the identification conditions specified in \cref{thm:mutiply-robust-identification}.
Let $\tilde{\tau}_{\text{MR}}$ be the oracle estimator constructed using these limit functions.
By \cref{thm:mutiply-robust-identification} and the WLLN, $\tilde{\tau}_{\text{MR}}$ is consistent for $\tau_0$.
The difference between the feasible estimator and the oracle is given by
\begin{align*}
  \hat{\tau}_{\text{MR}} - \tilde{\tau}_{\text{MR}} = \frac{1}{K} \sum_{k=1}^K \Delta^{\text{MR}}_{k}, \quad \text{where} \quad
  \Delta^{\text{MR}}_{k} \equiv \frac{1}{|\mathcal{I}_{k}|} \sum_{i \in \mathcal{I}_{k}} \bp{\varphi(D_i; \hat{\eta}_k) -
\varphi(D_i; \tilde{\eta})}.
\end{align*}
The influence function $\varphi$ consists of linear combinations and products of the nuisance functions.
Using the algebraic identity $\hat{a}\hat{b} - \tilde{a}\tilde{b} = (\hat{a} - \tilde{a})\hat{b} + \tilde{a}(\hat{b} -
\tilde{b})$ and assuming the propensity scores are bounded away from zero and one, the difference $\varphi(D_i; \hat{\eta}_k) -
\varphi(D_i; \tilde{\eta})$ can be bounded by a linear combination of the absolute errors of individual nuisance components.
We again apply the bias-variance decomposition conditioning on $\mathcal{D}_{-k}$.
The bias term is bounded by the sum of the $L_2$ norms of the estimation errors:
\begin{align*}
  \abs{\E\bs{\Delta^{\text{MR}}_{k} \mid \mathcal{D}_{-k}}}
  \precsim \|\hat{e}_k - \tilde{e}\|_2 + \|\hat{h}_k - \tilde{h}\|_2 + \|\hat{\bar{h}}_k - \tilde{\bar{h}}\|_2 + \sum_{a}
\|\hat{q}_{a,k} - \tilde{q}_a\|_2.
\end{align*}
Similarly, by Chebyshev's inequality, the stochastic term is of the order $O_P((\delta_{e,N} + \delta_{h,N} +
\delta_{\bar{h},N} + \delta_{q,N})/\sqrt{N})$.
Combining these, the total estimation error is $O_P(\delta_{e,N} + \delta_{h,N} + \delta_{\bar{h},N} + \delta_{q,N})$.
Under \cref{asn:convergence-rates}, all these terms converge to zero, implying $\hat{\tau}_{\text{MR}} \xrightarrow{p} \tau_0$.

\subsection{Proof of \cref{thm:asymptotic-normality}} \label{sec:proof-of-thm-asymptotic-normality}

Define $\psi(D; \eta)$ as
\begin{align*}
  \psi(D; \eta) \equiv \frac{\mathbf{1}_{\{G=\text{E}\}}}{\pi_0} \psi_{\text{E}}(D; \eta) +
\frac{\mathbf{1}_{\{G=\text{O}\}}}{1 - \pi_0} \psi_{\text{O}}(D; \eta),
\end{align*}
where 
\begin{align*}
  \psi_{\text{E}}(D; \eta) &\equiv \frac{(A - e(X)) (h(W, S, X) - \bar{h}(A, X))}{e(X)(1 - e(X))} + \bar{h}(1, X) - \bar{h}(0,
X) - \tau_0, \\
  \psi_{\text{O}}(D; \eta) &\equiv (q_{1}(Z, S, X) - q_{0}(Z, S, X)) (Y - h(W, S, X)).
\end{align*}
Note that $\hat{\tau}_{\text{MR}}$ satisfies 
$\hat{\tau}_{\text{MR}} - \tau_0 = \frac{1}{N} \sum_{k=1}^K \sum_{i \in \mathcal{I}_k} \psi(D_i; \hat{\eta}_k) + o_p(N^{-1/2})$.
We decompose the scaled difference into a leading oracle term and a remainder term:
\begin{align*}
  \sqrt{N}(\hat{\tau}_{\text{MR}} - \tau_0)
  = \underbrace{\frac{1}{\sqrt{N}} \sum_{i=1}^N \psi(D_i; \eta_0)}_{\text{(I) Main Term}}
   + \underbrace{\frac{1}{\sqrt{N}} \sum_{k=1}^K \sum_{i \in \mathcal{I}_k} \bp{\psi(D_i; \hat{\eta}_k) - \psi(D_i;
     \eta_0)}}_{\text{(II) Remainder Term}}.
\end{align*}

Since $\eta_0$ consists of the true functions, 
$\E[\psi(D; \eta_0)] = 0$. 
By the central limit theorem (CLT),
term (I) converges in distribution
to $\mathcal{N}(0, V)$, where $V = \Var(\psi(D; \eta_0))$. 
It remains to show that term (II) is $o_p(1)$.

For each fold $k$, define the average estimation error on the validation set as:
\begin{align*}
  \Delta_k^{\text{MR}} \equiv \frac{1}{|\mathcal{I}_k|} \sum_{i\in \mathcal{I}_k} \bp{\psi(D_i; \hat{\eta}_k) - \psi(D_i;
\eta_0)}.
\end{align*}
Conditioning on the training data $\mathcal{D}_{-k}$, 
we decompose $\Delta_k^{\text{MR}}$ into a bias term $B_k \equiv
\E[\Delta_k^{\text{MR}} \mid \mathcal{D}_{-k}]$ and a stochastic deviation term $(\Delta_k^{\text{MR}} - B_k)$.

First, for the stochastic deviation term, the conditional variance is bounded by the squared estimation errors:
\begin{align*}
  \Var\bs{\sqrt{N}(\Delta_k^{\text{MR}} - B_k) \;\middle|\; \mathcal{D}_{-k}}
  &\asymp \E\bs{\bp{\psi(D; \hat{\eta}_k) - \psi(D; \eta_0)}^2 \;\middle|\; \mathcal{D}_{-k}} \\
  &\precsim \delta_{e,N}^2 + \delta_{h,N}^2 + \delta_{\bar{h},N}^2 + \delta_{q,N}^2.
\end{align*}
Under \cref{asn:convergence-rates}, this vanishes, 
so $\sqrt{N}(\Delta_k^{\text{MR}} - B_k) = o_p(1)$.

Next, we analyze the bias term $B_k$. 
Using the independence of samples across groups, we can write $B_k = B_{\text{E}, k} +
B_{\text{O}, k}$, where $B_{g, k}$ corresponds to the bias from group $g$.

Consider the observational bias $B_{\text{O}, k}$ first.
Since the true outcome bridge function satisfies $\E[Y - h_0 \mid Z, S, X, G=\text{O}] = 0$, the expectation of the true
influence function is zero.
We can thus rewrite the bias as:
\begin{align*}
  B_{\text{O}, k}
  &= \E\bs{\psi_{\text{O}}(D; \hat{\eta}_k) - \psi_{\text{O}}(D; \eta_0) \;\middle|\; G=\text{O}, \mathcal{D}_{-k}} \\
  &= \E\bs{(\hat{q}_{1,k} - \hat{q}_{0,k})(Y - \hat{h}_k) \;\middle|\; G=\text{O}, \mathcal{D}_{-k}}.
\end{align*}
We decompose this term by adding and subtracting the true treatment difference $(q_{1,0} - q_{0,0})$ interacting with the
outcome bridge error $(\hat{h}_k - h_0)$:
\begin{align*}
  B_{\text{O}, k}
  &= -\underbrace{\E\bs{(q_{1,0} - q_{0,0})(\hat{h}_k - h_0) \;\middle|\; G=\text{O},
\mathcal{D}_{-k}}}_{L_{\text{O},k}} \\
  &\quad - \underbrace{\E\bs{\bp{(\hat{q}_{1,k} - q_{1,0}) - (\hat{q}_{0,k} - q_{0,0})} (\hat{h}_k - h_0)
\;\middle|\; G=\text{O}, \mathcal{D}_{-k}}}_{R_{\text{O}, k}}.
\end{align*}
where $L_{\text{O}, k}$ is the leading term
and $R_{\text{O}, k}$ is the remainder term.

The remainder term $R_{\text{O}, k}$ 
involves the product of estimation errors from both the surrogate and outcome bridge functions.
Note that since bridge estimation solves conditional moment equations,
\[
  T(h) = \E[Y \mid Z,S,X,G=\text{O}], \quad
  T^*(q_a) = \E[h \mid A=a,S,X,G=\text{E}],
\]
we can exploit the operators $T$ and $T^*$
to relate the errors in different metrics.
Specifically, we can bound it using the Cauchy-Schwarz inequality in two alternative ways.
First, applying the law of iterated expectations and the definition of the operator $T$,
we have
\begin{align*}
  &\mathrel{\phantom{=}}
  \abs{\E\bs{(\hat{q}_{a,k} - q_{a,0}) (\hat{h}_k - h_0) \;\middle|\; G=\text{O}, \mathcal{D}_{-k}}} \\
  &= \abs{\E\bs{(\hat{q}_{a,k} - q_{a,0}) \E\bs{\hat{h}_k - h_0 \mid Z, S, X, G=\text{O}, \mathcal{D}_{-k}} \;\middle|\; G=\text{O}, \mathcal{D}_{-k}}} \\
  &= \abs{\E\bs{(\hat{q}_{a,k} - q_{a,0}) T(\hat{h}_k - h_0)(Z,S,X) \;\middle|\; G=\text{O}, \mathcal{D}_{-k}}} \\
  &\leq \|\hat{q}_{a,k} - q_{a,0}\|_2 \|T(\hat{h}_k - h_0)\|_2.
\end{align*}
Similarly, using the adjoint operator $T^*$, we have
\begin{align*}
  \abs{\E\bs{(\hat{q}_{a,k} - q_{a,0}) (\hat{h}_k - h_0) \;\middle|\; G=\text{O}, \mathcal{D}_{-k}}}
  \leq \|T^*(\hat{q}_{a,k} - q_{a,0})\|_2 \|\hat{h}_k - h_0\|_2.
\end{align*}
Combining these bounds and taking the minimum over the two inequalities, we obtain
\begin{align*}
  \abs{R_{\text{O}, k}}
  &\leq \sum_{a \in \{0,1\}} \min \bc{ \|\hat{q}_{a,k} - q_{a,0}\|_2 \|T(\hat{h}_k - h_0)\|_2, \;
\|T^*(\hat{q}_{a,k} - q_{a,0})\|_2 \|\hat{h}_k - h_0\|_2 }.
\end{align*}
By the definitions in \cref{asn:convergence-rates},
the first term in the minimum equals $\delta_{q,N} \rho_{h,N}$
and the second equals $\rho_{q,N} \delta_{h,N}$.
By \cref{asn:product-rates},
$\min\{\delta_{q,N} \rho_{h,N}, \rho_{q,N} \delta_{h,N}\} = o_p(N^{-1/2})$.
Thus, $\abs{R_{\text{O}, k}} = o_p(N^{-1/2})$.

Now consider the experimental bias $B_{\text{E}, k}$.
The component $\psi_{\text{E}}(D; \hat{\eta}_k)$ corresponds to 
an augemented inverse probability (AIPW) estimator where the outcome variable is the
estimated bridge function $\hat{h}_k(W, S, X)$.
Therefore, the bias depends on the error of $\hat{\bar{h}}_k$ in approximating the conditional mean of this pseudo-outcome,
denoted by $\bar{h}^{(\hat{h}_k)}(A, X) \equiv \E[\hat{h}_k \mid A, X, G=\text{E}]$.
Using the standard double robustness argument, we decompose the bias as:
\begin{align*}
  B_{\text{E}, k}
  &= \E\bs{\psi_{\text{E}}(D; \hat{\eta}_k) \;\middle|\; G=\text{E}, \mathcal{D}_{-k}} \\
  &= \underbrace{\E\bs{\bar{h}^{(\hat{h}_k)}(1,X) - \bar{h}^{(\hat{h}_k)}(0,X) - \tau_0 \;\middle|\; G=\text{E},
\mathcal{D}_{-k}}}_{L_{\text{E}, k}} + R_{\text{E}, k},
\end{align*}
where $L_{\text{E}, k}$ is the leading term if the propensity score were known, and
the remainder $R_{\text{E}, k}$ captures the second-order error from the AIPW procedure:
\begin{align*}
  R_{\text{E}, k}
  &= \E\bs{ \frac{e_0(X) - \hat{e}_k(X)}{\hat{e}_k(X)} \bp{\bar{h}^{(\hat{h}_k)}(1, X) - \hat{\bar{h}}_k(1, X)} \;\middle|\;
G=\text{E}, \mathcal{D}_{-k} } \\
  &\quad + \E\bs{ \frac{\hat{e}_k(X) - e_0(X)}{1 - \hat{e}_k(X)} \bp{\bar{h}^{(\hat{h}_k)}(0, X) - \hat{\bar{h}}_k(0, X)}
\;\middle|\; G=\text{E}, \mathcal{D}_{-k} }.
\end{align*}
Assuming the propensity scores are bounded away from zero and one, and applying the Cauchy-Schwarz inequality, we have:
\begin{align*}
  \abs{R_{\text{E}, k}} \precsim \|\hat{e}_k - e_0\|_2 \bp{ \|\hat{\bar{h}}_k(1, \cdot) - \bar{h}^{(\hat{h}_k)}(1, \cdot)\|_2 +
\|\hat{\bar{h}}_k(0, \cdot) - \bar{h}^{(\hat{h}_k)}(0, \cdot)\|_2 } = O_p(\delta_{e, N} \delta_{\bar{h}, N}).
\end{align*}
By \cref{asn:product-rates}, this term is $o_p(N^{-1/2})$.

For the leading term $L_{\text{E}, k}$, we can rewrite it using the conditional expectation:
\begin{align*}
  L_{\text{E}, k}
  &= \E\bs{\E[\hat{h}_k \mid A=1, X, G=\text{E}] - \E[\hat{h}_k \mid A=0, X, G=\text{E}] \;\middle|\; G=\text{E}} - \tau_0 \\
  &= \sum_{a \in \{0,1\}} (-1)^{1-a} \E\bs{\hat{h}_k - h_0 \mid A=a, G=\text{E}},
\end{align*}
since $\tau_0 = \E[h_0 \mid A=1, G=\text{E}] - \E[h_0 \mid A=0, G=\text{E}]$.
Using the identification property of the surrogate bridge functions $q_{a, 0}$, we have:
\begin{align*}
  \E\bs{\hat{h}_k - h_0 \mid A=a, G=\text{E}}
  &= \E\bs{q_{a,0}(Z,S,X) (\hat{h}_k - h_0) \mid G=\text{O}}.
\end{align*}
Substituting this back into $L_{\text{E}, k}$, we find that it exactly cancels the leading term $L_{\text{O}, k}$ from the
observational bias.
Thus, the total bias $B_k = B_{\text{E}, k} + B_{\text{O}, k}$ is the sum of $o_p(N^{-1/2})$ remainder terms. This confirms
that term (II) is negligible, completing the proof.

\subsection{Proof of \cref{thm:consistency-of-variance-estimator}} \label{sec:proof-of-thm-consistency-of-variance-estimator}

Let $\tilde{V}$ be the oracle variance estimator, constructed by replacing the estimated nuisance functions and the estimated
ATE with their true counterparts:
\begin{align*}
  \tilde{V}
  &\equiv \frac{N}{N_{\text{E}}^2} \sum_{k=1}^K \sum_{i \in \mathcal{I}_{\text{E}, k}} \psi_{\text{E}, i, 0}^2
   + \frac{N}{N_{\text{O}}^2} \sum_{k=1}^K \sum_{i \in \mathcal{I}_{\text{O}, k}} \psi_{\text{O}, i, 0}^2,
\end{align*}
where
\begin{align*}
  \psi_{\text{E}, i, 0} &\equiv \frac{(A_i - e_0(X_i)) (h_0(W_i, S_i, X_i) - \bar{h}_0(A_i, X_i))}{e_0(X_i)(1 - e_0(X_i))} +
\bar{h}_0(1, X_i) - \bar{h}_0(0, X_i) - \tau_0, \\
  \psi_{\text{O}, i, 0} &\equiv (q_{1, 0}(Z_i, S_i, X_i) - q_{0, 0}(Z_i, S_i, X_i)) (Y_i - h_0(W_i, S_i, X_i)).
\end{align*}
Similarly, let $\psi_{\text{E}, i, k}$ and $\psi_{\text{O}, i, k}$ 
denote the components evaluated at the estimated parameters
$\hat{\eta}_k$ and $\hat{\tau}_{\text{MR}}$ for fold $k$. 
Since $N_{\text{E}} / N \xrightarrow{p} \pi_0$ and $N_{\text{O}} / N \xrightarrow{p} 1 - \pi_0$, the WLLN implies
$\tilde{V} \xrightarrow{p} V$. Thus, it suffices to show $\abs{\hat{V} - \tilde{V}} = o_p(1)$.

We first analyze the observational component. Using the inequality $\abs{a^2 - b^2} \leq (a - b)^2 + 2\abs{b}\abs{a - b}$, we
bound the difference on each fold $k$:
\begin{align*}
  &\mathrel{\phantom{=}}
  \abs{\frac{1}{N_{\text{O}}} \sum_{k=1}^K \sum_{i \in \mathcal{I}_{\text{O}, k}} (\psi_{\text{O}, i, k}^2 - \psi_{\text{O}, i, 0}^2)} \\
  &\leq \frac{1}{N_{\text{O}}} \sum_{k=1}^K \sum_{i \in \mathcal{I}_{\text{O}, k}} (\psi_{\text{O}, i, k} - \psi_{\text{O}, i, 0})^2
   + 2 \sqrt{\frac{1}{N_{\text{O}}} \sum_{i \in \mathcal{I}_{\text{O}}} \psi_{\text{O}, i, 0}^2} \sqrt{\frac{1}{N_{\text{O}}}
     \sum_{k=1}^K \sum_{i \in \mathcal{I}_{\text{O}, k}} (\psi_{\text{O}, i, k} - \psi_{\text{O}, i, 0})^2}.
\end{align*}
The first term under the square root is $O_p(1)$ as it converges to the finite variance of the observational influence
function. For the mean squared error term, by the boundedness of the data and nuisance functions, and applying the
Cauchy-Schwarz inequality, we have:
\begin{align*}
  \frac{1}{|\mathcal{I}_{\text{O}, k}|} \sum_{i \in \mathcal{I}_{\text{O}, k}} (\psi_{\text{O}, i, k} - \psi_{\text{O}, i, 0})^2
  &\precsim \frac{1}{|\mathcal{I}_{\text{O}, k}|} \sum_{i \in \mathcal{I}_{\text{O}, k}} \bp{ (\hat{q}_{1, k} - q_{1, 0})^2 + (\hat{q}_{0, k} - q_{0, 0})^2 + (\hat{h}_k - h_0)^2 } \\
  &= \|\hat{q}_{1, k} - q_{1, 0}\|{2}^2 + \|\hat{q}_{0, k} - q_{0, 0}\|_{2}^2 + \|\hat{h}_k - h_0\|_{2}^2 + o_p(1),
\end{align*}
Under \cref{asn:convergence-rates}, these norms vanish, ensuring the observational component is $o_p(1)$.

Similarly, for the experimental component, the squared difference $(\psi_{\text{E}, i, k} - \psi_{\text{E}, i, 0})^2$ is
controlled by the estimation errors of $e$, $h$, $\bar{h}$, and the estimated ATE:
\begin{align*}
  \frac{1}{|\mathcal{I}_{\text{E}, k}|} \sum_{i \in \mathcal{I}_{\text{E}, k}} (\psi_{\text{E}, i, k} - \psi_{\text{E}, i, 0})^2
  &\precsim \|\hat{e}_k - e_0\|_2^2 + \|\hat{h}_k - h_0\|_2^2 + \|\hat{\bar{h}}_k - \bar{h}_0\|2^2 + (\hat{\tau}_{\text{MR}} - \tau_0)^2 + o_p(1).
\end{align*}
By \cref{asn:convergence-rates} and \cref{thm:asymptotic-normality}, all terms on the right-hand side are $o_p(1)$. Combining
the results, we conclude $\abs{\hat{V} - \tilde{V}} = o_p(1)$.

\subsection{Proof of \cref{prp:surjectivity-uniqueness}} \label{sec:proof-of-surjectivity-uniqueness}

By the fundamental theorem of linear operators,
the kernel of an operator is the orthogonal complement of the range of its adjoint.
In particular,
\begin{align*}
  \Ker(T) = \bp{\Range(T^*)}^\perp, \quad
  \Ker(T^*) = \bp{\Range(T)}^\perp.
\end{align*}

Since $T$ is surjective,
$\Range(T) = L_2\bp{Z, S, X}$.
Thus, $\Ker(T^*) = \bp{\Range(T)}^\perp = \{0\}$,
which implies that $T^*$ is injective.
Consequently,
if there exist two functions
$q_{a, 1}$ and $q_{a, 2}$
satisfying \cref{eq:surrogate-bridge-function},
then $T^*(q_{a, 1} - q_{a, 2}) = 0$.
By the injectivity of $T^*$,
we have $q_{a, 1} = q_{a, 2}$ almost surely.
The proof for the injectivity of $T$ 
and the uniqueness of the outcome bridge function
is similar.

\subsection{Proof of \cref{thm:eif}} \label{sec:proof-of-thm-eif}

The proof consists of five main steps.
First, we introduce the setup and notation.
Second, we factorize the density of the observed data.
Third, we characterize the tangent space of the distribution of the observed data.
Fourth, we derive the influence function
using the pathwise derivative approach.
Finally, we verify the conjectured influence function
is indeed the EIF, which gives the semiparametric efficiency bound.

\paragraph{Setup and Notation}

Let $\mathcal{P}$
be the collection of all possible distributions of the observed data
\[
  D = \bp{\mathbf{1}_{\{G = \text{O}\}} Y, W, \mathbf{1}_{\{G = \text{O}\}} Z, S, \mathbf{1}_{\{G = \text{E}\}} A, X, G},
\]
where for each distribution in $\mathcal{P}$,
there exists a solution $h_0$ to the outcome bridge equation \cref{eq:outcome-bridge-function}.
Consider a regular parametric submodel
$\{P_t: t \in (-\varepsilon, \varepsilon)\} \subset \mathcal{P}$
with density $f_t(D)$
such that $P_0$ equals the true distribution $P$ of $D$.
For any $t$-indexed functional $g_t$,
we denote its pathwise derivative at $t = 0$ by
\begin{align*}
  \partial_t g_t \equiv \left. \frac{\partial}{\partial t} g_t(\cdot) \right|_{t = 0}.
\end{align*}
The score function of the submodel is defined as
\begin{align*}
  s(D) = \left. \frac{\partial}{\partial t} \log f_t(D) \right|_{t = 0} = \partial_t \log f_t(D).
\end{align*}
The tangent space $\mathcal{T} \subset L_2(D)$
is the closure of the linear span of all score functions
generated by regular parametric submodels in $\mathcal{P}$.

\paragraph{Density Factorization}

Based on the data structure where
$A$ is observed only in the experimental sample
and $(Z, Y)$ are observed only in the observational sample,
the joint density of the observed data $D$ can be factorized as
\begin{align*}
  f(D) &= \bs{f(Z, S \mid X, G = \text{O}) f(Y, W \mid Z, S, X, G = \text{O}}^{\mathbf{1}_{\{G = \text{O}\}}} \\
       &\quad \times \bs{f(A \mid X, G = \text{E}) \cdot f(W, S \mid A, X, G = \text{E}}^{\mathbf{1}_{\{G = \text{E}\}}} \\
       &\quad \times f(X, G).
\end{align*}
Corresponding to this factorization,
the tangent space $\mathcal{T}$
can be decomposed into the direct sum of five orthogonal subspaces:
\begin{align*}
  \mathcal{T} = \Lambda_1 \oplus \Lambda_2 \oplus \Lambda_3 \oplus \Lambda_4 \oplus \Lambda_5.
\end{align*}

\paragraph{Characterization of the Tangent Space}

\begin{enumerate}
  \item According to \cref{eq:outcome-bridge-function},
    for all $t$,
    \begin{align*}
      \E_t \bs{Y - h_t(W, S, X) \mid Z, S, X, G = \text{O}} = 0.
    \end{align*}
    Differentiating the above equation with respect to $t$ at $t = 0$,
    we have
    \begin{align*}
      &\mathrel{\phantom{=}} 
      \E\bs{(Y - h(W, S, X)) s(Y, W \mid Z, S, X, G) \mid Z, S, X, G = \text{O}} \\
      &= \E\bs{\partial_t h_t(W, S, X) \mid Z, S, X, G = \text{O}} \\
      &= T(\partial_t h_t),
    \end{align*}
    where $T$ is the linear operator defined in \cref{sec:estimation-and-inference}.
    Thus, the first subspace $\Lambda_1$ is given by
    \begin{align*}
      \Lambda_1 
      = \left\{
        \mathbf{1}_{\{G = \text{O}\}}s_1(Y, W \mid Z, S, X, G) \;\middle|\;
        \begin{array}{l}
        \E[s_1 \mid Z, S, X, G = \text{O}] = 0, \\
        \E\bs{(Y - h_0) s_1 \mid Z, S, X, G = \text{O}} \in \Range(T)
        \end{array}
      \right\},
    \end{align*}
    where $\Range(T)$ is the range of $T$.

  \item $\Lambda_2$ is the space of score functions
    of the density of proxies and surrogates in the observational sample:
    \begin{align*}
      \Lambda_2 = 
      \left\{
        \mathbf{1}_{\{G = \text{O}\}} s_2(Z, S \mid X, G) \;\middle|\;
        \E[s_2 \mid X, G = \text{O}] = 0
      \right\}.
    \end{align*}

  \item $\Lambda_3$ is the space of score functions
    of the conditional density of proxies and surrogates in the experimental sample:
    \begin{align*}
      \Lambda_3 = 
      \left\{
        \mathbf{1}_{\{G = \text{E}\}} s_3(W, S \mid A, X, G) \;\middle|\;
        \E[s_3 \mid A, X, G = \text{E}] = 0
      \right\}.
    \end{align*}

  \item $\Lambda_4$ is the space of score functions
    of the treatment assignment mechanism
    in the experimental sample, i.e.,
    \begin{align*}
      \Lambda_4 = 
      \left\{
        \mathbf{1}_{\{G = \text{E}\}} s_4(A \mid X, G) \;\middle|\;
        \E[s_4 \mid X, G = \text{E}] = 0
      \right\}.
    \end{align*}

  \item $\Lambda_5$ is the space of score functions
    of the joint density $f(X, G)$, i.e.,
    \begin{align*}
      \Lambda_5 = \big\{s_5(X, G) \mid \E[s_5] = 0 \big\}.
    \end{align*}
\end{enumerate}

\paragraph{Derivation of the Influence Function}

Define
\begin{align*}
  \bar{h}_t(a, X)
  &\equiv \E_t\bs{h_t(W, S, X) \mid A = a, X, G = \text{E}} \\
  &= \iint h_t(w, s, X) f_t(w, s \mid A = a, X, G = \text{E}) \, d w \, d s,
\end{align*}
where $h_t$ and $f_t$ are the outcome bridge function
and the density under $P_t$,
respectively.
By \cref{thm:outcome-bridge-identification},
\begin{align*}
  \tau(P_t) 
  &= \sum_{a \in \{0, 1\}} (-1)^{1 - a} \E_t\bs{\bar{h}_t(a, X) \mid G = \text{E}} \\
  &= \sum_{a \in \{0, 1\}} (-1)^{1 - a} \int \bar{h}_t(a, x) f_t(x \mid \text{E}) \, d x \\
  &= \sum_{a \in \{0, 1\}} (-1)^{1 - a} 
    \int \bp{\iint h_t(w, s, x) f_t(w, s \mid a, x, \text{E}) \, d w \, d s}
    f_t(x \mid \text{E}) \, d x.
\end{align*}
We calculate the pathwise derivative of the target parameter 
$\tau(P_t)$ at $t = 0$:
\begin{equation}
  \begin{split}
    &\mathrel{\phantom{=}} \left. \frac{d}{d t} \tau(P_t) \right|_{t = 0} \\
    &= \sum_{a \in \{0, 1\}} (-1)^{1 - a}
    \bp{\int \bp{\partial_t \bar{h}(a, x)} f_0 (x \mid \text{E}) \, dx
    + \int \bar{h}_0(a, x) \partial_t f_t(x \mid \text{E}) \, dx } \\
    &= \sum_{a \in \{0, 1\}} (-1)^{1 - a}
    \left( \iiint \partial_t h_t(w, s, x) f_0(w, s \mid a, x, \text{E}) f_0(x \mid \text{E}) \, dw \, ds \, dx \right. \\
    &\quad \left. + \iint h_0(w, s, x) \partial_t f_t(w, s \mid a, x, \text{E}) f_0(x \mid \text{E}) \, dw \, ds \, dx 
    + \int \bar{h}_0(a, x) \partial_t f_t(x \mid \text{E}) \, dx \right).
    \label{eq:eif-derivation}
  \end{split}
\end{equation}

We handle each term in \cref{eq:eif-derivation} separately.
Consider the first term involving $\partial_t h_t$.
The inner integral in the first term
can be written as
\begin{align}
  &\mathrel{\phantom{=}} 
  \E_0 \bs{\partial_t h_t(W, S, X) \mid A = a, G = \text{E}} \notag \\
  &\equiv \iiint \partial_t h_t(w, s, x) f_0(w, s \mid a, x, \text{E}) f_0(x \mid \text{E}) \, dw \, ds \, dx \label{eq:eif-derivation-1-1} \\
  &= \iiint \partial_t h_t(w, s, x) \E_0\bs{q_{a, 0}(Z, s, x) \mid w, s, x, \text{O}} f_0(w, s, x \mid \text{O}) \, dw \, ds \, dx \label{eq:eif-derivation-1-2} \\
  &= \E_0 \bs{\partial_t h_t(W, S, X) \E_0\bs{q_{a, 0}(Z, S, X) \mid W, S, X, \text{O}} \mid G = \text{O}} \label{eq:eif-derivation-1-3} \\
  &= \E_0 \bs{\partial_t h_t(W, S, X) q_{a, 0}(Z, S, X) \mid G = \text{O}} \label{eq:eif-derivation-1-4} \\
  &= \E_0 \bs{q_{a, 0}(Z, S, X) \E_0\bs{\partial_t h_t(W, S, X) \mid Z, S, X, G = \text{O}} \mid G = \text{O}}, \label{eq:eif-derivation-1-5}
\end{align}
where 
\cref{eq:eif-derivation-1-1} is by definition,
\cref{eq:eif-derivation-1-2} follows from \cref{eq:surrogate-bridge-function},
\cref{eq:eif-derivation-1-3} is again by definition,
\cref{eq:eif-derivation-1-4,eq:eif-derivation-1-5} use the law of total expectation.
To determine $\E_0\bs{\partial_t h_t(W, S, X) \mid Z, S, X, G = \text{O}}$,
we differentiate \cref{eq:outcome-bridge-function} 
\begin{align*}
  \iint (y - h_t(w, s, x)) f_t(y, w \mid z, s, x, \text{O}) \, dy \, dw = 0
\end{align*}
with respect to $t$ at $t = 0$,
which gives
\begin{align*}
  \iint \partial_t h_t(w, s, x) f_0(y, w \mid z, s, x, \text{O}) \, dy \, dw
  = \iint (y - h_0(w, s, x)) \partial_t f_t(y, w \mid z, s, x, \text{O}) \, dy \, dw.
\end{align*}
Equivalently,
\begin{align}
  &\mathrel{\phantom{=}} \E_0\bs{\partial_t h_t(W, S, X) \mid Z, S, X, G = \text{O}} \\
  &= \E_0\bs{(Y - h_0(W, S, X)) s(Y, W \mid Z, S, X, G) \mid Z, S, X, G = \text{O}} \\
  &= \E_0\bs{(Y - h_0(W, S, X)) (s(Y, W, Z, S, X, G) - s(Z, S, X, G)) \mid Z, S, X, G = \text{O}} \label{eq:eif-derivation-1-6} \\
  &= \E_0\bs{(Y - h_0(W, S, X)) s(Y, W, Z, S, X, G) \mid Z, S, X, G = \text{O}} \label{eq:eif-derivation-1-7} \\
  &= \E_0\bs{(Y - h_0(W, S, X)) s(Y, W, Z, S, A, X, G) \mid Z, S, X, G = \text{O}}, \label{eq:eif-derivation-1-8}
\end{align}
where 
\cref{eq:eif-derivation-1-6} follows from the fact that
\begin{align*}
  s(Y, W \mid Z, S, X, G)
  = s(Y, W, Z, S, X, G) - s(Z, S, X, G),
\end{align*}
\cref{eq:eif-derivation-1-7} holds because $h_0$ satisfies \cref{eq:outcome-bridge-function}, i.e.,
\begin{align*}
  &\mathrel{\phantom{=}} 
  \E_0\bs{(Y - h_0(W, S, X)) s(Z, S, X, G) \mid Z, S, X, G = \text{O}} \\
  &= s(Z, S, X, G) \E_0\bs{Y - h_0(W, S, X) \mid Z, S, X, G = \text{O}} \\
  &= 0,
\end{align*}
and \cref{eq:eif-derivation-1-8} uses the mean-zero property of score functions:
\begin{align*}
  &\mathrel{\phantom{=}} 
  \E_0\bs{(Y - h_0) s(A \mid Y, W, Z, S, X, G) \mid Z, S, X, G = \text{O}} \\
  &= \E_0\bs{(Y - h_0) \underbrace{\E_0\bs{s(A \mid Y, W, Z, S, X, G) \mid Y, W, Z, S, X, G}}_{= 0} \;\middle|\; Z, S, X, G = \text{O}} \\
  &= 0.
\end{align*}
Therefore,
the first term in \cref{eq:eif-derivation} becomes
\begin{align}
  &\mathrel{\phantom{=}} 
  \sum_{a \in \{0, 1\}} (-1)^{1 - a}
  \E_0 \bs{q_{a, 0}(Z, S, X) (Y - h_0(W, S, X)) s(Y, W, Z, S, X, G) \mid G = \text{O}} \notag \\
  &= \E_0 \bs{\bp{q_{1, 0}(Z, S, X) - q_{0, 0}(Z, S, X)} (Y - h_0(W, S, X)) s(Y, W, Z, S, A, X, G) \mid G = \text{O}} \notag \\
  &= \E_0 \bs{\frac{\mathbf{1}_{\{G = \text{O}\}}}{1 - \pi_0} \bp{q_{1, 0}(Z, S, X) - q_{0, 0}(Z, S, X)} (Y - h_0(W, S, X)) s(D)}. \label{eq:eif-derivation-1-9}
\end{align}

Next, we consider the second term in \cref{eq:eif-derivation} 
involving $\partial_t f(w, s \mid a, x, \text{E})$.
Notice that
\begin{align}
  &\mathrel{\phantom{=}} 
  \iint h_0(w, s, x) \partial_t f_t(w, s \mid a, x, \text{E}) f_0(x \mid \text{E}) \, dw \, ds\\
  &= \iint h_0(w, s, x) s(w, s \mid a, x, \text{E}) f_0(w, s \mid a, x, \text{E}) f_0(x \mid \text{E}) \, dw \, ds \label{eq:eif-derivation-2-1} \\
  &= \E_0\bs{h_0(W, S, X) s(W, S \mid A, X, G) \mid A = a, X, G = \text{E}} \label{eq:eif-derivation-2-2} \\
  &= \E_0\bs{\bp{h_0(W, S, X) - \bar{h}_0(a, X)} s(W, S \mid A, X, G) \mid A = a, X, G = \text{E}} \label{eq:eif-derivation-2-3} \\
  &= \E_0\bs{\bp{h_0(W, S, X) - \bar{h}_0(a, X)} s(W, S, A, X, G) \mid A = a, X, G = \text{E}} \label{eq:eif-derivation-2-4} \\
  &= \E_0\bs{\bp{h_0(W, S, X) - \bar{h}_0(a, X)} s(D) \mid A = a, X, G = \text{E}}, \label{eq:eif-derivation-2-5}
\end{align}
where
\cref{eq:eif-derivation-2-1} 
uses the log-derivative property $\partial_t f_t(\cdot) = s(\cdot) f_0(\cdot)$,
\cref{eq:eif-derivation-2-2} is by definition,
\cref{eq:eif-derivation-2-3} holds because
\begin{align*}
  &\mathrel{\phantom{=}} 
  \E_0\bs{\bar{h}_0(a, X) s(W, S \mid A, X, G) \mid A = a, X, G = \text{E}} \\
  &= \bar{h}_0(a, X) \E_0\bs{s(W, S \mid A, X, G) \mid A = a, X, G = \text{E}} \\
  &= 0,
\end{align*}
\cref{eq:eif-derivation-2-4} uses the mean-zero property of score functions,
\begin{align*}
  &\mathrel{\phantom{=}} 
  \E_0\bs{\bp{h_0 - \bar{h}_0} s(A \mid W, S, X, G) \mid A = a, X, G = \text{E}} \\
  &= \E_0\bs{\bp{h_0 - \bar{h}_0} \underbrace{\E_0\bs{s(A \mid W, S, X, G) \mid W, S, X, G}}_{= 0} \;\middle|\; A = a, X, G = \text{E}} \\
  &= 0,
\end{align*}
and \cref{eq:eif-derivation-2-5} follows from the same argument as \cref{eq:eif-derivation-2-4}.
Therefore,
the second term in \cref{eq:eif-derivation} becomes
\begin{align}
  &\mathrel{\phantom{=}} 
  \sum_{a \in \{0, 1\}} (-1)^{1 - a}
  \E_0 \bs{\E_0\bs{\bp{h_0(W, S, X) - \bar{h}_0(a, X)} s(D) \mid A = a, X, G = \text{E}} \mid G = \text{E}} \notag \\
  &= \E_0\bs{\frac{A - e_0(X)}{e_0(X)(1 - e_0(X))} \bp{h_0(W, S, X) - \bar{h}_0(A, X)} s(D) \;\middle|\; G = \text{E}} \label{eq:eif-derivation-2-6} \\
  &= \E_0\bs{\frac{\mathbf{1}_{\{G = \text{E}\}}}{\pi_0} \bp{\frac{A - e_0(X)}{e_0(X)(1 - e_0(X))}} \bp{h_0(W, S, X) - \bar{h}_0(A, X)} s(D)}, \label{eq:eif-derivation-2-7}
\end{align}
where \cref{eq:eif-derivation-2-5} follows from the definition of the propensity score
$e_0(X) \equiv P(A = 1 \mid X, G = \text{E})$,
and \cref{eq:eif-derivation-2-6} is by definition.

Finally, we consider the third term in \cref{eq:eif-derivation}
involving $\partial_t f_t(x \mid \text{E})$:
\begin{align}
  &\mathrel{\phantom{=}} 
  \sum_{a \in \{0, 1\}} (-1)^{1 - a}
  \int \bar{h}_0(a, x) \partial_t f_t(x \mid \text{E}) \, dx \notag \\
  &= \int \bp{\bar{h}_0(1, x) - \bar{h}_0(0, x)} s(x \mid \text{E}) f_0(x \mid \text{E}) \, dx \label{eq:eif-derivation-3-1} \\
  &= \E_0\bs{\bp{\bar{h}_0(1, X) - \bar{h}_0(0, X)} s(X \mid G) \mid G = \text{E}} \label{eq:eif-derivation-3-2} \\
  &= \E_0\bs{\bp{\bar{h}_0(1, X) - \bar{h}_0(0, X)} s(D) \mid G = \text{E}} \label{eq:eif-derivation-3-3} \\
  &= \E_0\bs{\bp{\bar{h}_0(1, X) - \bar{h}_0(0, X) - \tau_0} s(D) \mid G = \text{E}} \label{eq:eif-derivation-3-4} \\
  &= \E_0\bs{\frac{\mathbf{1}_{\{G = \text{E}\}}}{\pi_0} \bp{\bar{h}_0(1, X) - \bar{h}_0(0, X) - \tau_0} s(D)}, \label{eq:eif-derivation-3-5}
\end{align}
where \cref{eq:eif-derivation-3-1} follows from the log-derivative property,
\cref{eq:eif-derivation-3-2} is by definition,
\cref{eq:eif-derivation-3-3,eq:eif-derivation-3-4} use the mean-zero property of score functions,
and \cref{eq:eif-derivation-3-5} is again by definition.

Combining the results from \cref{eq:eif-derivation-1-9,eq:eif-derivation-2-7,eq:eif-derivation-3-5},
we have
\begin{align*}
  \left. \frac{d}{d t} \tau(P_t) \right|_{t = 0}
  &= \E_0\bs{\IF(D) s(D)} \\
  &= \E_0\bs{\bp{\varphi(D; \eta_0) - \frac{\mathbf{1}_{\{G = \text{E}\}}}{\pi_0}\tau_0} s(D)} \\
  &= \E_0\bs{\bp{\varphi_{\text{O}}(D; \eta_0) + \varphi_{\text{E}}(D; \eta_0) - \frac{\mathbf{1}_{\{G = \text{E}\}}}{\pi_0}\tau_0} s(D)},
\end{align*}
where
\begin{align*}
  &\mathrel{\phantom{=}} \IF(D) \\
  &\equiv \varphi(D; \eta_0) - \frac{\mathbf{1}_{\{G = \text{E}\}}}{\pi_0}\tau_0 \\
  &\equiv \varphi_{\text{O}}(D; \eta_0) + \varphi_{\text{E}}(D; \eta_0) - \frac{\mathbf{1}_{\{G = \text{E}\}}}{\pi_0}\tau_0, \\
  &\equiv \frac{\mathbf{1}_{\{G = \text{O}\}}}{1 - \pi_0} \bp{q_{1,0}(Z, S, X) - q_{0,0}(Z, S, X)} (Y - h_0(W, S, X)), \\
  &\quad + \frac{\mathbf{1}_{\{G = \text{E}\}}}{\pi_0} 
  \bp{\frac{\bp{A - e_0(X)}\bp{h_0(W, S, X) - \bar{h}_0(A, X)}}{e_0(X)(1 - e_0(X))} + \bp{\bar{h}_0(1, X) - \bar{h}_0(0, X)} - \tau_0}.
\end{align*}
Thus, we have derived the influence function $\IF(D)$.

\paragraph{Verification of the Tangent Space Membership}

To complete the proof,
we verify that $\IF(D) \in \mathcal{T}$.

First,
the term
$\varphi_{\text{O}}(D; \eta_0)$
is an element of $\Lambda_1$.
To see this,
note that
\begin{align*}
  &\mathrel{\phantom{=}} 
  \E_0\bs{\varphi_{\text{O}}(D; \eta_0) \mid Z, S, X, G = \text{O}} \\
  &= \E_0\bs{\frac{1}{1 - \pi_0} \bp{q_{1, 0}(Z, S, X) - q_{0, 0}(Z, S, X)} (Y - h_0(W, S, X)) \mid Z, S, X, G = \text{O}} \\
  &= \frac{1}{1 - \pi_0} \bp{q_{1, 0}(Z, S, X) - q_{0, 0}(Z, S, X)} \E_0\bs{Y - h_0(W, S, X) \mid Z, S, X, G = \text{O}} \\
  &= 0,
\end{align*}
where the last equality holds because $h_0$ satisfies \cref{eq:outcome-bridge-function}.
Moreover,
\begin{align*}
  &\mathrel{\phantom{=}} 
  \E_0\bs{(Y - h_0(W, S, X)) \varphi_{\text{O}}(D; \eta_0) \mid Z, S, X, G = \text{O}} \\
  &= \E_0\bs{\frac{1}{1 - \pi_0} \bp{q_{1, 0}(Z, S, X) - q_{0, 0}(Z, S, X)} (Y - h_0(W, S, X))^2 \;\middle|\; Z, S, X, G = \text{O}} \\
  &= \frac{1}{1 - \pi_0} \bp{q_{1, 0}(Z, S, X) - q_{0, 0}(Z, S, X)} \E_0\bs{(Y - h_0(W, S, X))^2 \mid Z, S, X, G = \text{O}}.
\end{align*}
By \cref{asn:surjectivity},
$T$ is surjective,
which implies that
the range of $T$ is the entire space $L_2(Z, S, X)$.
Since both $q_{a, 0}(Z, S, X)$ and $\E_0\bs{(Y - h_0(W, S, X))^2 \mid Z, S, X, G = \text{O}}$
are elements of $L_2(Z, S, X)$,
their product is also an element of $L_2(Z, S, X)$.
Thus,
\begin{align*}
  \E_0\bs{(Y - h_0(W, S, X)) \varphi_{\text{O}}(D; \eta_0) \mid Z, S, X, G = \text{O}} \in \Range(T).
\end{align*}

Second,
the term
\begin{align*}
  \frac{\mathbf{1}_{\{G = \text{E}\}}}{\pi_0} \bp{\frac{\bp{A - e_0(X)}\bp{h_0(W, S, X) - \bar{h}_0(A, X)}}{e_0(X)(1 - e_0(X))}}
\end{align*}
is an element of $\Lambda_3$
because
\begin{align*}
  &\mathrel{\phantom{=}} 
  \E_0\bs{\frac{\mathbf{1}_{\{G = \text{E}\}}}{\pi_0} \bp{\frac{\bp{A - e_0(X)}\bp{h_0(W, S, X) - \bar{h}_0(A, X)}}{e_0(X)(1 - e_0(X))}} \;\middle|\; A, X, G = \text{E}} \\
  &= \frac{1}{\pi_0} \bp{\frac{\bp{A - e_0(X)}}{e_0(X)(1 - e_0(X))}} \E_0\bs{h_0(W, S, X) - \bar{h}_0(A, X) \mid A, X, G = \text{E}} \\
  &= 0,
\end{align*}
where the last equality holds due to the definition of $\bar{h}_0(A, X)$.

Finally,
the term
\begin{align*}
  \frac{\mathbf{1}_{\{G = \text{E}\}}}{\pi_0} \bp{\bar{h}_0(1, X) - \bar{h}_0(0, X) - \tau_0}
\end{align*}
is an element of $\Lambda_5$
because
\begin{align*}
  &\mathrel{\phantom{=}}
  \E_0\bs{\frac{\mathbf{1}_{\{G = \text{E}\}}}{\pi_0} \bp{\bar{h}_0(1, X) - \bar{h}_0(0, X) - \tau_0}} \\
  &= \E_0\bs{\bar{h}_0(1, X) - \bar{h}_0(0, X) - \tau_0 \mid G = \text{E}} \\
  &= 0,
\end{align*}
where the last equality holds by the definition of $\tau_0$.

Since each component belongs to the corresponding subspace of the tangent space,
their sum $\IF(D)$ lies in $\mathcal{T}$,
which implies that
$\IF(D)$
is indeed the EIF of the target parameter $\tau_0$.
By the semiparametric efficiency theory \citep[e.g.,][]{bickel1993efficient, tsiatis2006semiparametric},
the semiparametric efficiency bound
is given by the variance of the EIF:
\begin{align*}
  V_{\text{eff}} = \E_0\bs{\IF(D)^2}.
\end{align*}
This completes the proof.

\subsection{Proof of \cref{thm:efficiency}} \label{sec:proof-of-thm-efficiency}

The result follows immediately from \cref{thm:asymptotic-normality} and \cref{thm:eif}.
By \cref{thm:asymptotic-normality}, 
the estimator is asymptotically linear with influence function
$\IF(D)$,
and by \cref{thm:eif},
$\IF(D)$
is the EIF derived in \cref{thm:eif}.

\section{Connection to \cite{athey2025surrogate}'s Identification Results} \label{sec:connection-to-athey2025surrogate}

\subsection{Connection to Surrogate Index Identification} \label{sec:connection-to-surrogate-index-identification}

Suppose there are no unobserved confounders,
that is, $U = \emptyset$,
then \cref{asn:unconfoundedness-observational,asn:unconfoundedness-experimental}
reduce to
\begin{align*}
  A \ind \bp{\bp{Y(a)}_{a \in \{0, 1\}}, \bp{S(a)}_{a \in \{0, 1\}}} &\mid X, G, \\
  S \ind \bp{Y(s)}_{s \in \mathcal{S}} &\mid A, X, G.
\end{align*}
and \cref{asn:availability-of-proxies} reduces to
\begin{align*}
  Z \ind Y \mid S, X, \quad W \ind (A, S, Z) \mid X.
\end{align*}
In this case, we have
\begin{align*}
  &\mathrel{\phantom{=}} \E\bs{Y \mid Z, S, X, G = \text{O}} \\
  &= \E\bs{h_0(W, S, X) \mid S, X, G = \text{O}} \tag{\cref{asn:outcome-bridge-function}} \\
  &= \E\bs{Y \mid S, X, G = \text{O}} \tag{$Z \ind Y \mid (S, X)$} \\
  &= \E\bs{\E\bs{Y \mid W, S, X, G = \text{O}} \mid S, X, G = \text{O}}, \tag{Iterated expectation}
\end{align*}
By \cref{asn:completeness-outcome},
comparing the the first and last equations above,
a solution $h_0$ to the above equation is
the surrogate index in \cite{athey2025surrogate}:
\begin{align*}
  h_0(w, s, x) = \E\bs{Y \mid S = s, X = x, G = \text{O}} \equiv \mu(s, x, \text{O}).
\end{align*}
Since $h_0(w, s, x)$ does not depend on $w$ in this case,
the identification result in \cref{eq:outcome-bridge-identification}
reduces to
\begin{align*}
  \tau_0
  &= \E\bs{\frac{A h_0(W, S, X)}{e_0(X)} - \frac{(1 - A) h_0(W, S, X)}{1 - e_0(X)} \;\middle|\; G = \text{E}} \\
  &= \E\bs{\frac{A \mu(S, X, \text{O})}{e_0(X)} - \frac{(1 - A) \mu(S, X, \text{O})}{1 - e_0(X)} \;\middle|\; G = \text{E}}.
\end{align*}
This matches the surrogate index representation in \cite{athey2025surrogate}.

\subsection{Connection to Surrogate Score Identification} \label{sec:connection-to-surrogate-score-identification}

Similarly,
when there are no unobserved confounders,
we have
\begin{align*}
  &\mathrel{\phantom{=}} \E\bs{q_{a, 0}(Z, S, X) \mid W, S, X, G = \text{O}} \\
  &= \frac{f(W, S \mid A = a, X, G = \text{E}) f(X \mid G = \text{E})}{f(W, S, X \mid G = \text{O})} \tag{\cref{asn:surrogate-bridge-function}} \\
  &= \frac{f(W \mid S, A = a, X, G = \text{E}) f(S \mid A = a, X, G = \text{E}) f(X \mid G = \text{E})}{f(W, S, X \mid G = \text{O})} \tag{Chain rule} \\
  &= \frac{f(W \mid X, G = \text{E}) f(S \mid A = a, X, G = \text{E}) f(X \mid G = \text{E})}{f(W, S, X \mid G = \text{O})} \tag{$W \ind (A, S) \mid (X, G)$} \\
  &= \frac{f(W \mid X, G = \text{E}) f(S \mid A = a, X, G = \text{E}) f(X \mid G = \text{E})}{f(W \mid S, X, G = \text{O}) f(S, X \mid G = \text{O})} \tag{Chain rule} \\
  &= \frac{f(W \mid X, G = \text{E}) f(S \mid A = a, X, G = \text{E}) f(X \mid G = \text{E})}{f(W \mid X, G = \text{O}) f(S, X \mid G = \text{O})} \tag{$W \ind S \mid (X, G)$} \\
  &= \frac{f(W \mid X, G = \text{O}) f(S \mid A = a, X, G = \text{E}) f(X \mid G = \text{E})}{f(W \mid X, G = \text{O}) f(S, X \mid G = \text{O})} \tag{$W \ind G \mid X$} \\
  &= \frac{f(S \mid A = a, X, G = \text{E}) f(X \mid G = \text{E})}{f(S, X \mid G = \text{O})}. \tag{Simplification}
\end{align*}
It is clear that the right-hand side above does not depend on $Z$ and $W$,
so we have a solution $q_a^*$ to \cref{eq:surrogate-bridge-function}:
\begin{align*}
  q_{a, 0}(S, X)
  = \frac{f(S\mid A = a, X, G = \text{E}) f(X \mid G = \text{E})}{f(S, X \mid G = \text{O})},
\end{align*}
which only depends on $S$ and $X$.

Note that
\begin{align*}
  &\mathrel{\phantom{=}}
  \frac{f(S \mid A = a, X, G = \text{E}) f(X \mid G = \text{E})}{f(S, X \mid G = \text{O})} \\
  &= \frac{f(A = a \mid S, X, G = \text{E}) f(S \mid X G = \text{E}) f(X \mid G = \text{E})}{f(A = a \mid X, G = \text{E}) f(S, X \mid G = \text{O})} \tag{Bayes' rule} \\
  &= \frac{f(A = a \mid S, X, G = \text{E}) f(S, X \mid G = \text{E})}{f(A = a \mid X, G = \text{E}) f(S, X \mid G = \text{O})} \tag{Chain rule} \\
  &= \frac{f(A = a \mid S, X, G = \text{E})}{f(A = a \mid X, G = \text{E})}
  \cdot
  \frac{f(S, X \mid G = \text{E})}{f(S, X \mid G = \text{O})} \tag{Separating fractions} \\
  &= \frac{f(A = a \mid S, X, G = \text{E})}{f(A = a \mid X, G = \text{E})}
  \cdot
  \frac{f(G = \text{E} \mid S, X)}{f(G = \text{O} \mid S, X)} 
  \cdot
  \frac{f(G = \text{O})}{f(G = \text{E})}, \tag{Bayes' rule}
\end{align*}
which is exactly the weight of the surrogate score representation in \cite{athey2025surrogate}.

\subsection{Connection to Influence Function Identification} \label{sec:connection-to-influence-function-identification}

When there are no unobserved confounders,
we have shown in \cref{sec:connection-to-surrogate-index-identification,sec:connection-to-surrogate-score-identification}
that $h^*$ and $q_a^*$ are solutions to \cref{eq:outcome-bridge-function,eq:surrogate-bridge-function} respectively,
with
\begin{align*}
  h^*(W, S, X) &= \mu(S, X, \text{O}), \\
  \bar{h}^*(A, X) &= \bar{\mu}(A, X, \text{O}) = \E\bs{\mu(S, X, \text{O}) \mid A, X, G = \text{E}}, \\
  q_a^*(Z, S, X) &= \frac{f(A = a \mid S, X, G = \text{E})}{f(A = a \mid X, G = \text{E})}
  \cdot
  \frac{f(G = \text{E} \mid S, X)}{f(G = \text{O} \mid S, X)} 
  \cdot
  \frac{f(G = \text{O})}{f(G = \text{E})}.
\end{align*}

Substituting them into \cref{eq:mutiply-robust-identification},
it is clear that the first two terms coincide with the first two terms
in the influence function derived in \cite{athey2025surrogate}.
As for the third term,
we have
\begin{align*}
  &\mathrel{\phantom{=}}
  q_1^*(Z, S, X) - q_0^*(Z, S, X) \\
  &= \frac{f(G = \text{O})}{f(G = \text{E})} \cdot
  \frac{f(G = \text{E} \mid S, X)}{f(G = \text{O} \mid S, X)}
  \cdot
  \bs{
    \frac{f(A = 1 \mid S, X, G = \text{E})}{f(A = 1 \mid X, G = \text{E})}
    -
    \frac{f(A = 0 \mid S, X, G = \text{E})}{f(A = 0 \mid X, G = \text{E})}
  } \\
  &= \frac{f(G = \text{O})}{f(G = \text{E})} \cdot
  \frac{f(G = \text{E} \mid S, X)}{f(G = \text{O} \mid S, X)} \\
  &\quad \cdot
  \frac{f(A = 1 \mid S, X, G = \text{E})(1 - e(X)) - (1 - f(A = 1 \mid S, X, G = \text{E}))e(X)}{e(X)(1 - e(X))} \\
  &= \frac{f(G = \text{O})}{f(G = \text{E})} \cdot
  \frac{f(G = \text{E} \mid S, X)}{f(G = \text{O} \mid S, X)}
  \cdot
  \frac{f(A = 1 \mid S, X, G = \text{E}) - e(X)}{e(X)(1 - e(X))},
\end{align*}
which matches the weight of the third term
in the influence function derived in \cite{athey2025surrogate}.

\end{document}